\documentclass{velds}

\usepackage{comment}
\usepackage{subcaption}
\usepackage{balance}
\usepackage{graphicx}
\usepackage{amsmath}
\usepackage{amssymb}
\usepackage{algorithm}
\usepackage[noend]{algpseudocode}
\usepackage{multirow}
\usepackage{url}
\usepackage{xcolor}
\usepackage{xspace}
\usepackage{hyphenat}
\usepackage[font=small,labelfont=bf]{caption}
\usepackage{tikz,pgfplots,pgfplotstable}
\usepackage[normalem]{ulem}
\usepackage{upgreek}
\usepackage{hyperref}

\usetikzlibrary{arrows}
\usetikzlibrary{positioning}
\usetikzlibrary{patterns}
\usetikzlibrary{shapes}

\pgfplotsset{compat=newest}

\setlength{\abovecaptionskip}{1pt plus 2pt minus 2pt}
\setlength{\belowcaptionskip}{1pt plus 2pt minus 2pt}
\setlength{\textfloatsep}{1pt plus 2pt minus 2pt} 

\renewcommand{\algorithmicrequire}{\textbf{Input: }}
\renewcommand{\algorithmicensure}{\textbf{Output: }}
\DeclareMathOperator*{\argmax}{argmax}
\DeclareMathOperator*{\argmin}{argmin}

\newtheorem{thm}{Theorem}

\newtheorem{defn}{Definition}
\newtheorem{lem}{Lemma}
\newtheorem{cor}{Corollary}

\newtheorem{prob}{Problem}
\newtheorem{example}{Example}
\newtheorem{prop}{Proposition}

\newcommand{\Var}{\mathrm{Var}}
\newcommand{\Cov}{\mathrm{Cov}}
\newcommand{\INF}{\hat{INF}_F}
\newcommand{\model}{TCIC\xspace}
\newcommand{\stitle}[1]{\vspace{3pt}\noindent{\bf #1}}
\newcommand{\etitle}[1]{\vspace{1ex}\noindent{\emph{#1}}}

\newcommand{\cyan}[1]{\textcolor{cyan}{#1}}
\newcommand{\red}[1]{\textcolor{red}{#1}}
\newcommand{\edit}[1]{{#1}}
\newcommand{\LL}[1]{{#1}}
\newcommand{\eat}[1]{} 
\newcommand{\laks}[1]{#1}

\vldbTitle{Misinformation Mitigation under Differential Propagation Rates and Temporal Penalties}
\vldbAuthors{Michael Simpson, Laks V.S. Lakshmanan, Farnoosh Hashemi}
\vldbDOI{https://doi.org/xx.xxxxx/xxxxxxx.xxxxxxx}
\vldbVolume{12}
\vldbNumber{xxx}
\vldbYear{2019}

\title{Misinformation Mitigation under Differential Propagation Rates and Temporal Penalties}

\numberofauthors{1}
\author
{
    \alignauthor
    Michael Simpson\qquad
    Farnoosh Hashemi\qquad
    Laks V.S. Lakshmanan\\
    \affaddr
    {
        University of British Columbia
    }
    {
        \email
        {
            \{mesimp,farsh,laks\}@cs.ubc.ca
        }
    }
}

\begin{document}

\maketitle

\begin{abstract}
We propose an information propagation model that captures important temporal aspects that have been well observed in the dynamics of fake news diffusion, in contrast with the diffusion of truth. The model accounts for differential propagation rates of truth and misinformation and for user reaction times. We study a time-sensitive variant of the \textit{misinformation mitigation} problem, where $k$ seeds are to be  selected to activate a truth campaign so as to minimize the number of users that adopt misinformation propagating through a social network. We show that the resulting objective is non-submodular and employ a sandwiching technique by defining submodular upper and lower bounding functions, providing data-dependent guarantees. In order to enable the use of a reverse sampling framework, we introduce a weighted version of reverse reachability sets that captures the associated differential propagation rates and establish a key equivalence between weighted set coverage probabilities and mitigation with respect to the sandwiching functions. Further, we propose an offline reverse sampling framework that provides $(1 - 1/e - \epsilon)$-approximate solutions to our bounding functions and introduce an importance sampling technique to reduce the sample complexity of our solution. Finally, we show how our framework can provide an anytime solution to the problem. Experiments over five datasets show that our approach outperforms previous approaches and is robust to uncertainty in the model parameters.
\end{abstract}

\section{Introduction}
\label{sec:intro}
\begin{figure}
\centering
\begin{tikzpicture}

\tikzset{node/.style={circle,draw,minimum size=0.5cm,inner sep=0pt},}
\tikzset{edge/.style={->,> = latex'},}

\node[node] at (0,0) (1) {$v_0$};
\node[node] at (1.5,-0.4) (2) {$v_1$};
\node[node] at (1.25,0.75) (3) {$v_2$};
\node[node, label={[red]above:$4$}] at (-0.25,1) (4) {$v_3$};
\node[node, label={[red]left: $1$}] at (-1,0.5) (5) {$v_4$};
\node[node] at (-0.75,-0.75) (6) {$v_5$};
\node[node] at (0.6,-1) (7) {$v_6$};
\node[node] at (2.5,0.75) (8) {$v_7$};
\node[node] at (1.75,1.75) (9) {$v_8$};
\node[node] at (0.25,1.75) (10) {$v_9$};
\node[node] at (-1.75,1.25) (11) {$v_{10}$};
\node[node] at (-1.75,-0.2) (12) {$v_{11}$};
\node[node] at (-2.5,0.2) (13) {$v_{12}$};
\node[node] at (2.5,-0.9) (14) {$v_{13}$};
\node[node] at (-3.25,1) (15) {$v_{14}$};
\node[node] at (-2,-1) (16) {$v_{15}$};

\path[draw,thick,->,> = latex']
(1) edge node {} (2)
(1) edge node {} (3)
(1) edge node {} (4)
(1) edge node {} (5)
(1) edge node {} (6)
(1) edge node {} (7)
(2) edge node[left, cyan] {5} (3)
%(2) edge node[below right, cyan] {3} (8)
(2) edge node{} (8)
(3) edge node[left, cyan] {3} (9)
(3) edge node {} (10)
(4) edge node {} (10)
(11) edge node {} (4)
(5) edge node {} (12)
(12) edge node {} (5) 
(6) edge node {} (16)
(8) edge node[right, cyan] {3} (9)
(6) edge node {} (7)
(11) edge node {} (5)
(13) edge node {} (12)
(13) edge node[left, cyan] {2} (16)
%(14) edge node[below left, cyan] {2} (2)
(14) edge node{} (2)
(15) edge node[above, cyan] {3} (11)
(13) edge node[below left, cyan] {2} (15);

\path[draw,thick,<->,> = latex']
(6) edge node {} (12)
(4) edge node {} (3)
(7) edge node {} (2);

\end{tikzpicture}
\caption{Sample instance:  \cyan{edge labels} = meeting lengths (ML); MLs not shown are \cyan{$1$};  \red{numbers} besides nodes are reaction times = activation window (AW)  lengths; AW lengths not shown are \red{$0$}.}\label{fig:intro}
%\vspace{-15pt}
\end{figure}
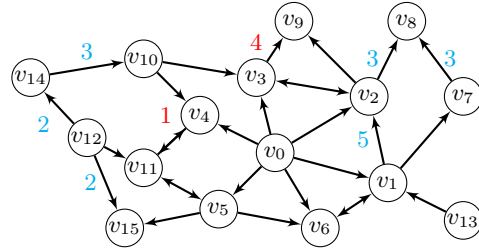
Social networks have rapidly transformed into a prominent hub for political campaigns, viral marketing, and the dissemination of news and health information. As an unfortunate side effect, there has been an increase in the number of ``bad actors'', such as spammers, hackers, and bots, exploiting these platforms to spread fake news and misinformation. A fundamental question  is: \emph{How can one limit the spread of misinformation in social networks?} Once misinformation is detected, one feasible approach is to introduce a \emph{truth campaign} with a goal of reaching users before, or at least not much later than, they are reached by the misinformation. The \emph{misinformation mitigation} (MM) problem \cite{budak2011limiting, he2012influence} aims to select effective seed nodes for the truth campaign such that the spread of misinformation can be limited as much as possible. 

Notably, the existing family of propagation models considered in the MM literature \cite{Fan2013, pham2018targeted, pham2019minimum, song2017temporal, tong2020stratlearner, tong2018misinformation, tong2019beyond, simpson2020reverse, tong2017efficient, budak2011limiting, Nguyen2012containment} fail to incorporate critical temporal aspects that have been well observed in the dynamics of fake news diffusion \cite{gabielkov2016social, mitchell2016long, vosoughi2018spread}. We argue that it is important to distinguish between the relative propagation dynamics of fake news and truth because it has been observed that fake news often ``spreads like wildfire'' online \cite{vosoughi2018spread} while the adoption of truth occurs much slower. \edit{For instance, Figure~\ref{fig:intro}  depicts a sample instance of misinformation/truth propagation, where all the edge propagation probabilities are $1$ and the edge labels indicate the diffusion time delay for truth. E.g., if node $v_2$ adopts misinformation at time $t$, it will propagate to $v_8$ at time $t+1$. By contrast, if $v_2$ adopted truth at $t$, it will propagate to $v_8$ at $t+\cyan{3}$.} The differential propagation rates of truth and fake news may have considerable consequences for the selection of effective seed nodes. Secondly, important dynamics are at play in user decision making. E.g., a recent study \cite{gabielkov2016social} observed that approximately $59\%$ of users forego reading articles linked in social media posts before acting on them, while  the time spent reading linked articles  \cite{mitchell2016long} ranges from under a minute for short articles ($<$ 250 words) to several minutes for longer articles ($\ge$ 1000 words). Thus, the ``reaction times" of users can vary considerably. \edit{To illustrate, in Figure~\ref{fig:intro}, all users save $v_3, v_4$ may react instantly (no reading), while $v_3$ and $v_4$ may read linked articles before reacting, with $v_4$ reacting quicker (\red{$1$} time unit) than $v_3$ (\red{$4$} time units).} Finally, mitigation strategies for combating misinformation are only effective when the \emph{truth} arrives at a user either (a) before the arrival of the fake news or (b) shortly after the user  becoming aware of the fake news. In particular, truth arriving \emph{later} than misinformation may still have an effect, up to a reasonable delay. \edit{E.g., suppose $v_0$ is a seed for misinformation while $v_{12}$ is a seed for truth, with both adopting the information at time $0$. Then $v_3, v_4, v_5$ will become aware of the misinformation at time $1$, while the truth reaches $v_3, v_4, v_5$ at times $6, 2, 2$ respectively. Meanwhile, $v_3, v_4, v_5$ will react at times $1+4 = 5$, $1+1 = 2$, and $1+0 = 1$ respectively. Thus, truth reaches $v_4$ in time for its reaction while it arrives at $v_3$ and $v_5$ too late.} Further, as illustrated above, as the delay between the arrival of the misinformation and the truth increases, the \emph{effectiveness} of the mitigating campaign drops off significantly. \textit{Models used in prior misinformation mitigation studies fail to account for these  phenomena, supported by real-world observations.}

%Therefore we need to take the campaign-specific time delay aspect of influence diffusion, the activation window, and the time-critical constraint of the mitigating campaign into consideration. Given these real-world observations, the existing misinformation mitigation studies are found wanting in modeling such phenomena. This motivates re-examining the assumptions of current competitive influence propagation models and allow for: (i) differential rates of propagation through the network for fake news compared to truth, (ii) activation windows for campaign adoption decisions, and (iii) time-sensitive penalties in order to properly capture the time critical nature of misinformation mitigation. 

\stitle{\edit{Novelty.}} \edit{In this paper, we study an interesting and realistic variant of the classic MM problem incorporating the twin time-critical aspects of misinformation propagation that have been observed and validated in earlier studies \cite{gabielkov2016social, mitchell2016long, vosoughi2018spread}: differential propagation rates and user reaction times. Our approach uses a node-level dynamic penalty function based on the delay between the arrival of the competing campaigns. We propose a new propagation model, the \textit{Temporal Competitive Independent Cascade}  (TCIC) model  which, unlike existing propagation models, accounts for differential propagation rates and user reaction times, by employing  two critical components which work jointly to properly model the dynamics of diffusion: edge-level campaign-specific \textit{time-delayed} propagation and node-level \emph{activation windows} for making adoption decisions.
} 
%The activation window is a novel feature that captures the fact that once a user first receives information from their neighbours, via edge-level propagation, they may hold off committing to adopting this information during which they may receive alternative information. In theory, the \model model is able to accommodate any difference in propagation rates between fake news and truth as well as any length of the decision making window for the user. Intuitively, a large differential between the rates of propagation of fake news and truth (due to a strongly viral fake news campaign, slow adoption of truth, or a combination of both) is modeled by applying a large time-delay on the propagation of the mitigating truth campaign.

We then define a novel optimization problem for misinformation mitigation under the \model model. \edit{Unlike prior MM models \cite{budak2011limiting, tong2018misinformation, tong2019beyond, simpson2020reverse, tong2017efficient, song2017temporal, saxena2020mitigating}, which are based on competitive IC \cite{Chen2013}, that satisfy submodularity when the campaigns share propagation probabilities, we show that submodularity does not hold in the TCIC model even when the campaigns share probabilities. We
also show that the recent guarantees shown for Greedy \cite{bian2017guarantees} based on curvature and submodularity ratio when applied to non-submodular objectives, lead to degenerate results for our objective function.} To overcome this challenge, we employ the \emph{Sandwich Approximation (SA)} \cite{lu2015competition} and develop non-trivial upper and lower bounding submodular functions to produce solutions with a data-dependent approximation guarantee. %\edit{The upper and lower bounding functions that we introduce are far from trivial.}
Further, existing state-of-the-art solutions for \emph{influence maximization} (under a single campaign), such as \emph{IMM} \cite{tang2015influence}, \emph{SSA} \cite{nguyen2016stop, huang2017revisiting}, and \emph{OPIM} \cite{tang2018online}, are based on \emph{reverse sampling (RS)}. However, adapting the \emph{RS} machinery (based on \emph{reverse reachability (RR)} sets) to our propagation model \edit{comes with the challenge that the set of nodes reached by the misinformation is an \emph{unknown}. Due to the nature of the MM problem, our algorithm cannot inherit the expression for the sample complexity from prior work and thus requires a novel derivation, which we provide.} We develop a \emph{RS} framework by building on the state-of-the-art \emph{OPIM} algorithm. Due to the complex interactions that occur during the propagation of the fake and mitigating campaigns as well as the new temporal model components, the construction of the analog to RR sets under our new model requires great care. Further, adapting RR sets to our setting requires pushing the idea of tie-breaking between campaigns into the notion of RR sets. To the best of our knowledge, we are the \textit{first to incorporate a proportional tie-breaking rule} into the reverse sampling framework, which introduces additional challenges in constructing RR sets. Second, in order to further reduce the number of samples required by the \emph{RS} framework, we define an unbiased estimator for our upper and lower bound objectives based on \emph{importance sampling}  leading to reduced variance and tighter concentration bounds. \edit{We tackle the challenge of the sample complexity depending on the expected influence of the misinformation, an  \textit{unknown}, by developing a novel normalization term.}  
%A key challenge we encounter is deriving appropriate concentration bounds in the presence of this additional source of uncertainty. We develop a \emph{RS} framework based on \emph{OPIM} which is \emph{resilient} to the uncertainty in the spread of the misinformation through the use of adapted concentration bounds and algorithmic modifications.

Our main contributions are as follows. (1) We introduce a novel propagation model capturing important temporal aspects pertaining to the diffusion of and reaction to truth and misinformation and define a novel MM problem formulation with delay-dependent reward (\S~\ref{sec:prelim}). (2) We develop non-trivial upper and lower bounding submodular functions for our non-submodular objective for use in a sandwiching technique (\S~\ref{sec:sandwich}). (3) We introduce an importance sampling technique to reduce the sample complexity of our solution (\S~\ref{sec:importance_sampling}). (4) We develop a reverse sampling framework that provides a $(1-1/e-\epsilon)$-approximate solution to the upper and lower bounding objectives, which in turn yields an instance-dependent approximation to the MM objective (\S~\ref{sec:reverse_sampling}). (5) We show that our algorithm can provide an anytime solution to the MM problem, with a certified approximation guarantee with bounded failure probability (\S~\ref{sec:reverse_sampling}). (7) We present a thorough experimental validation (\S~\ref{sec:experiments}). We conclude the paper in \S~\ref{sec:conclusion}.

\vspace*{-2ex} 
\section{Preliminaries}
\label{sec:prelim}
%%%%%%%%%%%%%%%%%%%%%%%%%%%%%%% 
\begin{table}
\small
\caption{\edit{Frequently used notation.}}
\label{tbl:notation}
\centering
\edit{ 
\begin{tabular}{ |c|m{0.6\columnwidth}| } \hline 
\textbf{Notation}&       \textbf{Description} \\ \hline
$G, m, n$&  Social network graph with $n$ nodes and $m$ edges \\ \hline
$F$, $M$&   The misinformation and mitigation campaigns, respectively \\ \hline
$S_F$, $S_M$&   Seed sets for campaigns $F$ and $M$ \\ \hline
$m_F(u,v), m_M(u,v)$&   Meeting event probabilities \\ \hline
%$\gamma(u)$&    Closing function controlling activation window length \\ \hline
$X$&    Possible world of the TCIC model \\ \hline
$h_e^F$, $h_e^M$&   Sampled meeting event edge lengths along edge $e$ in $X$ \\ \hline
$\tau_v$, $\pi_v$&  Sampled AW length and in-neighbour permutation in $X$ \\ \hline
$t_v^F$, $t_v^M$&   First step $v$ meets with $F$, $M$ \\ \hline
$R_F^X$&    Set of nodes reachable from $F$ in $X$ \\ \hline
$\rho_X(v, S)$&    Reward achieved in $X$ at $v$ by $S$ \\ \hline
$\mu_X(S_M)$, $\mu(S_M)$&   Total reward achieved by $S$ \\ \hline
%$k$&    Budget of mitigation seed set \\ \hline
$\overline\mu$, $\underline\mu$&    Submodular upper/lower bounding functions of $\mu$ \\ \hline
$INF_F$, $INF_1$&   Expected influence of $F$ and largest expected influence of any size-1 node set \\ \hline
$\Gamma$&   Misinformation sampling error \\ \hline
$EPT$&  Expected complexity of generating an RDR set \\ \hline
\end{tabular}
}
%\vspace{-15pt}
\end{table} 
%%%%%%%%%%%%%%%%%%%%%%%%%%%% 
We formalize the notions of differential propagation rates and activation windows and then present our new propagation model referred to as \model (for Temporal Competitive Independent Cascade). Let $G=(V,E)$ be a social network with sets of  nodes $V$ and directed edges $E$, where $|V| = n$,  $|E| = m$. Let $F$ (for ``{\em F}ake'') and $M$ (for ``{\em M}itigating'') denote two influence campaigns with seed sets $S_F$ and $S_M$, respectively.\footnote{We assume  $S_F \cap S_M = \emptyset$, w.l.o.g.} The seeds $S_F$ ($S_M$) are active in campaign $F$ (resp. $M$) at time $t=0$. We assume each edge $e=(u,v)$ in $E$ is associated with a propagation probability $p(u,v)  \in (0, 1]$. 

%\laks{As a preview, the problem we study is finding a set of seed nodes for the mitigation campaign  that   maximizes the expected mitigation, under a stochastic model of diffusion where truth propagates slower than falsehood and nodes have a ``decision window" that accounts for user reaction times to the two campaigns. In what follows, we lay out ingredients of our TCIC model which formalize these concepts and then provide a formal problem statement (\S~\ref{sec:prob_defn}). }   

\stitle{Meeting Events.} We associate with each edge $e=(u,v)$ \emph{meeting probabilities} $m_F(u,v), m_M(u,v) \in (0, 1]$. At \emph{any} step, an active node $u$ in campaign $F$ (or $M$) meets any of its currently inactive neighbours $v$ independently with probability $m_F(u,v)$ (or $m_M(u,v)$). A node activated at time $T$  attempts to meet with its inactive out-neighbours at \emph{every} step $t \geq T+1$ until there is a successful meeting. Then, if a meeting event occurs between $u$ and $v$ in $F$ (or $M$) for the \emph{first} time at step $t$, then $u$ is given a single chance to activate $v$ in $F$ (or $M$) with independent probability $p(u,v)$. If the activation attempt is successful, $v$ becomes active in $F$ (or $M$) at time $t$ and enters an activation window, described below. Once activated, subsequent meeting attempts from in-neighbours are ignored.

With the above formulation, we can capture the observation that \emph{truth travels more slowly than fake news} by setting $m_F(u,v) \geq m_M(u,v)$, $\forall$ edge $(u,v)\in E$. As a special case, we can set $m_F(u,v) = 1$, i.e., only apply the meeting events to nodes active in the mitigating campaign $M$. This results in $F$ traversing every edge in a single hop, while $M$ may be delayed by several meeting attempts. For convenience, under this special case, for an edge $e=(u,v)$ we will write $m_M(u,v)$ as $m(u,v)$ or $m(e)$. We shall henceforth assume this special case for ease of exposition, although our theory and techniques apply to the general case.

\stitle{Activation Windows.} Motivated by concepts well established in \eat{other fields such as} Sociology and Marketing \cite{beal1956diffusion, bhagat2012maximizing, kalish1985new, gabielkov2016social, mitchell2016long, lu2015competition}, we distinguish between \emph{awareness} and \emph{adoption} in our \model model by defining an \emph{activation window (AW)} for each node. The AW augments the propagation behaviour defined by meeting events such that successful activation attempts now result in awareness, and not adoption. Specifically, we say a node $u$ becomes \emph{aware} in the \emph{first} timestep that an in-neighbour of $u$ active in either campaign $M$ or $F$ succeeds in meeting with $u$. However, before deciding to commit to activation in either campaign, $u$ enters into its \emph{activation window}: a period of time during which it may receive conflicting and/or reinforcing information, which it factors in making an adoption decision. The dynamics of the AW are governed by a node-level parameter $\gamma(u)$ that determines the length of the AW for node $u$. The \emph{closing} function $\gamma(\cdot) \to \mathbb{Z^{+}}$ allows the model to capture varying window sizes, typically related to the time spent reading articles linked in social media posts. Example choices for the closing function include: (i) a constant function where all nodes wait for some fixed number of steps before making a decision, (ii) a uniform function where window lengths are chosen uniformly at random between $0$ and some closing time $\tau$, or (iii) an attenuating function (e.g.\ exponential decay with an appropriate mapping to $\mathbb{Z^{+}}$) where some users may wait substantially longer (to gather additional information) 
before making an \laks{adoption} decision. Furthermore, $\gamma(\cdot)$ can be made \emph{node-specific} to capture the individual behaviour of users in the network.

\stitle{Tie-breaking Policy.} Observe that (active users from) both campaigns may meet with a node $u$ within the AW and at the end of the window, $u$ must decide to adopt $F$ or $M$. In such a scenario, both campaigns are attempting to activate $u$ and so we require a tie-breaking policy. We employ a \emph{weighted random choice} policy based on in-neighbour activation counts for each campaign over the duration of the AW, described as follows: when the AW closes, the probability that node $u$ activates in campaign $F$ is $| N_F^-(u) | / | N^-(u) |$ where $N_F^-(u)$ (resp. $N^-(u)$) is the set of in-neighbours of $u$ that met with $u$ from campaign $F$ (resp. either campaign). Similarly, the probability that node $u$ activates in campaign $M$ is $| N_M^-(u) | / | N^-(u) |$. 
\eat{Thus, our tie-breaking policy is a weighted random choice based on in-neighbour activation counts for each campaign over the duration of the AW.} 

\stitle{\model Model.} Our new propagation model is defined by incorporating the edge-level meeting events and node-level activation windows into a standard competitive independent cascade model \cite{Chen2013}. The propagation process terminates when all active nodes in both campaigns have met with all their out-neighbours and no new nodes can be activated in either campaign. The model parameters allow for a great deal of expressiveness to accurately describe many observed real-world adoption behaviours.

\stitle{\edit{Problem Statement.}} \edit{We argue that in gauging the extent of misinformation mitigation, merely counting the number of users prevented from adopting the misinformation after the propagation terminates is too restrictive. The reason is that the quality of the mitigation depends on the delay, if any, in the arrival of the truth. To help capture this, assume that there is a reward function $\rho(\cdot): V\times 2^V \rightarrow R^{>=0}$, which given a node $v$ and a seed set $S_M$, returns a real number indicating the effectiveness of the mitigation at $v$ after the propagation terminates. We will provide more details on $\rho(\cdot)$ in the next section. Given such a reward function, we define the \textit{expected mitigation} by $\mu(S_M) = \mathbb{E}[\sum_{v} \rho(v, S_M)]$ where the expectation is taken over the randomness in the propagation process. We are now ready to formally state the problem we study in this work.}
\begin{prob}
\label{prob:mm}
Given a misinformation seed set $S_F$, the \emph{misinformation mitigation (MM) problem} under the \model model is to find a seed set $S_M$ with at most $k$ nodes that maximizes the expected mitigation. Formally, find a seed set $S_M$ satisfying $\argmax_{S_M \subset V, |S_M| \le k} \mu(S_M)$.
\end{prob}

\stitle{Possible Worlds.} We can view the stochastic propagation process under \model using an equivalent ``possible worlds'' interpretation. Suppose that before the propagation process starts, a set of outcomes for all meeting event attempts (i.e.,\ the number of failed meeting attempts before two neighbours successfully meet), activation windows parameters, and edge liveness are pre-determined. Specifically, for each edge $e=(u,v) \in E$, we declare the edge ``live'' with probability $p(u,v)$, or ``blocked'' otherwise. Further, for each edge, we sample from a geometric distribution parameterized by success probability $m_F(e)$ ($m_M(e)$) the (random) number of meeting event attempts, denoted $h_e^F$ ($h_e^M$), required by campaign $F$ (or $M$) along $e$. Next, a set of outcomes for all activation windows are pre-determined. Specifically, for each node $v \in V$ we sample a window length $\tau_v$ from the closing function $\gamma(v)$ and a random permutation $\pi_v$ of the active in-neighbours of $v$ (i.e.,\ those active  in-neighbours connected to $v$ by live edges) for the purpose of tie-breaking. All random events, including coin flips are independent. Therefore, a certain set of outcomes of all coin flips and sampled parameter values corresponds to one possible world under the \model model. A possible world, denoted $X$, is a deterministic graph obtained by conditioning on a particular set of outcomes. We denote by $R_F^X$ the set of nodes reachable by $F$ in possible world $X$ in the absence of $M$.

Next, we define the notion of distance in $X$ for each campaign. Consider a live edge $e = (u,v)$ in $X$. Traditionally, without meeting events, $v$ is reachable from $u$ in a single hop. Now with pre-determined meeting \LL{event attempts}, $v$ is reachable from $u$ in $h_e^F$ (or $h_e^M$) hops. Then, the \emph{delayed-distance} of a path $P$ from $u$ to $v$ is the total number of hops along the edges of $P$ plus the sum of activation window lengths of nodes in $P \setminus \{u,v\}$. \eat{, i.e.,\ the non-terminal nodes in $P$.} Finally, for campaign $A \in \{F, M\}$, the delayed-distance $dd_X(S_A,v)$ from a seed set $S_A$ to $v$ in $X$ is the delayed-distance of the live-path $P$ from $u \in S_A$ to $v$ that minimizes $\sum_{e \in P} h_e^A + \sum_{x \in P \setminus \{u,v\}} \tau_x$.

\begin{figure}
\centering
\resizebox{0.5\linewidth}{!}{
\begin{tikzpicture}
[
  declare function={
    func(\x)= (\x<-4) * (2)   +
     and(\x>=-4, \x<4) * (1)  +
     and(\x>=4, \x<15) * (0);
  }
]
\begin{axis}[
  axis x line=middle, axis y line=middle,
  ymin=-0.25, ymax=2.75, ylabel=$\rho_X(v)$,
  xmin=-9, xmax=9, xlabel=$delay$,
  xtick = {-4,4},
  samples=1000,
  y=0.7cm,
  x=0.4cm
]

\draw [dotted] (4,0) -- (4,2.2);
\draw [dotted] (-4,0) -- (-4,2.2);
\addplot[blue, domain=-10:10]{func(x)};
\end{axis}
\end{tikzpicture}}
\caption{$\rho_X(v_3)$ with $\tau_{v_3} = 4$.}
\label{fig:reward_function}
%\vspace{-10pt}
\end{figure}
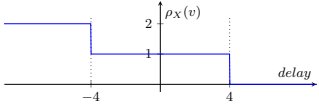

\begin{example}
{\em \edit{We illustrate the propagation process on the network in Figure \ref{fig:intro} under the \model model. For simplicity, we assume each edge has propagation probability $p(e) = 1$. The sampled meeting lengths $h_e^M$ and AW lengths $\tau_v$ are indicated in \cyan{cyan} and \red{red} labels respectively. Let $\pi_{v_4} = (v_{11}, v_{10}, v_F)$ and $\pi_{v_{15}} = (v_5, v_{12})$ be the sampled permutations. Consider seeds $v_0$ for campaign $F$ and $v_{12}$ for campaign $M$ and consider the resulting cascade through the network. At time 1, $v_1, v_2, v_5, v_6$ adopt $F$ and $v_{11}$ adopts $M$. Meanwhile, $v_3, v_4$ open their AW’s. At time 2, both $F$ and $M$ reach $v_{15}$. Additionally, $M$ reaches $v_4$ as its AW closes. Tie-breaks result in $v_4$ and $v_{15}$ adopting $M$ and $F$ respectively. Meanwhile $v_7, v_8, v_9$ adopt $F$ and $v_{14}$ adopts $M$. Finally, at time 5 $v_{10}$ adopts $M$ and $v_3$ adopts $F$.}  

\edit{Using the above sampled parameters, with $v_0$ as the $F$ seed, let us compare $M$ seeds $v_{12}$ and $v_{14}$. It can be verified that $v_{12}$ causes the nodes $v_{12}, v_{11}, v_4, v_{14}, v_{10}$ to either adopt or be informed of $M$. Of these only $v_{11}$ and $v_3$ would have adopted $F$ if there was no $M$ campaign, so intuitively $v_{12}$ ``saves" 2 nodes. By contrast, $v_{14}$ causes $v_{14}, v_{10}, v_3$ to adopt or be informed of $M$, of which only $v_3$ would have adopted $F$ if there was no $M$ campaign, so $v_{14}$ ``saves" $v_3$. \qed 
}
}
\end{example}

%%%%%%%%%%%
% OLD EXAMPLE TEXT
\begin{comment}
\begin{example}
{\em We illustrate the propagation process on the network in Figure \ref{fig:prelim}(a) under the \model model. For simplicity, we assume each edge has propagation probability $p(e) = 1$. The sampled meeting lengths $h_e^M$ and AW lengths are illustrated by the blue and red labels respectively. We only include the meeting event information relevant for the example. Let $\pi_{v_4} = (v_6, v_1)$ be the sampled permutation. Consider the resulting cascade through the network starting from seeds $v_F$ for campaign $F$ and $v_M$ for campaign $M$. There are three nodes of interest in the resulting cascade. First, $v_3$ adopts the misinformation in step 5 since it is only $F$-aware when its AW closes. Second, $v_5$ adopts the truth at step 5 since it is only $M$-aware when its AW closes. Finally, $v_4$ is $\{M,F\}$-aware when its AW closes and must invoke our tie-breaking policy to determine which campaign to adopt. Therefore, the permutation $\pi_{v_4}$ dictates that it will adopt the same campaign as $v_6$, i.e., $M$.}
\end{example}
\end{comment}

\subsection{Mitigation Reward}
\label{sec:prob_defn}

Motivated by the time-critical nature of the MM problem, we introduce a novel reward function that intuitively captures the penalty paid when the mitigation campaign arrives too late after the misinformation. In particular, the reward function is designed such that in \LL{case} adoption of the truth cannot be secured, awareness is encouraged.
% Subsequently, we formally define the optimization problem we consider.

\stitle{Delay-specific Reward Function.} The reward function $\rho_X(\cdot)$ is defined w.r.t.\ activating a node in campaign $M$ relative to the behaviour of campaign $F$. First, for nodes $v \notin R_F^X$ that would \emph{not} have been activated in $F$ in the absence of $M$ we define $\rho_X(v, S_M) = 0$. Next, consider a node $v$ that would have been activated in $F$ in the absence of $M$, i.e.\ there exists a path from $S_F$ to $v$ in $X$. Let $t_v^F$ be the first step in which $v$ meets with a node in campaign $F$ and $t_v^M$ be the first step in which $v$ meets with a node in campaign $M$. We use the convention that $t_v^F$ (or $t_v^M$) is infinite if no meeting with a node from $F$ (or $M$) occurs. We define the reward $\rho_X(v, S_M)$ for node $v$ as a function of the amount of time that has passed between the mitigation and the fake news arriving at $v$. In particular, we consider the step function given by Eq. (\ref{eqn:reward}). There are three cases for the amount of reward achieved: (i) reward $2$ if the truth arrives at $v$ sufficiently early such that the misinformation arrives after the AW closes or if the presence of $M$ stops $F$ from ever reaching $v$ (in the case of $t_v^F = \infty$), (ii) reward $1$ if both the truth and misinformation arrive at $v$ within the AW and (iii) no reward if the truth arrives after the AW closes or not at all (i.e.,\ $t_v^M = \infty$). The reward function is illustrated in Figure~\ref{fig:reward_function} for $v_3$ from Figure~\ref{fig:intro} where $\tau_{v_3} = 4$.
\begin{equation}
\label{eqn:reward}
\rho_X(v, S_M)=
\begin{cases}
2 &\mbox{if } t_v^M < t_v^F - \tau_v \\
1 &\mbox{if } | t_v^M - t_v^F | \leq \tau_v \\
0 &\mbox{if } t_v^M > t_v^F + \tau_v \\
\end{cases}
\end{equation}
When the context is clear, we  write $\rho_X(v, S_M)$ as $\rho(v)$. We refer to the reward achieved by set $S_M$, given $S_F$, after the propagation terminates as the \emph{mitigation} and denote it by $\mu_X(S_M) = \sum_{v} \rho_X(v)$. Further, we denote the \textit{expected mitigation} by $\mu(S_M) = \mathbb{E}[\mu_X(S_M)]$.\footnote{Mitigation depends on $S_F$, but we omit $S_F$ as an argument of $\mu(.)$ since $S_F$ is a fixed input to the problem.}

%\stitle{\edit{Problem Statement.}} We are now ready to formally state the problem we consider in this work.
%\begin{prob}
%\label{prob:mm}
%Given a misinformation seed set $S_F$, the \emph{misinformation mitigation (MM) problem} under the \model model is to find a seed set $S_M$ with at most $k$ nodes that maximizes the expected mitigation. Formally, find a seed set $S_M$ satisfying $\argmax_{S_M \subset V, |S_M| \le k} \mu(S_M)$.
%\end{prob}

\stitle{Design Decisions.} The advantage of our new reward function, and the purpose of the middle condition of Equation \ref{eqn:reward}, is to promote campaign $M$ reaching nodes that are reached by campaign $F$, despite not being able to {\sl guarantee} adoption of $M$, \LL{owing to tie-breaking}. Thus, even if a node does not end up activating in $M$, the user will be exposed to the true information, which we argue is a natural goal to strive for. Clearly, in the event that adoption \LL{of $M$} cannot be guaranteed, promoting \LL{its} awareness is preferable to inaction. 
%In other words, we strive to ensure that users are made aware of the truth within their AW. 
If the truth reaches a user sufficiently early (compared to the misinformation) then their adoption decision will be uncontested by the misinformation. Thus, the subsequent propagation of truth by the user leads to the desired outcome. Hence, this favourable scenario contributes maximal reward towards the objective. Meanwhile, due to the tie-breaking policy, reaching users within their AW always provides an opportunity for the adoption of truth and its subsequent propagation. However, when both the misinformation and truth arrive within the AW, the adoption of truth is no longer uncontested and, as such, contributes less reward. Finally, when mitigation arrives too late after the misinformation, we penalize the mitigation campaign to capture its reduced effectiveness. Further, recall that the reward function only attributes non-zero values when campaign $F$ reaches node $v$. Since we do not credit any reward to those nodes reached by the mitigation but not the misinformation, this encourages a solution to ``focus'' on mitigating the spread of misinformation, and discourages solutions that blindly maximize the spread of the truth. 
\edit{We note that in place of a step function, we could use an arbitrary non-increasing reward function. We revisit this point in \S~\ref{sec:sandwich}.} 

\stitle{Problem Properties.} The original MM problem under the CIC model \cite{budak2011limiting} is \textbf{NP}-hard. It is a special case of MM under the \model model, with all $m(u,v)=1$, a length zero activation window, and dominant tie-breaking.
\begin{prop}
The misinformation mitigation problem is \textbf{NP}-hard under the \model model.
\end{prop}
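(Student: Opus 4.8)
The plan is to give a polynomial-time reduction from \textsc{Set Cover}. As the preceding remark notes, misinformation mitigation under the CIC model is already \textbf{NP}-hard \cite{budak2011limiting} and is essentially the restriction of the \model model to $m(u,v)\equiv 1$ and a length-zero activation window; the one wrinkle is that the CIC objective counts ``saved'' nodes while the reward of Eq.~(\ref{eqn:reward}) weights each saved node by a value in $\{1,2\}$, so rather than quoting that reduction I would build one directly against the reward function.

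Given an instance $(U,\mathcal S,k)$ of \textsc{Set Cover} with $U=\{u_1,\dots,u_n\}$, $\mathcal S=\{S_1,\dots,S_r\}$, let $c=2k+1$ and construct $G$ with: a source $a$; one node $s_j$ per set $S_j$; and sinks $x_i^1,\dots,x_i^c$ per element $u_i$. Add edge $(a,x_i^\ell)$ for all $i,\ell$ and edge $(s_j,x_i^\ell)$ for all $\ell$ whenever $u_i\in S_j$. Set all $p(e)=1$, all $m(e)=1$, and $\gamma(\cdot)\equiv 0$, so that $G$ has a single deterministic possible world; take $S_F=\{a\}$ and keep the budget $k$. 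I would then show $(U,\mathcal S,k)$ is a yes-instance iff $\max_{|S_M|\le k}\mu(S_M)\ge cn$. Forward: a size-$\le k$ cover $\{S_j:j\in J\}$ gives $S_M=\{s_j:j\in J\}$, under which every sink $x_i^\ell$ meets $F$ through $a$ and $M$ through a covering $s_j$, both at step $1$; then $t^F=t^M$ gives $\rho_X(x_i^\ell)=1$, so $\mu(S_M)\ge cn$ (the source and set-nodes contributing a fixed additive constant that we fold into the threshold). Converse: since $S_F\cap S_M=\emptyset$ we have $a\notin S_M$, so $S_M$ consists of $j$ sink-nodes and at most $k-j$ set-nodes; a seeded sink contributes $2$, an unseeded sink whose element is covered by the chosen set-nodes contributes $1$, and all other nodes $0$, so $\mu(S_M)\le 2j+c\beta$ where $\beta$ is the most elements coverable by $k$ sets. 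On a no-instance $\beta\le n-1$, whence $\mu(S_M)\le 2k+c(n-1)=cn-1<cn$. The construction has size polynomial in the input ($c\le 2r+1$), completing the reduction.

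The step I expect to be the main obstacle --- and the reason for the $c$-fold blow-up of the element nodes and for making those nodes sinks --- is ruling out ``degenerate'' mitigation strategies. Because the reward credits a node simply for $M$ reaching it no later than $F$, an adversarial solver could ignore the set structure and spend its budget directly seeding sink nodes (or otherwise short-circuiting the source), and without padding such a strategy could clear the threshold even on a no-instance. Choosing $c=2k+1$ makes any unit of budget placed outside the set-nodes strictly wasteful (each buys at most $2<c$), which forces the optimum onto genuine covers and makes the equivalence go through. I would also note that the tie-breaking policy plays no role here, since the sinks have no out-neighbours and $\rho_X$ depends only on the meeting times $t^F_v,t^M_v$; and that if $\gamma$ is required to be strictly positive, $\gamma(\cdot)\equiv 1$ works with the obvious timing adjustments (and only further weakens the degenerate strategies).
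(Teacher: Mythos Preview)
Your reduction is correct and, in fact, more careful than what the paper offers. The paper does not give a proof of Proposition~1 at all; it simply remarks, in the sentence immediately preceding the statement, that the CIC problem is ``a special case of MM under the \model model, with all $m(u,v)=1$, a length zero activation window, and dominant tie-breaking,'' and relies on the \textbf{NP}-hardness of Budak et al.\ \cite{budak2011limiting}. You are right to flag the wrinkle: the Budak objective counts saved nodes, whereas $\mu$ uses the $\{0,1,2\}$-valued reward of Eq.~(\ref{eqn:reward}), and the tie-breaking rule fixed in \model is proportional rather than dominant. So the paper's ``special case'' claim is not literally a parameter restriction of the problem as stated, and your choice to rebuild the reduction against the actual reward function is the honest move.

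What your argument adds over a bare Set-Cover-to-CIC reduction is the $c=2k+1$ padding of the element gadgets, which neutralises the degenerate strategy of seeding sinks directly; this is precisely the point at which the $\{1,2\}$ reward could otherwise break the yes/no equivalence, and you handle it cleanly. Two small cleanups. First, the parenthetical about ``a fixed additive constant that we fold into the threshold'' is unnecessary: in your construction the set-nodes $s_j$ lie outside $R_F^X$ (no edge from $a$) and the source $a$ has $t_a^M=\infty$, so both classes contribute exactly~$0$ and the threshold is simply $cn$. Second, your observation that the sinks have no out-neighbours, so the tie-breaking policy never affects downstream rewards, is the crux that lets you sidestep the proportional-vs-dominant mismatch entirely; it is worth stating as the reason the argument goes through under \model's policy rather than merely as an aside.
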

An important observation is that, while the expected mitigation $\mu(\cdot)$ is monotone under the \model model with reward function $\rho(\cdot)$, it is not submodular.
\begin{thm}
\label{thm:negative}
Given a seed set $S_F$, the mitigation function $\mu(\cdot)$ is not submodular in general under the \model model.
%There exist instances in which, given a seed set $S_F$, the mitigation function $\mu(\cdot)$ is not submodular under the \model model.
\end{thm}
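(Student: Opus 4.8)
The plan is to exhibit an explicit small instance of the \model model, together with a fixed misinformation seed set $S_F$, for which $\mu(\cdot)$ violates the diminishing-returns inequality: we will find sets $A \subseteq B$ and a node $w \notin B$ such that $\mu(A \cup \{w\}) - \mu(A) < \mu(B \cup \{w\}) - \mu(B)$. Since submodularity must hold for \emph{all} such $A, B, w$ to be true, a single counterexample suffices. The natural source of non-submodularity here is the tie-breaking policy combined with the activation-window mechanism: adding a second mitigation seed can \emph{reinforce} a node's in-neighbour count for campaign $M$, so that the marginal value of a later-arriving $M$-seed is larger in the presence of the first seed than in its absence. This is a fundamentally ``supermodular'' interaction that does not occur in the classic CIC model, where $M$-awareness at a node is monotone but the adoption probability is not count-weighted.

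Concretely, I would look for a gadget of the following shape. Pick a target node $v$ with two (or more) live in-neighbours along which campaign $M$ can arrive within $v$'s activation window $\tau_v$, and arrange $S_F$ so that campaign $F$ also reaches $v$ within the same window (so that $v$ is in the ``contested'' regime and the weighted random choice is in force). Let $w_1, w_2$ be two candidate mitigation seeds, each of which, on its own, delivers $M$ to $v$ through a distinct in-neighbour within the window. Individually, $\{w_1\}$ raises the probability that $v$ adopts $M$ from some baseline to, say, $|N_M^-(v)| / |N^-(v)| = 1/(d+1)$ where $d$ is the $F$-in-degree; but $\{w_1, w_2\}$ together push it to $2/(d+2)$. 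One then wants the \emph{downstream} effect of $v$ adopting $M$ to be large (a long path of nodes whose reward depends on $v$ forwarding $M$), and wants $w_2$ to have essentially no other effect besides this reinforcement at $v$. Then $\mu(\{w_1, w_2\}) - \mu(\{w_1\})$, which captures the incremental $2/(d+2) - 1/(d+1)$ boost at $v$ times the downstream payoff, must be compared with $\mu(\{w_2\}) - \mu(\emptyset)$, which only captures the $1/(d+1)$ boost starting from zero; choosing $d$ appropriately (e.g.\ $d=0$ versus $d \ge 1$, or tuning the meeting lengths so that $w_2$ alone arrives \emph{too late} but arrives in time once $w_1$'s path has been ``used'') makes the former strictly exceed the latter, giving $\mu(A\cup\{w\}) - \mu(A) > \mu(B \cup \{w\}) - \mu(B)$ is violated in the wrong direction — i.e.\ the marginal gain \emph{increases}. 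A clean way to force this is to make $w_2$'s sole contribution be to change a tie-break outcome at $v$ that is only ``unlocked'' once $w_1$ is present.

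The concrete steps I would carry out are: (1) fix a specific graph (a handful of nodes suffices — the instance in Figure~\ref{fig:intro} or a trimmed version of it is a plausible candidate, since it already contains a node like $v_{15}$ with a contested tie-break and downstream structure), together with all edge probabilities, meeting lengths $h_e^M$, and window lengths $\tau_v$; (2) fix $S_F$; (3) choose the witnesses $A = \emptyset$ (or a singleton), $B \supseteq A$, and the extra element $w$; (4) enumerate the relevant possible worlds $X$ — because probabilities and meeting/window parameters can be chosen deterministically along the key edges, only the tie-break permutations $\pi_v$ at the one contested node carry randomness, so the expectation collapses to a short finite sum; (5) compute $\mu(A\cup\{w\})$, $\mu(A)$, $\mu(B\cup\{w\})$, $\mu(B)$ from Eq.~(\ref{eqn:reward}) and verify the submodularity inequality fails.

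The main obstacle will be step (4)–(5): controlling the full cascade in the chosen instance. Even in a small graph, adding a mitigation seed can perturb arrival times, reroute which in-neighbour of a node ``wins'' the meeting race, and cascade changes to several activation windows at once, so one must be careful that the gadget isolates the intended reinforcement effect and that no unintended side effect at another node cancels it. The remedy is to keep the graph sparse and to set meeting lengths so that all the ``irrelevant'' parts of the network are either unreachable or reach their nodes with reward identically $0$ regardless of $S_M$; then the only moving part is the contested node $v$ and its downstream chain, and the computation reduces to comparing the two weighted-random-choice probabilities times a fixed constant. I expect the final witness to be of the ``marginal gain strictly increases'' type, directly exhibiting the failure of diminishing returns.
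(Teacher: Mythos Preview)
Your overall strategy—build a small explicit instance and verify the diminishing-returns inequality fails—is exactly right, but the mechanism you have singled out will not work. The weighted tie-breaking rule is \emph{concave} in the number of $M$-in-neighbours: if $F$ contributes $d$ in-neighbours and each of $j$ distinct $M$-seeds contributes one, the probability that $v$ adopts $M$ is $j/(j+d)$, whose marginal increment $\frac{j+1}{j+1+d} - \frac{j}{j+d} = \frac{d}{(j+d)(j+1+d)}$ is \emph{decreasing} in $j$. So reinforcing the tie-break count at a contested node, even multiplied by an arbitrarily large downstream payoff, yields diminishing marginals, not increasing ones. Your fallback idea (``$w_2$ alone arrives too late but in time once $w_1$ is present'') does not rescue this either, since meeting lengths are fixed in a possible world and independent of which $M$-seeds are chosen.

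The paper's counterexample uses a different, simpler mechanism that bypasses tie-breaking entirely (all activation-window lengths are zero). The supermodularity comes from \emph{path blocking}. Take a diamond: a target $v_3$ reachable from $S_F$ along two disjoint paths, and two $M$-seeds $v_4, v_5$ each intercepting exactly one of those paths by reaching an intermediate node before $F$ does. Individually, each seed saves its intermediate node (reward $2$) but leaves the other $F$-path open, so $F$ still reaches $v_3$; because the meeting lengths on the edges into $v_3$ are set large, $M$ arrives there too late and earns reward $0$ at $v_3$. Together, the two seeds sever \emph{both} $F$-paths, so $t_{v_3}^F = \infty$ and the reward at $v_3$ jumps to $2$. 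With $A = \emptyset$, $B = \{v_5\}$, $w = v_4$ one gets $\mu_X(A) = 0$, $\mu_X(A\cup\{w\}) = 2$, $\mu_X(B) = 2$, $\mu_X(B\cup\{w\}) = 6$, so the marginal of $w$ rises from $2$ to $4$. The ingredient you are missing is the differential meeting lengths: they let $M$ win the race to the intermediate nodes yet lose the race to $v_3$, creating a node whose reward is all-or-nothing depending on whether \emph{every} $F$-path has been cut.
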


\begin{comment}
\begin{proof}[Sketch]
A simple counter-example exhibiting supermodular behaviour can be constructed with a ``diamond''-shaped graph in which multiple mitigation seeds can work together to cut off all paths from the fake seed to a target node.
\end{proof}
\end{comment}

\begin{proof}
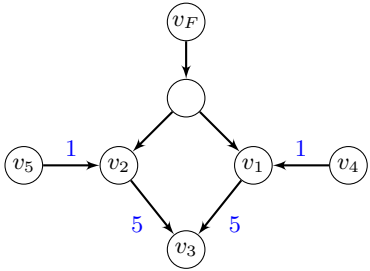
\begin{figure}
\centering
\begin{tikzpicture}

\tikzset{node/.style={circle,draw,minimum size=0.5cm,inner sep=0pt},}
\tikzset{edge/.style={->,> = latex'},}

\node[node] (1) {$v_F$};
\node[node] (2) [below = 0.5cm of 1] {};
\node[node] (3) [below right = 0.75cm of 2] {$v_1$};
\node[node] (4) [below left = 0.75cm of 2] {$v_2$};
\node[node] (5) [below = 1.5cm of 2] {$v_3$};
\node[node] (6) [right = 0.75cm of 3] {$v_4$};
\node[node] (7) [left = 0.75cm of 4] {$v_5$};

\path[draw,thick,->,> = latex']
(1) edge node {} (2)
(2) edge node {} (3)
(2) edge node {} (4)
(3) edge node[below right, blue] {5} (5)
(4) edge node[below left, blue] {5} (5)
(6) edge node[above, blue] {1} (3)
(7) edge node[above, blue] {1} (4);
%(7) edge node[above right, blue] {3} (4)

\end{tikzpicture}
\caption{Diamond counter-example.}\label{fig:thm_one}
\end{figure}

Consider the graph shown in Figure \ref{fig:thm_one} and let $S_F = \{ v_F \}$. Assume that all edges probabilities $p(e) = 1$ and all activation window lengths are zero. Consider a possible world $X$ in which the meeting lengths for campaign $M$ are as shown in blue. Now, consider the sets $A = \emptyset$, $B = \{ v_5 \}$ and node $v_4$. Then, it is straightforward to compute the mitigation $\mu_X(A) = 0$ and $\mu_X(B) = 2$. In particular, when $B$ is used as the seed set for $M$, node $v_2$ ends up adopting $M$, but due to the meeting length of the edge $(v_2,v_3)$, campaign $M$ reaches $v_3$ after its activation window has closed, resulting in no reward at $v_3$ according to $\rho_X(\cdot)$. Next, consider the mitigation achieved when adding $v_4$ to $A$ and $B$. We have, $\mu_X(A \cup \{ v_4 \}) = 2$. On the other hand, $\mu_X(B \cup \{ v_4 \}) = 6$. Thus, since $\mu_X(B \cup \{ v_4 \}) - \mu_X(B) > \mu_X(A \cup \{ v_4 \}) - \mu_X(A)$ and $A \subseteq B$, the mitigation objective does not satisfy submodularity.
\end{proof}

Bian et al.\ \cite{bian2017guarantees} recently provided approximation guarantees for the standard greedy algorithm on non-submodular objectives \LL{based} on the submodularity ratio and curvature of the objective. Unfortunately, the submodularity ratio of the MM objective can be as small as $0$ and thus their result does not provide any non-trivial guarantees. To overcome the challenges of a non-submodular objective function, we leverage the \emph{Sandwich Approximation} of \cite{lu2015competition} by developing appropriate upper and lower bounding functions for our mitigation objective, yielding a solution with \LL{non-trivial} data-dependent approximation guarantees.
\section{Sandwiching the Mitigation Objective}
\label{sec:sandwich}

The \emph{Sandwich Approximation (SA)} technique of \cite{lu2015competition} leverages upper ($\overline\mu$) and lower ($\underline\mu$) bounds on a non-submodular objective function to provide data-dependent approximation guarantees. Specifically, given upper and lower bounding functions that are themselves submodular, we can obtain solutions $S^U$ and $S^L$ resp.\ with approximation guarantees.

\stitle{Lower Bound.} We make the following observation to motivate our choice of lower bound objective $\underline\mu(\cdot)$. The supermodular behaviour of $\mu(\cdot)$ arises from the combined effort of seed nodes in $S_M$ which individually would not activate a target node $v$. That is, together they are able to block the paths from $S_F$ to $v$ such that the mitigating campaign's disadvantage, in the form of meeting events, is overcome. \edit{E.g., in Figure~\ref{fig:intro}, suppose $S_F = \{v_1\}$. Then neither one of $v_2, v_7$ by itself can activate $v_8$ in $M$. However, $S_M = \{v_2, v_7\}$ can.} Thus, to eliminate the possibility of such \emph{coordination}, we define a lower bound function $\underline\mu_X$ that only measures the maximum mitigation achieved by any node in $S_M$ when it acts as a singleton seed set. Formally, we define $\underline\mu_X(S_M) = \sum_{v} \max_{u \in S_M} \rho_X(v,\{ u \}) $ and $\underline\mu(S_M) = \mathbb{E}[\underline\mu_X(S_M)]$. Note, $\underline\mu_X$ clearly lower bounds $\mu_X$ since $\max_{u \in S_M} \rho_X(v,\{ u \}) \le \rho_X(v,S_M)$.
\begin{lem} \label{lem:lbsm} 
$\underline\mu(S_M)$ is submodular.
\end{lem}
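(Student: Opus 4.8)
The plan is to rely on the two standard closure properties of submodularity: a nonnegative linear combination of (monotone) submodular set functions is again (monotone) submodular, and so is an expectation (convex combination) of them. Since $\underline\mu(S_M) = \mathbb{E}_X\!\big[\sum_{v} \max_{u \in S_M}\rho_X(v,\{u\})\big]$, it suffices to fix a possible world $X$ and a node $v$ and prove that the set function $g(S) := \max_{u \in S}\rho_X(v,\{u\})$, with the convention $g(\emptyset) = 0$, is monotone and submodular in $S$. Summing these per-node functions over $v$ and then taking the expectation over $X$ (whose law supplies the nonnegative weights) yields $\underline\mu$, which is therefore monotone and submodular. One should also note that the empty-set conventions line up: $g(\emptyset) = 0 = \rho_X(v,\emptyset)$.

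The observation that makes the core step easy is that when the $M$-seed set is a \emph{singleton} $\{u\}$, the reward $\rho_X(v,\{u\})$ is simply a fixed nonnegative number determined by $X$, $v$, and $u$ alone---there is no coordination among seeds to reason about. Writing $c_u := \rho_X(v,\{u\}) \ge 0$, the function $g(S) = \max_{u \in S} c_u$ is just the maximum of a collection of nonnegative weights over the chosen set, so monotonicity is immediate. For submodularity I would verify the diminishing-returns condition by computing marginal gains directly: for any $S$ and $x \notin S$,
\[
g(S \cup \{x\}) - g(S) = \max\!\big(0,\; c_x - g(S)\big).
\]
Hence for $A \subseteq B$ with $x \notin B$, monotonicity gives $g(A) \le g(B)$, so $\max(0, c_x - g(A)) \ge \max(0, c_x - g(B))$; that is, the marginal gain of $x$ over $A$ is at least its marginal gain over $B$, which is exactly submodularity. (Equivalently, one can write $g(S) = \int_0^\infty \mathbf{1}\big[\,S \cap \{u : c_u \ge \theta\} \ne \emptyset\,\big]\, d\theta$, where each integrand is a single-set coverage function in $S$ and hence monotone submodular, and integration preserves both properties.)

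There is essentially no hard obstacle here; the only thing requiring mild care is the reduction itself---checking that $\underline\mu$ genuinely decomposes as a nonnegative combination (over $v$ and over the randomness in $X$) of the per-world, per-node functions $g$, so that the elementary ``max of nonnegative weights is submodular'' fact can be lifted through the linear-combination and expectation operations. Once that bookkeeping is in place, the claim follows.
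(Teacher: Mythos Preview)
Your argument is correct and considerably more direct than the paper's. You exploit the key structural feature of $\underline\mu$ head-on: once the possible world $X$ and target node $v$ are fixed, each singleton reward $\rho_X(v,\{u\})$ is simply a nonnegative constant $c_u$ determined by $X$, $v$, and $u$, so $g(S)=\max_{u\in S}c_u$ is the textbook ``max of nonnegative weights'' function, whose monotone submodularity is immediate from the marginal-gain identity you wrote. Summation over $v$ and expectation over $X$ then lift the property to $\underline\mu$.

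The paper, by contrast, proves the same per-node claim (${\underline m}_v(S)=\max_{u\in S}\rho_X(v,\{u\})$ is submodular) via a case analysis on the propagation dynamics: it tracks delays $\Delta_v(\{u\})$, distinguishes whether $v$ is still reached by $F$ after adding $w$, and argues that any increase in reward must be attributable to an $M$-active path rooted at $w$ alone, concluding ${\underline m}_v(A\cup\{w\})={\underline m}_v(B\cup\{w\})={\underline m}_v(\{w\})$. This is exactly what your abstract argument yields in one line (if $c_w>g(B)$ then $g(A\cup\{w\})=g(B\cup\{w\})=c_w$), so the paper's path-level reasoning is not needed for this lemma. What the paper's approach does buy is a template that it reuses almost verbatim for the upper-bound function $\overline\mu$ in the next lemma; your approach applies there as well, provided one checks that the case split on the overlap indicator is made per singleton $\{u\}$, so that the inner expression is again a fixed weight.
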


\begin{comment}
\begin{proof}[Sketch]
We fix a possible world $X$ and consider the set of nodes that activate in campaign $F$ when $S_M = \emptyset$, i.e.\ the set of nodes reachable from $S_F$ in $X$, denoted $R_F^X$. For any node $v \in R_F^X$, when a node is added to $S_M$, $v$ will either continue to activate in $F$ or switch to activating in $M$. To prove the lemma, it  suffices to prove submodularity for each $v \in R_F^X$. We analyze the delay at a node $v$ under mitigating seed sets $A$, $B$ with $A \subseteq B$. We can show that for an arbitrary node $w \in V \setminus B$, the delay required to result in an increase in mitigation is due solely to an $M$-active path rooted at $w$. Submodularity follows from this. 
\end{proof}
\end{comment}

\begin{proof}
Fix a possible world $X$. The set of nodes that activate in campaign $F$ when $S_M = \emptyset$ is the set of nodes reachable from $S_F$ in $X$, denoted $R_F^X$. For any node $v \in R_F^X$, when a node is added to $S_M$, $v$ will either continue to activate in $F$ or switch to activating in $M$. To prove submodularity it is sufficient to prove submodularity for each $v \in R_F^X$. Denote ${\underline m}_v(S) = \max_{u \in S} \rho_X(v,\{ u \})$ as the mitigation at node $v$ due to seed set $S$. Denote $t_v^F$ as the first step in which $v$ meets with a node in campaign $F$ in possible world $X$ and $t_v^M$ is the first step in which $v$ meets with a node in campaign $M$ in possible world $X$. Define $\Delta_v(S) = t_v^M - t_v^F$ as the delay at node $v$ under mitigating seed set $S$ in possible world $X$.

Consider any sets $A$ and $B$ such that $A \subseteq B$. Let $w \in V \setminus B$ be an arbitrary node. Consider a node $v \in R_F^X$ such that

\vspace{-10pt}

\begin{equation}
\label{eqn:sm_one}
    {\underline m}_v(B \cup \{ w \}) > {\underline m}_v(B).
\end{equation}

In order to satisfy (\ref{eqn:sm_one}), the mitigation at $v$ must increase with the inclusion of node $w$. Observe that due to the form of the reward function $\rho_X(\cdot)$, such an increase can only occur at the ``steps'' along $\rho_X(\cdot)$ (see Figure~\ref{fig:prelim}(b)). We consider the following two cases below: (i) $v$ is reached by $F$ under seed set $B \cup \{ w \}$ and (ii) $v$ is not reached by $F$ under seed set $B \cup \{ w \}$.

\emph{\underline{Case (i)}:} First, consider the case where $v$ is reached by $F$ under both seed sets $B$ and $B \cup \{ w \}$ in $X$. In this case, in order for the mitigation at $v$ to increase by crossing a ``step'' in $\rho_X(\cdot)$, we must have that $\min_{u \in B \cup \{ w \}} \Delta_v(\{ u \}) < \min_{u \in B} \Delta_v(\{ u \})$. Then, consider an $M$-active path $P^M$ from some node in $B \cup \{ w \}$ to $v$ that results in delay \linebreak $\min_{u \in B \cup \{ w \}} \Delta_v(\{ u \})$. If the starting node of $P^M$ is not $w$, then we know that $\min_{u \in B \cup \{ w \}} \Delta_v(\{ u \}) = \min_{u \in B} \Delta_v(\{ u \})$ which contradicts (\ref{eqn:sm_one}). Therefore, the path responsible for the increase in mitigation at $v$ must start from $w$. As a result, $\Delta_v(\{ w \}) < \min_{u \in B} \Delta_v(\{ u \})$.

\emph{\underline{Case (ii)}:} Second, consider the case where $v$ adopts $M$ without resolving a tie-break for seed set $B \cup \{ w \}$. Then, in order to adhere to (\ref{eqn:sm_one}), it must be the case that under seed set $B$, $v$ was reached by campaign $F$ and the reward achieved was at most $1$. In particular, it must be the case that $\min_{u \in B} \Delta_v(\{ u \}) \geq - \tau_v$. Now, if $v$ adopts $M$ without resolving a tie-break under seed set $B \cup \{ w \}$, then all predecessors of $v$ must be $M$-active. As a result, since each $u \in B \cup \{ w \}$ is run as a singleton seed set for $\underline\mu_X(\cdot)$, it must be the case that there is an $M$-active path from $w$ to each predecessor of $v$.

Now, for set $A \subseteq B$, for case (i) we have from the above that $\Delta_v(\{ w \}) < \min_{u \in B} \Delta_v(\{ u \}) \leq \min_{u \in A} \Delta_v(\{ u \})$ since delay is monotonically decreasing w.r.t.\ the size of the mitigating seed set. For case (ii) we similarly have that there is an $M$-active path from $w$ to each predecessor of $v$ while for each singleton seed set from $A$ we have $\min_{u \in A} \Delta_v(\{ u \}) \geq \min_{u \in B} \Delta_v(\{ u \}) \geq - \tau_v$. It follows that ${\underline m}_v(A \cup \{ w \}) > {\underline m}_v(A)$.

Finally, since each node is run as a singleton, we have that node $w$ is solely responsible for the increase in reward achieved at $v$. As a consequence, we have ${\underline m}_v(A \cup \{ w \}) = {\underline m}_v(B \cup \{ w \}) = {\underline m}_v(\{ w \})$. Thus, from the monotonicity of ${\underline m}(\cdot)$, we have that ${\underline m}_v(A \cup \{ w \}) - {\underline m}_v(A) \geq {\underline m}_v(B \cup \{ w \}) - {\underline m}_v(B)$ as required.
\end{proof}

%\vspace{-5pt} 
\stitle{Upper Bound.} An obvious candidate for $\overline\mu(\cdot)$ is to forego the meeting events associated with campaign $M$ and enforce an $M$-dominant tie-breaking rule. The resulting model reduces to the CIC model under which previous results ensure that the resulting objective is submodular. The existence of meeting events only acts to ``hinder'' the mitigation and thus without them the mitigation would reach every node sooner, thus increasing reward. Further, $M$-dominant tie-breaking rule ensures all propagation paths shared by the two campaigns are won by the mitigation. However, we develop a tighter upper bounding function so as  to improve the data-dependent approximation guarantees.

Consider a possible world $X$ and call all edges touched by the propagation of campaign $F$ in $X$, \emph{critical} edges $E_C \subseteq E$. Next, construct a modified possible world $X'$ by removing all meeting events for campaign $M$ on critical edges. That is, $h_e^M = 1$ for all critical edges $e \in E_C$. Next, define an \emph{overlap} indicator variable $\mathbb{I}_{v}^{OL}$ where $\mathbb{I}_{v}^{OL} = 0$ iff the collection of paths from $S_F$ to $v$ is \LL{edge-disjoint} from the collection of paths from $S_M$ to $v$ in $X'$. Intuitively, if there is no overlap then there is no opportunity for several seeds from $S_M$ to ``collude'' together to achieve a larger reward than if they were to act alone. Finally, we define a modified reward function $\rho'_X(\cdot)$ that upper bounds $\rho_X(\cdot)$ by lifting the reward to its maximum value for the case that both campaigns reach some node $u$ within its activation window. Denote $t_v^{A,X}$ as the first step in which $v$ meets with a node in campaign $A$ in possible world $X$, where $A \in \{F, M\}$. We similarly define $\rho'_X(v,S_M) = 0$ when $v \notin R_F^X$, and otherwise as
\begin{equation}
\label{eqn:reward_prime}
\rho'_X(v, S_M)=
\begin{cases}
2 &\mbox{if } t_v^{M,X} \leq t_v^{F,X} + \tau_v \\
0 &\mbox{if } t_v^{M,X} > t_v^{F,X} + \tau_v.
\end{cases}
\end{equation}
Then, we define
\begin{equation}
\label{eqn:upper_mit}
\overline\mu_X(S_M) = \sum_v \max_{u \in S_M}
\begin{cases}
\rho_{X'}(v,\{ u \}) &\mbox{if } \mathbb{I}_{v}^{OL} = 0 \\
\rho'_{X'}(v,\{ u \}) &\mbox{if } \mathbb{I}_{v}^{OL} = 1
\end{cases}
\end{equation}
and $\overline\mu(S_M) = \mathbb{E}[\overline\mu_X(S_M)]$. 
\begin{lem}
$\overline\mu(S_M)$ is submodular.
\end{lem}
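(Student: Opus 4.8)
The plan is to follow the same template as the proof of Lemma~\ref{lem:lbsm}. Since $\overline\mu(S_M)=\mathbb{E}[\overline\mu_X(S_M)]$ and a nonnegative combination (in particular an expectation) of submodular set functions is again submodular, it suffices to fix a possible world $X$ (equivalently its derived world $X'$) and show that $\overline\mu_X$ is submodular. Writing $\overline\mu_X(S_M)=\sum_{v} g_v(S_M)$, where $g_v$ is the per-node term inside the sum in Eq.~(\ref{eqn:upper_mit}), and using that a finite sum of submodular functions is submodular, it further suffices to fix a node $v$ and show $g_v$ is submodular. For $v\notin R_F^X$ we have $g_v\equiv 0$, so assume $v\in R_F^X$.

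The key step is to recognize $g_v$ as a maximum of fixed nonnegative weights. For a singleton seed $\{u\}$, both $\rho_{X'}(v,\{u\})$ and $\rho'_{X'}(v,\{u\})$ are constants determined by $v$, $u$ and the (already fixed) world $X'$, lying in $\{0,1,2\}$ and $\{0,2\}$ respectively, and $\rho'_{X'}(v,\{u\})\ge \rho_{X'}(v,\{u\})$ because $\rho'$ differs from $\rho$ only by raising the ``both campaigns arrive within the AW'' value from $1$ to $2$ (compare Eq.~(\ref{eqn:reward}) and Eq.~(\ref{eqn:reward_prime})). Moreover the overlap test that selects which of the two values seed $u$ contributes to the $v$-term is applied to the singleton $\{u\}$: it asks whether the $X'$-live paths from $\{u\}$ to $v$ are edge-disjoint from those from $S_F$ to $v$, a property of $u$ and $v$ alone. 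Hence there is a fixed weight $c_v(u)\in\{0,1,2\}$ with $g_v(S_M)=\max_{u\in S_M}c_v(u)$.

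Submodularity of $g_v$ then follows from the elementary fact that, for any $c:V\to\mathbb{R}_{\ge 0}$, the function $S\mapsto \max_{u\in S}c(u)$ is monotone and submodular: the running maximum $\max_{u\in S}c(u)$ is nondecreasing in $S$, so for $w\notin B$ the marginal gain $\max\{0,\,c(w)-\max_{u\in S}c(u)\}$ is nonincreasing as $S$ grows from $A$ to any $B\supseteq A$. Summing over $v$ and taking the expectation over $X$ gives the lemma.

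I expect the main obstacle to be not computational but definitional: to make precise, from the description surrounding Eq.~(\ref{eqn:upper_mit}), that the value each seed contributes to a given node's term depends only on that seed — in particular that $\mathbb{I}_v^{OL}$ is evaluated seedwise when placed inside $\max_{u\in S_M}$. This is exactly what the construction of $X'$, $\rho'_{X'}$ and the overlap indicator is set up to guarantee, and it is what prevents the coordinated (``colluding'') behaviour that drives the non-submodularity of $\mu$ itself (Theorem~\ref{thm:negative}). If instead the overlap condition were read as a joint property of all of $S_M$, one would additionally have to show that the lift from $\rho$ to $\rho'$ exactly offsets any collusion gain as the overlap indicator switches on, via a case analysis on whether the seed that triggers the overlap actually reaches $v$, paralleling Cases~(i)--(ii) in the proof of Lemma~\ref{lem:lbsm}.
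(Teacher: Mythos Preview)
Your proof is correct and arrives at the same conclusion by a cleaner route. The paper's argument parallels its proof of Lemma~\ref{lem:lbsm}: it fixes $X$, $X'$ and a node $v\in R_F^X$, then splits on the value of $\mathbb{I}_v^{OL}$ and, within each branch, repeats the Case~(i)/(ii) analysis (on whether $v$ is reached by $F$ under $B\cup\{w\}$) to conclude that whenever ${\overline m}_v(B\cup\{w\})>{\overline m}_v(B)$ the gain is entirely attributable to the singleton $\{w\}$, whence ${\overline m}_v(A\cup\{w\})={\overline m}_v(B\cup\{w\})={\overline m}_v(\{w\})$ and submodularity follows from monotonicity. Your observation that the per-node term in Eq.~(\ref{eqn:upper_mit}) is literally $\max_{u\in S_M}c_v(u)$ for a fixed weight $c_v(u)\in\{0,1,2\}$ short-circuits all of this: the paper's case analysis is, in effect, re-deriving in situ that a max over nonnegative constants is submodular.

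You are also right that the only real issue is definitional. The paper's prose defines $\mathbb{I}_v^{OL}$ in terms of $S_M$, but both the placement of the indicator inside $\max_{u\in S_M}$ in Eq.~(\ref{eqn:upper_mit}) and the RDR-set construction in \S\ref{sec:rdr_gen} (where the overlap indicator is computed per candidate seed) make clear the intended reading is seedwise. The paper's own proof treats $\mathbb{I}_v^{OL}$ as fixed across the seed sets $A$, $B$, $B\cup\{w\}$, which is only coherent under that same reading. Your identification and resolution of this ambiguity is exactly what is needed; the fallback case analysis you sketch for the ``joint'' reading is unnecessary here.
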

%%%%%%%%%%%%%%%%%%%%% 
\eat{ For lack of space, we omit the proof. %However, we note that 
The proof technique follows the same logic as the proof of Lemma~\ref{lem:lbsm}, with some additional case analysis to account for the form of Eq. \ref{eqn:upper_mit}. 
} 
%%%%%%%%%%%%%%%%%%%%% 

\begin{comment}
\begin{multline}
\label{eqn:upper_mit}
    \overline\mu_X(S_M) = \sum_{v: \; \mathbb{I}_{v}^{OL} = 0} \max_{u \in S_M} \rho_{X'}(v,\{ u \}) \; + \\ \qquad \sum_{v: \; \mathbb{I}_{v}^{OL} = 1} \max_{u \in S_M} \rho'_{X'}(v,\{ u \}),
\end{multline}
\end{comment}

\begin{proof}
Fix possible worlds $X$ and $X'$. The set of nodes that activate in campaign $F$ when $S_M = \emptyset$ is the set of nodes reachable from $S_F$ in $X$, denoted $R_F^X$. For any node $v \in R_F^X$, when a node is added to $S_M$, $v$ will either continue to activate in $F$ or switch to activating in $M$. To prove submodularity it is sufficient to prove submodularity for each $v \in R_F^X$. Define the delay at node $v$ in possible world $X$ under mitigating seed set $S$ as $\Delta_{v,X}(S) = t_v^{M,X} - t_v^{F,X}$ and the mitigation at node $v$ by

\vspace{-5pt}

\begin{equation}
{\overline m}_v(S)=
\begin{cases}
\max_{u \in S} \rho_{X'}(v,\{ u \}) &\mbox{if } \mathbb{I}_{v}^{OL} = 0 \\
\max_{u \in S} \rho'_{X'}(v,\{ u \}) &\mbox{else} \\
\end{cases}
\end{equation}

Consider any sets $A$ and $B$ such that $A \subseteq B$. Let $w \in V \setminus B$ be an arbitrary node. Consider a node $v \in R_F^X$ such that ${\overline m}_v(B \cup \{ w \}) > {\overline m}_v(B)$.

\emph{\underline{Case $\mathbb{I}_{v}^{OL} = 0$}:} In order to satisfy (\ref{eqn:sm_one}), the mitigation at $v$ must increase with the inclusion of node $w$. Observe that due to the form of the reward function $\rho_X(\cdot)$, such an increase can only occur at the ``steps'' along $\rho_X(\cdot)$. We consider the following two cases below: (i) $v$ is reached by $F$ under seed set $B \cup \{ w \}$ and (ii) $v$ is not reached by $F$ under seed set $B \cup \{ w \}$.

\emph{\underline{Case (i)}:} First, consider the case where $v$ is reached by $F$ under both seed sets $B$ and $B \cup \{ w \}$. In this case, in order for the mitigation at $v$ to increase by crossing a ``step'' in $\rho_X(\cdot)$, we must have that $\min_{u \in B \cup \{ w \}} \Delta_{v,X'}(\{ u \}) < \min_{u \in B} \Delta_{v,X'}(\{ u \})$. Then, consider an $M$-active path $P^M$ from some node in $B \cup \{ w \}$ to $v$ that results in delay $\min_{u \in B \cup \{ w \}} \Delta_{v,X'}(\{ u \})$. If the starting node of $P^M$ is not $w$, then we know that $\min_{u \in B \cup \{ w \}} \Delta_{v,X'}(\{ u \}) = \min_{u \in B} \Delta_{v,X'}(\{ u \})$ which contradicts (\ref{eqn:sm_one}). Therefore, the path responsible for the increase in mitigation at $v$ must start from $w$. As a result, $\Delta_{v,X'}(\{ w \}) < \min_{u \in B} \Delta_{v,X'}(\{ u \})$.

\emph{\underline{Case (ii)}:} Second, consider the case where $v$ adopts $M$ without resolving a tie-break for seed set $B \cup \{ w \}$. Then, in order to adhere to (\ref{eqn:sm_one}), it must be the case that under seed set $B$, $v$ was reached by campaign $F$ and the reward achieved was at most $1$. In particular, it must be the case that $\min_{u \in B} \Delta_{v,X'}(\{ u \}) \geq - \tau_v$. Now, if $v$ adopts $M$ without resolving a tie-break under seed set $B \cup \{ w \}$, then all predecessors of $v$ must be $M$-active. As a result, since each $u \in B \cup \{ w \}$ is run as a singleton seed set for $\overline\mu_X(\cdot)$, it must be the case that there is an $M$-active path from $w$ to each predecessor of $v$.

Now, for set $A \subseteq B$, for case (i) we have from the above that $\Delta_{v,X'}(\{ w \}) < \min_{u \in B} \Delta_{v,X'}(\{ u \}) \leq \min_{u \in A} \Delta_{v,X'}(\{ u \})$ due to the monotonicity of delay. For case (ii) we similarly have that there is an $M$-active path from $w$ to each predecessor of $v$ while for each singleton seed set from $A$ we have $\min_{u \in A} \Delta_{v,X'}(\{ u \}) \geq \min_{u \in B} \Delta_{v,X'}(\{ u \}) \geq - \tau_v$. It follows that ${\overline m}_v(A \cup \{ w \}) > {\overline m}_v(A)$. Finally, since each node is run as a singleton, we have that node $w$ is solely responsible for the increase in reward achieved at $v$. As a consequence, we have ${\overline m}_v(A \cup \{ w \}) = {\overline m}_v(B \cup \{ w \}) = {\overline m}_v(\{ w \})$. Thus, from the monotonicity of ${\overline m}(\cdot)$, we have that ${\overline m}_v(A \cup \{ w \}) - {\overline m}_v(A) \geq {\overline m}_v(B \cup \{ w \}) - {\overline m}_v(B)$ as required.

%From the monotonicity of ${\overline m}(\cdot)$, we have that ${\overline m}_v(A \cup \{ w \}) - {\overline m}_v(A) \geq {\overline m}_v(B \cup \{ w \}) - {\overline m}_v(B)$ as required.

\emph{\underline{Case $\mathbb{I}_{v}^{OL} = 1$}:} In order to satisfy (\ref{eqn:sm_one}), the mitigation at $v$ must increase with the inclusion of node $w$. Observe that due to the form of the reward function $\rho'_X(\cdot)$, such an increase can only occur at the ``step'' along $\rho'_X(\cdot)$. We consider the following two cases below: (i) $v$ is reached by $F$ under seed set $B \cup \{ w \}$ and (ii) $v$ is not reached by $F$ under seed set $B \cup \{ w \}$.

\emph{\underline{Case (i)}:} First, consider the case where $v$ is reached by $F$ under both seed sets $B$ and $B \cup \{ w \}$. In this case, in order for the mitigation at $v$ to increase by crossing the ``step'' in $\rho'_X(\cdot)$, we must have that $\min_{u \in B \cup \{ w \}} \Delta_{v,X'}(\{ u \}) < \min_{u \in B} \Delta_{v,X'}(\{ u \})$. Then, consider an $M$-active path $P^M$ from some node in $B \cup \{ w \}$ to $v$ that results in delay $\min_{u \in B \cup \{ w \}} \Delta_{v,X'}(\{ u \})$. If the starting node of $P^M$ is not $w$, then we know that $\min_{u \in B \cup \{ w \}} \Delta_{v,X'}(\{ u \}) = \min_{u \in B} \Delta_{v,X'}(\{ u \})$ which contradicts (\ref{eqn:sm_one}). Therefore, the path responsible for the increase in mitigation at $v$ must start from $w$. As a result, $\Delta_{v,X'}(\{ w \}) < \min_{u \in B} \Delta_{v,X'}(\{ u \})$.

\emph{\underline{Case (ii)}:} Second, consider the case where $v$ adopts $M$ without resolving a tie-break for seed set $B \cup \{ w \}$. Then, in order to adhere to (\ref{eqn:sm_one}), it must be the case that under seed set $B$, $v$ was reached by campaign $F$ and the reward achieved was at most $0$. In particular, it must be the case that $\min_{u \in B} \Delta_{v,X'}(\{ u \}) > \tau_v$. Now, if $v$ adopts $M$ without resolving a tie-break under seed set $B \cup \{ w \}$, then all predecessors of $v$ must be $M$-active. As a result, since each $u \in B \cup \{ w \}$ is run as a singleton seed set for $\overline\mu_X(\cdot)$, it must be the case that there is an $M$-active path from $w$ to each predecessor of $v$.

Now, for set $A \subseteq B$, for case (i) we have from the above that $\Delta_{v,X'}(\{ w \}) < \min_{u \in B} \Delta_{v,X'}(\{ u \}) \leq \min_{u \in A} \Delta_{v,X'}(\{ u \})$ due to the monotonicity of delay. For case (ii) we similarly have that there is an $M$-active path from $w$ to each predecessor of $v$ while for each singleton seed set from $A$ we have $\min_{u \in A} \Delta_{v,X'}(\{ u \}) \geq \min_{u \in B} \Delta_{v,X'}(\{ u \}) > \tau_v$. It follows that ${\overline m}_v(A \cup \{ w \}) > {\overline m}_v(A)$. Finally, since each node is run as a singleton, we have that node $w$ is solely responsible for the increase in reward achieved at $v$. As a consequence, we have ${\overline m}_v(A \cup \{ w \}) = {\overline m}_v(B \cup \{ w \}) = {\overline m}_v(\{ w \})$. Thus, from the monotonicity of ${\overline m}(\cdot)$, we have that ${\overline m}_v(A \cup \{ w \}) - {\overline m}_v(A) \geq {\overline m}_v(B \cup \{ w \}) - {\overline m}_v(B)$ as required.
\end{proof}

\stitle{Reverse Delayed Reward Sets.} State-of-the-art solutions for the IM problem are based on the concept of \emph{Reverse Reachable (RR)} sets. We denote a possible world under the IC model by $W$.
\begin{defn}[Reverse Reachable Set \cite{tang2015influence}] 
\label{defn:rr_set}
The reverse reachable (RR) set for a root node $v$ in $W$ is the set of nodes that can reach $v$ in $W$. That is, for each node $u$ in the RR set, there is a directed path from $u$ to $v$ in $W$. 
\end{defn}
\begin{defn}[Random RR Set \cite{tang2015influence}]
A random RR set is an RR set generated on a random instance of $W$, for a root node selected uniformly at random from $W$.
\end{defn}
The connection between RR sets and a target node's  activation is formalized in the following equivalence lemma. It is the key result that underpins the approximation guarantees for the approaches of \cite{tang2015influence, nguyen2016stop, tang2018online} by establishing an unbiased estimator for the influence objective.
\begin{lem}[Activation Equivalence \cite{borgs2014maximizing}]
\label{lem:borgs}
For any seed set $S$ and node $v$, the probability that an influence propagation process from $S$ can activate $v$ equals the probability that $S$ overlaps an RR set for $v$.
\end{lem}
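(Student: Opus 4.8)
The plan is to work entirely within the live-edge (possible world) interpretation of the IC model. Sample a possible world $W$ by independently declaring each edge $(u,w) \in E$ \emph{live} with probability $p(u,w)$ and \emph{blocked} otherwise; this is exactly the distribution over $W$ underlying Definition~\ref{defn:rr_set} and the random RR set. The first key observation is that, conditioned on $W$, the forward IC propagation from $S$ is deterministic: a node $v$ is activated if and only if there is a path from some seed $u \in S$ to $v$ consisting entirely of live edges of $W$. This is the standard equivalence between the sequential IC process and its live-edge snapshot, which holds because each edge's activation coin is flipped at most once and all coins are mutually independent, so committing to all of them up front does not change the distribution of the activated set.

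The second observation is purely combinatorial: by Definition~\ref{defn:rr_set}, the RR set of $v$ in $W$ is precisely $\{u \in V : \text{there is a live path from } u \text{ to } v \text{ in } W\}$. Comparing the two observations, for a fixed $W$ the event ``forward IC from $S$ activates $v$'' and the event ``$S$ intersects the RR set of $v$ in $W$'' are the \emph{same} event --- both assert the existence of a live $u$-to-$v$ path for some $u \in S$. Taking the probability over the random choice of $W$ (with the root held fixed to $v$, as in the lemma statement) immediately yields the claimed equality.

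The only point requiring care --- and the main obstacle --- is justifying that the \emph{procedure} used to generate a random RR set (a reverse BFS/DFS from $v$ that reveals each encountered edge's liveness coin lazily, on the fly) induces the same distribution on RR sets as first sampling all of $W$ and then extracting the backward-reachable set of $v$. This is a deferred-decisions argument: since the edge coins are independent and the reverse exploration inspects each edge at most once, the lazily revealed coins retain their prior marginals and remain mutually independent, so the set of nodes discovered by the reverse exploration is distributed identically to the set of nodes that reach $v$ via live edges in a fully sampled $W$. Once this is established, the two-line combinatorial identification above completes the proof; no concentration or estimation arguments are needed.
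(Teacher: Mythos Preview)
Your argument is correct and is exactly the standard coupling/deferred-decisions proof of this result. Note, however, that the paper does not supply its own proof of Lemma~\ref{lem:borgs}: it is quoted verbatim from \cite{borgs2014maximizing} and used as a black box to motivate the later reward-equivalence lemmas. So there is nothing to compare against in the paper itself; your write-up simply fills in the well-known justification that the authors chose to cite rather than reproduce.
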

For our objective function, we seek an analog to the RR set definition in the \model setting. Importantly, since RR sets are only concerned with the coverage status of a node $u$ w.r.t.   an RR set $R$, the presence of $u$ in $R$ implies this condition is satisfied. By contrast, our analog must be able to express the \textit{reward} associated with each node present in the set. To overcome this challenge, we introduce the notion of a \emph{Reverse Delayed Reward (RDR)} set for a node $v$ \LL{in a possible world $X$ of our TCIC model.} RDR sets augment the traditional RR sets by including \emph{delayed reward} information associated with each node in the RDR set. They can be viewed as a \emph{weighted} version of RR sets.
\begin{defn}[Reverse Delayed Reward Set]
\label{defn:rdr}
The reverse delayed reward (RDR) set for a node $v$ in $X$ is the set of pairs $(u,\rho_X(v,\{u\}))$ of nodes that can reach $v$ in $X$ and their associated reward (which we interpret as a weight). For each node $u$ in the RDR set, there exists a path $P$ in $X$ from $u$ to $v$ for which all tie-breaks along $P$ are won by $M$ when $\{ u \}$ is initially activated in campaign $M$, achieving reward $\rho_X(v,\{u\})$, where $\rho_X(v,\{u\})$ is the reward computed in the possible world $X$. 
\end{defn}
Note that RDR sets are defined w.r.t.\ a fixed seed set $S_F$ given by a problem instance. \edit{E.g., in Figure~\ref{fig:intro}, let all propagation probabilities be $1$ and let $S_F = \{v_0\}$. Then the RDR set for node $v_3$ in this possible world is $\{(v_{10}, 1), (v_{14}, 1), (v_2, 1)\}$. An $M$ campaign started at any other node either does not reach $v_3$ or reaches it too late.} We make the following important observation: the delayed-distance $dd_X(u,v)$ from $u$ to $v$ in $X$ is necessary, but not sufficient, information for determining the delay of $M$ reaching $v$ w.r.t.\ $F$. For example, the simultaneous propagation of $F$ and $M$ in $X$ can lead to interactions resulting in a node's AW opening before either campaign would have triggered it when propagating independently. As a result, this may lead to a delay value that is not simply $dd_X(u,v) - dd_X(S_F,v)$. Therefore, we cannot compute the reward associated with node $u$ directly from the delayed-distance.

\eat{The accompanying notion of} A random RDR set is defined in a similar fashion to a random RR set where the ``root'' node $v$ is chosen at random from $G$. Since the mitigation objective is not submodular (Theorem \ref{thm:negative}), we cannot apply RDR sets directly for maximizing $\mu(.)$. Instead, we will establish a connection between RDR sets and the bounding functions $\overline\mu(\cdot)$ and $\underline\mu(\cdot)$. 
 
%Unfortunately, due to Theorem \ref{thm:negative}, we can not leverage our newly defined RDR sets for the MM problem under the \model model. The reason is that the mitigation objective exhibits supermodular behaviour under reward function $\rho(\cdot)$ and as a result RDR sets fail to achieve \textit{reward equivalence} (the weighted analog of Lemma \ref{lem:borgs}) with the misinformation mitigation process. Instead, we will establish a connection between RDR sets and the misinformation mitigation process on $G$ for $\overline\mu(\cdot)$ and $\underline\mu(\cdot)$.

\stitle{Reward Equivalence.} \LL{We say a set $S$ \textit{covers} an RDR set $R$ with weight $\omega_{R} > 0$, \eat{abusing notation we denote this as $S \cap R = \omega_R$,} provided  $\exists u\in S$ such that the pair  $(u,\omega_{R})$ appears in $R$ and $\omega_{R}$ is the largest weight over all nodes $u \in S$.  Abusing notation, we write this as $S \cap R = \omega_R$. If there are no pairs $(u,\omega_{R})$ in $R$ such that $u \in S$,  then we define $\omega_{R} = 0$ and say $S$ \textit{does not cover} $R$.} \LL{Reward equivalence for the two bounds are established in Lemmas~\ref{lem:equiv_lower} and \ref{lem:equiv_upper}.}
\begin{lem}
\label{lem:equiv_lower}
Let $S_M$ be a fixed set of nodes, and $v$ be a fixed node. Suppose that we generate an RDR set $\underline R$ for $v$ in a possible world $X$. Let $\varrho_1$ be the probability that $S_M$ covers $\underline R$ with weight $\omega_{\underline R}$, and $\varrho_2$ be the probability that $S_M$, when used as a seed set for campaign $M$, achieves a reward $\omega_{\underline R}$ at $v$ in a propagation process on $G$ w.r.t\ $\underline\mu(\cdot)$. Then, $\varrho_1 = \varrho_2$.
\end{lem}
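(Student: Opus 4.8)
The plan is to reduce the statement to a deterministic identity that holds in every fixed possible world $X$, and then average over the randomness of $X$, in the same spirit as the activation--equivalence lemma (Lemma~\ref{lem:borgs}) but carrying the reward weight along. First I would note that, by the definition of $\underline\mu_X$, the contribution of node $v$ to $\underline\mu_X(S_M)$ is exactly ${\underline m}_v(S_M) = \max_{u \in S_M}\rho_X(v,\{u\})$, with the convention that this maximum is $0$ when no $u\in S_M$ reaches $v$ with positive reward in $X$. So, conditioned on $X$, the event ``$S_M$ achieves reward $\omega$ at $v$ w.r.t.\ $\underline\mu(\cdot)$'' is the deterministic event $\{{\underline m}_v(S_M) = \omega\}$.

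Next I would unpack the RDR side. By Definition~\ref{defn:rdr}, the RDR set $\underline R$ for $v$ in $X$ is precisely the set of pairs $(u,\rho_X(v,\{u\}))$ for nodes $u$ that reach $v$ in $X$ along a path all of whose tie-breaks are won by $M$ when $\{u\}$ seeds campaign $M$; note that a node $u$ appears in $\underline R$ with positive weight iff the single seed $u$ can reach $v$ in $X$ via such an $M$-winning path, which is exactly the condition making $\rho_X(v,\{u\})>0$. By the definition of ``covers'', $S_M$ covers $\underline R$ with weight $\omega_{\underline R}$, the largest weight attached to any $u\in S_M$ appearing in $\underline R$ (and $\omega_{\underline R}=0$ if none appears). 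Combining these, conditioned on $X$ we get $\omega_{\underline R} = \max_{u\in S_M}\rho_X(v,\{u\}) = {\underline m}_v(S_M)$, so ``$S_M$ covers $\underline R$ with weight $\omega$'' is the same deterministic event $\{{\underline m}_v(S_M)=\omega\}$. Taking probabilities over $X$ then gives $\varrho_1 = \Pr_X[{\underline m}_v(S_M)=\omega] = \varrho_2$.

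The step requiring care --- and the main obstacle --- is justifying that ``generate a random RDR set for $v$'' truly corresponds to ``sample a possible world $X$ and read off $\{(u,\rho_X(v,\{u\}))\}$'' with the correct joint law. The backward construction only ever reveals the coordinates of $X$ touched by the reverse exploration from $v$ (live/blocked status of edges, the meeting lengths $h_e^M$, the AW lengths $\tau_x$, and the in-neighbour permutations $\pi_x$), and it must sample or condition these from exactly the marginals dictated by the \model model; the unrevealed coordinates affect neither $\underline R$ nor any $\rho_X(v,\{u\})$. Here I would invoke the principle of deferred decisions: because $m_F\equiv 1$, the set $R_F^X$ and the arrival times $t_v^F$ are fixed by the live-edge coordinates alone, while $t_v^M$ under singleton seed $u$ and the window $\tau_v$ are determined by the revealed coordinates, so the reward values computed during the backward pass coincide with those in the forward world. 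I do not expect to need the interaction analysis that makes $\mu$ non-submodular (Theorem~\ref{thm:negative}), precisely because in $\underline\mu$ every seed is treated as an isolated singleton whose only ``opponent'' along a path is campaign $F$; hence no coordination among $M$-seeds is possible, and backward reachability with $M$-winning tie-breaks is exactly the right object, which is what makes the equivalence clean for $\underline\mu$.
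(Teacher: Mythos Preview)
Your proposal is correct and follows essentially the same logic as the paper: fix a possible world $X$, identify both the coverage weight of $\underline R$ and the per-node contribution to $\underline\mu_X(S_M)$ with $\max_{u\in S_M}\rho_X(v,\{u\})$, and then average over $X$. The paper's proof carries out the same identification but unpacks it explicitly into the two reward cases via delayed-distance thresholds (reward $2$ iff some $M$-active path has $dd < dd(S_F,v)-\tau_v$, reward $1$ iff the minimum lands in $[dd(S_F,v)-\tau_v,\,dd(S_F,v)+\tau_v]$), whereas you work directly at the level of $\rho_X(v,\{u\})$ without opening that box; both arrive at the same conclusion. Your third paragraph on deferred decisions is sound but goes beyond what this lemma needs, since here $\underline R$ is defined as a deterministic functional of $X$ (Definition~\ref{defn:rdr}) rather than via the algorithmic backward construction; that coupling concern properly belongs to \S\ref{sec:rdr_gen}.
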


\begin{proof}
Let $X$ be the possible world sampled from $G$. Then, $\varrho_2$ equals the probability that $v$ is reached from $S_M$ in $X$ and the resulting reward is $\omega_{\underline R}$. In other words, there exists a path $P$ in $X$ from $u \in S_M$ to $v$ for which all tie-breaks along $P$ are won by $M$, i.e.,\ $M$-active. Note, while several nodes $u \in S_M$ may reach $v$, the reward $\omega_{\underline R}$ is computed with respect to the first arrival at $v$. Thus, the probability of achieving reward 2 is equal to the probability there exists an $M$-active path in $X$ with delayed-distance less than $dd(S_F,v) - \tau_v$ from some $u \in S_M$ to $v$. Further, the probability of achieving reward 1 is equal to the probability there exists an $M$-active path in $X$ with delayed-distance in the range $[dd(S_F,v) - \tau_v, dd(S_F,v) + \tau_v]$ and there are no $M$-active paths with delayed-distance less than $dd(S_F,v) - \tau_v$ from \LL{any} $u \in S_M$ to $v$.

Meanwhile, by Definition \ref{defn:rdr}, $\varrho_1$ equals the probability that $X$ contains a path $P$ that ends at $v$ and starts at some node $u \in S_M$ for which all tie-breaks along $P$ are won by $M$ such that $\rho_X(v, \{u\}) = \omega_{\underline R}$ and $\rho_X(v, \{u'\}) \leq \omega_{\underline R}$ for all $u' \neq u \in S_M$. Here, the reward $\omega_{\underline R} = 2$ if there exists an $M$-active path in $X$ with delayed-distance less than $dd(S_F,v) - \tau_v$ from some $u \in S_M$ to $v$. On the other hand, $\omega_{\underline R} = 1$ if there exists an $M$-active path in $X$ with delayed-distance in the range $[dd(S_F,v) - \tau_v, dd(S_F,v) + \tau_v]$ and there are no $M$-active paths with delayed-distance less than $dd(S_F,v) - \tau_v$ from \LL{any}  $u \in S_M$ to $v$. It follows that $\varrho_1 = \varrho_2$.
\end{proof}

\begin{lem} 
\label{lem:equiv_upper}
Let $S_M$ be a fixed set of nodes, and $v$ be a fixed node. Suppose that we generate an RDR set $\overline R$ for $v$ in a possible world $X'$ where $X'$ is the modified possible world constructed from possible world $X$ sampled from $G$. Let $\varrho_1$ be the probability that $S_M$ covers $\overline R$ with weight $\omega_{\overline R}$, and $\varrho_2$ be the probability that $S_M$, when used as a seed set for campaign $M$, achieves a reward $\omega_{\overline R}$ at $v$ in a propagation process on $G$ w.r.t.\ $\overline\mu(\cdot)$. Then, $\varrho_1 = \varrho_2$.
\end{lem}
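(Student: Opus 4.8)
The plan is to mirror the proof of Lemma~\ref{lem:equiv_lower} almost verbatim, the only new ingredient being the case split on the overlap indicator $\mathbb{I}_v^{OL}$ and the correspondingly different step function. First I would fix the possible world $X$ sampled from $G$ together with the induced modified world $X'$ (obtained by setting $h_e^M=1$ on every critical edge $e\in E_C$ and leaving edge liveness, the $h_e^F$, the $\tau_v$, and the $\pi_v$ untouched); since the modification never affects campaign $F$, we have $R_F^{X'}=R_F^X$, so if $v\notin R_F^X$ both $\varrho_1$ and $\varrho_2$ are the probability of zero weight and we are done. Hence assume $v\in R_F^X$. On the process side, $\varrho_2$ is the probability that, running $S_M$ as the seed of $M$ on $G$ and scoring $v$ with the appropriate branch of Eq.~(\ref{eqn:upper_mit}), the value $\max_{u\in S_M}(\cdot)$ at $v$ equals $\omega_{\overline R}$; by Definition~\ref{defn:rdr} (read in $X'$) and the definition of covering, $\varrho_1$ is the probability that $X'$ contains, for some $u\in S_M$, an $M$-active path from $u$ to $v$ whose recorded weight is $\omega_{\overline R}$ and such that every other $u'\in S_M$ records weight $\le\omega_{\overline R}$. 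So it suffices to show that the two conditioning events coincide world-by-world.

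Condition further on $X$ (hence on $X'$ and on $\mathbb{I}_v^{OL}$). In the branch $\mathbb{I}_v^{OL}=0$ the reward at $v$ is $\rho_{X'}(v,\{u\})$, which is the same three-valued step function as $\rho_X$ but evaluated in $X'$; the argument of Lemma~\ref{lem:equiv_lower} then applies line for line with $X$ replaced by $X'$: the first $M$-active arrival at $v$ from $S_M$ in $X'$ determines whether the delay falls below $-\tau_v$, inside $[-\tau_v,\tau_v]$, or above $\tau_v$ relative to $F$'s arrival, and these three events are exactly ``cover with weight $2$'', ``cover with weight $1$'', and ``do not cover''. In the branch $\mathbb{I}_v^{OL}=1$ the reward is the lifted two-valued function $\rho'_{X'}(v,\{u\})$ of Eq.~(\ref{eqn:reward_prime}), whose single step sits at $t_v^{F,X'}+\tau_v$; hence $\omega_{\overline R}=2$ iff some $u\in S_M$ has an $M$-active path to $v$ in $X'$ arriving no later than $\tau_v$ after $F$, and $\omega_{\overline R}=0$ otherwise, which is again precisely ``cover with weight $2$'' versus ``do not cover'' for an $\overline R$ whose weights were assigned by $\rho'_{X'}$. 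Taking the maximum over $u$ on the covering side corresponds to the first arrival (equivalently, the best singleton) on the process side, so in each branch the events match and $\varrho_1=\varrho_2$; averaging over $X$ finishes the proof.

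The one place that needs genuine care, rather than transcription, is the coupling between the singleton-seed convention inside $\max_{u\in S_M}$ and the overlap indicator used to choose the branch of Eq.~(\ref{eqn:upper_mit}): I must make sure the RDR-set generation in $X'$ records, for each candidate node $u$, the weight dictated by the very same $\mathbb{I}_v^{OL}$-conditioned rule, so that ``$S_M$ covers $\overline R$ with weight $\omega_{\overline R}$'' is literally the event ``$\max_{u\in S_M}$ of the branch-selected singleton reward equals $\omega_{\overline R}$''. A secondary subtlety is that, because the delayed-distance alone is not sufficient to recover the reward (AW openings can be triggered early by the interaction of $F$ and $M$, as already noted after Definition~\ref{defn:rdr}), the weight stored with each $u$ must be the reward obtained by actually simulating the singleton-$M$ process in $X'$; and I must check that zeroing out $h_e^M$ on critical edges only shortens $M$'s paths along edges $F$ also traverses and never flips a tie-break that an $M$-active path would otherwise win, so that ``$M$-active path'' means the same thing in the RDR construction and in the upper-bound process. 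This is the only point where the specific construction of $X'$ --- as opposed to an arbitrary perturbation of $X$ --- is used.
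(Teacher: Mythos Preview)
Your proposal is correct and follows essentially the same approach as the paper: fix the modified world $X'$, unfold both $\varrho_1$ and $\varrho_2$ as events about $M$-active paths from $S_M$ to $v$ in $X'$ and the delay of the first such arrival relative to $F$, and observe that these events coincide. The only presentational difference is that you organise the argument as an explicit case split on $\mathbb{I}_v^{OL}$ (delegating the non-overlap branch to Lemma~\ref{lem:equiv_lower} verbatim and handling the overlap branch with the single-step $\rho'_{X'}$), whereas the paper instead enumerates the reward-$2$ and reward-$1$ events directly as unions over overlapping and non-overlapping path conditions; the underlying matching of events is identical, and your flagged subtleties about the overlap-indicator coupling and the effect of zeroing $h_e^M$ on critical edges are not addressed any more carefully in the paper's own proof.
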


%The argument is analogous to that of Lemma~\ref{lem:equiv_lower}, with some additional case analysis due to the form of Equation \ref{eqn:upper_mit}. 

\begin{proof}
Let $X'$ be the modified possible world constructed from possible world $X$ sampled from $G$. Then, $\varrho_2$ equals the probability that $v$ is reached from $S_M$ in $X'$ and the resulting reward is $\omega_{\overline R}$. In other words, there exists a path $P$ in $X'$ from some $u \in S_M$ to $v$ for which all tie-breaks along $P$ are won by $M$, i.e.\ $M$-active. Note, while several nodes $u \in S_M$ may reach $v$, the reward $\omega_{\overline R}$ is computed with respect to the first arrival at $v$. Now, notice that the reward achieved by each $u$ depends on the overlap indicator. Thus, the probability of achieving reward 2 is equal to the sum of the following two probabilities: (i) the probability that there exists an $M$-active path in $X'$ with delayed-distance less than $dd(S_F,v) - \tau_v$ from some $u \in S_M$ to $v$ that does not overlap with any path from $S_F$ to $v$ and (ii) the probability there exists an $M$-active path in $X'$ with delayed-distance less than $dd(S_F,v) + \tau_v$ from some $u \in S_M$ to $v$ that overlaps with some path from $S_F$ to $v$. Further, the probability of achieving reward 1 is equal to the probability there exists an $M$-active path in $X'$ with delayed-distance in the range $[dd(S_F,v) - \tau_v, dd(S_F,v) + \tau_v]$ and there are no $M$-active paths with delayed-distance less than $dd(S_F,v) - \tau_v$ from some $u \in S_M$ to $v$ that does not overlap with any path from $S_F$ to $v$. 

Meanwhile, by Definition \ref{defn:rdr}, $\varrho_1$ equals the probability that $X'$ contains a path $P$ that ends at $v$ and starts at some node $u \in S_M$ for which all tie-breaks along $P$ are won by $M$ such that $\rho_X(v, \{u\}) = \omega_{\overline R}$ and $\rho_X(v, \{u'\}) \leq \omega_{\overline R}$ for all $u' \neq u \in S_M$. Note, the function $\rho_X(\cdot)$ in Definition \ref{defn:rdr} is replaced by $\rho'_X(\cdot)$ depending on the status of $\mathbb{I}_{v}^{OL}$. Here, the reward $\omega_{\overline R} = 2$ if one of two events happen: (i) there exists an $M$-active path in $X'$ with delayed-distance less than $dd(S_F,v) - \tau_v$ from some $u \in S_M$ to $v$ that does not overlap with any path from $S_F$ to $v$ or (ii) there exists an $M$-active path in $X'$ with delayed-distance less than $dd(S_F,v) + \tau_v$ from some $u \in S_M$ to $v$ that overlaps with some path from $S_F$ to $v$. On the other hand, $\omega_{\overline R} = 1$ if there exists an $M$-active path in $X'$ with delayed-distance in the range $[dd(S_F,v) - \tau_v, dd(S_F,v) + \tau_v]$ and there are no $M$-active paths with delayed-distance less than $dd(S_F,v) - \tau_v$ from some $u \in S_M$ to $v$ that does not overlap with any path from $S_F$ to $v$. It follows that $\varrho_1 = \varrho_2$.
\end{proof}

Lemmas \ref{lem:equiv_lower} and \ref{lem:equiv_upper} extend Lemma \ref{lem:borgs} by establishing a connection between the probability of a node receiving a particular reward value and RDR coverage weights. As remarked in \S~\ref{sec:prob_defn}, in place of step reward, we could use arbitrary non-increasing reward functions if more fine-grained control over the desirable behaviour is required. We can find the corresponding bounding functions needed for SA using the step functions corresponding to the Riemann sums \cite{hughes2020calculus} traditionally used in numerical integration. The tightness of the bounding functions improves as the number of subintervals increases. Notably, some care is required in ensuring the resulting bounding functions are submodular where consecutive step sizes must monotonically increase/decrease.

\section{Importance Sampling}
\label{sec:importance_sampling}
In this section, we describe how importance sampling can be used in our framework to reduce the sample complexity by analyzing the variance of the random variables associated with our unbiased estimator. Having established reward equivalence for both our upper and lower bounding functions, all of the analysis in this section applies to both functions. Therefore, to simplify the exposition, we describe the idea behind importance sampling and the resulting reverse sampling framework for a single abstracted objective $\sigma(\cdot)$ which could be \LL{instantiated as the upper or the lower bounding function, $\underline{\mu}(\cdot)$ or $\overline{\mu}(\cdot)$, of the mitigation objective.}  

\stitle{Unbiased Estimators.} Unlike the IM problem, where all nodes are candidates to be influenced, in the the MM problem, only those nodes that are influenced by the misinformation are candidates to contribute reward. \LL{Unlike targeted IM \cite{li2015real}, these nodes are not known a priori nor can they be precomputed.} As such, the uniform sampling approach leveraged by random RR sets for the IM problem is not directly useful for our setting. In theory, we can apply Rejection Sampling (RS) by selecting source nodes $v$ for our random RDR sets uniformly at random from $G$ and define the corresponding random RDR set as empty if $v \not\in R_F^X$. However, RS is best suited when the target probability is high and becomes less practical as the events become rarer.
%the for estimating low-probability events.
%where $R_F^X$ is the set of nodes reachable from campaign $F$ in possible world $X$.

A more sophisticated approach that yields improved sample efficiency for estimating rare events is Importance Sampling (IS). IS has been successfully leveraged for the \emph{targeted} IM problem \cite{li2015real}. IS estimates the expected value of a function $f$ in a probability space $P$ via sampling from another \emph{proposal distribution} $Q$, then re-weights the samples by an \emph{importance factor} for unbiased estimation. When applying IS, ideally $Q$ is chosen to support efficient sampling.

\stitle{IS for RDR Sets.} To apply IS in our setting we select the root for a random RDR set uniformly at random from the set $R_F^X$.  This ensures that the corresponding RDR set is non-empty. In other words, let $P$ be the probability space of RDR sets generated when the root node is selected uniformly at random from $G$. Then, we define $Q$ as a subspace of $P$ that corresponds to the space of \emph{only} non-empty samples of $P$. Specifically, those RDR sets for which the root $v \in R_F^X$. It remains to define an appropriate importance factor to ensure we have an unbiased estimator for $\sigma(S_M)$. We let $INF_F$ denote the expected number of activated \emph{non-seed} nodes due to seed set $S_F$ in the absence of any mitigating campaign. In other words, $INF_F$ is the expected influence of campaign $F$ in the absence of any mitigating campaign while ignoring the activation of seed nodes. Consider a set $S$ and a random RDR set $R_i(v)$ rooted at $v$ generated with IS as defined above. Define the following random variable:
\begin{equation}
Y_i(S) = \begin{cases}
\LL{S\cap R_i(v) =  \omega_{R_i(v)}} &\mbox{if } S \: \text{covers} \: R_i(v) \\
0 & \mbox{otherwise}
\end{cases}
\end{equation}
Then, we have the following lemma.
\begin{lem}
\label{lem:IS_equiv}
Given a random RDR set $R_i(v)$ generated with importance sampling rooted at $v$, for any set $S \subseteq V$, we have, $\sigma(S) = \mathbb{E}[Y_i(S)] \cdot INF_F$.
\end{lem}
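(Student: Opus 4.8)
The plan is to obtain the identity by composing three ingredients: the reward‑equivalence Lemmas~\ref{lem:equiv_lower} and~\ref{lem:equiv_upper} (which play here the role Lemma~\ref{lem:borgs} plays in the IM setting), linearity of expectation to pass between the node‑wise rewards and the global objective, and the elementary change‑of‑measure identity for conditioning, which is exactly what the passage from $P$ to $Q$ amounts to. Throughout, let $\sigma(\cdot)$ denote whichever of $\underline\mu(\cdot)$, $\overline\mu(\cdot)$ is under discussion and $\rho^\sigma_X(v,S)$ the matching per‑node reward, so that $\sigma(S)=\mathbb{E}_X\big[\sum_{v\in V}\rho^\sigma_X(v,S)\big]=\sum_{v\in V}\mathbb{E}_X[\rho^\sigma_X(v,S)]$.

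First I would analyze the ``plain'' uniform‑root (rejection‑style) estimator. Let $P$ be the law under which a world $X$ is sampled and a root $v$ is drawn uniformly from $V$, and then the RDR set $R(v)$ is built; write $\omega_{R(v)}(S)\in\{0,1,2\}$ for the weight with which $S$ covers $R(v)$ (equal to $0$ if $S$ does not cover it). For a fixed root, $\mathbb{E}_X[\omega_{R(v)}(S)]=\sum_{\omega\in\{1,2\}}\omega\cdot\Pr_X[S\text{ covers }R(v)\text{ with weight }\omega]$, and by Lemma~\ref{lem:equiv_lower} (resp.\ Lemma~\ref{lem:equiv_upper}) each probability on the right equals the probability that $S$, used as an $M$‑seed set, achieves reward $\omega$ at $v$ w.r.t.\ $\sigma(\cdot)$; hence $\mathbb{E}_X[\omega_{R(v)}(S)]=\mathbb{E}_X[\rho^\sigma_X(v,S)]$. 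Averaging over the uniform root gives $\mathbb{E}_{P}[\omega_{R(v)}(S)]=\tfrac1n\sum_{v\in V}\mathbb{E}_X[\rho^\sigma_X(v,S)]=\tfrac1n\,\sigma(S)$.

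Next I would insert the importance correction. An RDR set for root $v$ is non‑empty exactly when $v$ is activated by $F$ in $X$ and is not an $F$‑seed: if $v\notin R_F^X$ then $\rho_X(v,\{u\})=0$ for every $u$, whereas if $v\in R_F^X\setminus S_F$ the pair $(v,2)$ already lies in the set, since seeding $M$ at $v$ forces $t_v^F=\infty$ and thus $\rho_X(v,\{v\})=2$. Call this event $A$. Because an empty RDR set is covered by no set, $\omega_{R(v)}(S)=0$ off $A$, so $\mathbb{E}_P[\omega_{R(v)}(S)]=\Pr_P[A]\cdot\mathbb{E}_P[\omega_{R(v)}(S)\mid A]$. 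By construction $Q=P(\cdot\mid A)$, and $Y_i(S)$ is precisely $\omega_{R_i(v)}(S)$ for $R_i(v)\sim Q$, so $\mathbb{E}_P[\omega_{R(v)}(S)\mid A]=\mathbb{E}[Y_i(S)]$; moreover $\Pr_P[A]=\Pr_{X,\,v\sim\mathrm{Unif}(V)}[v\in R_F^X\setminus S_F]=\tfrac1n\,\mathbb{E}_X\big[|R_F^X\setminus S_F|\big]=\tfrac1n\,INF_F$ by the definition of $INF_F$. Combining with the previous paragraph, $\tfrac1n\sigma(S)=\tfrac1n\,INF_F\cdot\mathbb{E}[Y_i(S)]$, i.e.\ $\sigma(S)=INF_F\cdot\mathbb{E}[Y_i(S)]$.

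The only delicate point — and the step I expect to be the main obstacle — is the characterization of $A$: one must argue that the RDR set of $v$ is non‑empty \emph{exactly} on $\{v\in R_F^X\setminus S_F\}$, which is where the precise semantics of $\rho_X$ (namely $t_v^F=\infty$ when $v$ is an $M$‑seed, and the zero‑reward convention for $v\notin R_F^X$ as well as for $F$‑seeds) are genuinely used; everything else is linearity of expectation plus the conditioning identity. I note that the same computation goes through verbatim for the truncated / Riemann‑sum reward variants mentioned after Lemma~\ref{lem:equiv_upper}, since the argument uses only reward equivalence and the fact that the coverage weight takes finitely many values.
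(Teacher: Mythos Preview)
Your proof is correct and follows essentially the same route as the paper's: decompose $\sigma(S)$ as a sum of per-node expected rewards, invoke reward equivalence (Lemmas~\ref{lem:equiv_lower} and~\ref{lem:equiv_upper}) to replace $\rho^\sigma_X(v,S)$ by the RDR coverage weight, and then use the change-of-measure factor $\Pr_P[A]=INF_F/n$ to pass from the uniform-root law $P$ to the IS law $Q=P(\cdot\mid A)$. Your organization via the intermediate rejection-style estimator and the explicit conditioning identity is somewhat cleaner than the paper's direct indicator manipulation, but the argument is the same.
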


\begin{proof}
Let $\eta_X(S_F \rightarrow v) = 1$ be an indicator for if $S_F$ can reach $v$ in possible world $X$. Let (1) $\eta_X(S \models v) = 1$ if $S$ can reach $v$ in $X$ along a path for which all tie-breaks are won \LL{by $M$} and achieves reward $1$ at $v$, (2) $\eta_X(S \models v) = 2$ if $S$ can reach $v$ in $X$ along a path for which all tie-breaks are won \LL{by $M$} and achieves reward $2$ at $v$ and (3) \LL{$\eta_X(S \models v) = 0$}  otherwise. When the root $v$ of $R_i$ is selected uniformly at random from $R_F^X$, we have
\begin{align*}
    \sigma(S) &= \sum_v \Big [ 2 \Pr_X ( \eta_X(S_F \rightarrow v) \; \land \; \Pr_X(\eta_X(S \models v) = 2) \Big ] \\
    & \qquad + \Big [ \Pr_X ( \eta_X(S_F \rightarrow v) \; \land \; \Pr_X(\eta_X(S \models v) = 1) \Big ] \\
    &= \sum_v \Pr_X ( \eta_X(S_F \rightarrow v) \cdot \Big [ 2 \Pr_X(\eta_X(S \models v) = 2) \\
    & \qquad \qquad + \Pr_X(\eta_X(S \models v) = 1) \Big ] \\
    &= INF_F \cdot [2 \Pr_{X,v \in R_F^X}(S \cap R_i(v) = 2) \\
    & \qquad \qquad + \Pr_{X,v \in R_F^X}(S \cap R_i(v) = 1)] \\
    &= INF_F \cdot \mathbb{E}_{X,v \in R_F^X}[Y_i(S)]
\end{align*}
Thus, the lemma is proved.
\end{proof}

Lemma \ref{lem:IS_equiv} states that we can estimate the expected reward of the mitigation campaign using random RDR sets generated with IS. Let $\mathcal{R}$ be a collection of $\theta$ random RDR sets generated with IS and let $\mathcal{W}_{\mathcal{R}}(S)$ be the total weight of RDR sets in $\mathcal{R}$ covered by a node set $S$. Then, based on Lemmas \ref{lem:equiv_lower}, \ref{lem:equiv_upper} and \ref{lem:IS_equiv}, we can prove:
%Then, based on Lemmas \ref{lem:equiv_lower}, \ref{lem:equiv_upper} and \ref{lem:IS_equiv}, we can prove that the expected value of $\frac{INF_F}{\theta} \cdot \mathcal{W}_{\mathcal{R}}(S)$ equals the expected mitigation of $S$ in $G$.
\begin{cor}
\label{cor:equiv}
$\mathbb{E} \big [ \frac{\mathcal{W}_{\mathcal{R}}(S)}{\theta} \big ] \cdot INF_F = \sigma(S)$
\end{cor}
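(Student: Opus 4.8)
The plan is to combine the per-node reward equivalence established in Lemmas~\ref{lem:equiv_lower} and~\ref{lem:equiv_upper} with the importance-sampling identity of Lemma~\ref{lem:IS_equiv}, and then simply aggregate over the $\theta$ independent random RDR sets by linearity of expectation. First I would fix the abstracted objective $\sigma(\cdot) \in \{\underline\mu(\cdot), \overline\mu(\cdot)\}$ and the corresponding notion of RDR set. By Lemma~\ref{lem:IS_equiv}, for a \emph{single} random RDR set $R_i(v)$ generated with importance sampling we already have $\sigma(S) = \mathbb{E}[Y_i(S)] \cdot INF_F$, where $Y_i(S)$ equals the covering weight $\omega_{R_i(v)}$ when $S$ covers $R_i(v)$ and $0$ otherwise. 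The only remaining content is to observe that $Y_i(S)$ is precisely the contribution of the $i$-th RDR set to the total covered weight $\mathcal{W}_{\mathcal{R}}(S)$.

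The key steps, in order, would be: (1) recall that $\mathcal{R} = \{R_1(v_1), \dots, R_\theta(v_\theta)\}$ is a collection of $\theta$ i.i.d.\ random RDR sets, each generated with importance sampling (root drawn uniformly from $R_F^X$ over an independently sampled possible world $X$, or $X'$ in the upper-bound case); (2) note that by definition $\mathcal{W}_{\mathcal{R}}(S) = \sum_{i=1}^{\theta} Y_i(S)$, since $S$ contributes weight $\omega_{R_i}$ to exactly those RDR sets it covers and $0$ to the rest, matching the definition of $Y_i(S)$; (3) take expectations and apply linearity: $\mathbb{E}[\mathcal{W}_{\mathcal{R}}(S)] = \sum_{i=1}^{\theta} \mathbb{E}[Y_i(S)] = \theta\, \mathbb{E}[Y_1(S)]$, where the last equality uses that the $R_i$ are identically distributed; (4) substitute the single-set identity $\mathbb{E}[Y_1(S)] \cdot INF_F = \sigma(S)$ from Lemma~\ref{lem:IS_equiv}, giving $\mathbb{E}[\mathcal{W}_{\mathcal{R}}(S)] \cdot INF_F = \theta\, \sigma(S)$; (5) divide both sides by $\theta$ to conclude $\mathbb{E}\big[\tfrac{\mathcal{W}_{\mathcal{R}}(S)}{\theta}\big] \cdot INF_F = \sigma(S)$.

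I do not anticipate a genuine obstacle here — the corollary is essentially a bookkeeping consequence of the preceding lemmas. The one point that warrants a careful sentence is step (2): one must make sure the definition of $\mathcal{W}_{\mathcal{R}}(S)$ (``total weight of RDR sets in $\mathcal{R}$ covered by $S$'') coincides term-by-term with $Y_i(S)$, i.e.\ that the covering convention (largest weight $\omega_{R_i(v)}$ over all $u \in S$ appearing in $R_i$, and $0$ when $S$ does not cover $R_i$) is exactly what $Y_i$ encodes. Since $Y_i(S) = S \cap R_i(v) = \omega_{R_i(v)}$ when $S$ covers $R_i(v)$ and $0$ otherwise, this is immediate from the definitions in the ``Reward Equivalence'' paragraph. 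The only other subtlety is invoking Lemmas~\ref{lem:equiv_lower}/\ref{lem:equiv_upper} implicitly: they are what guarantee that the per-node covering-weight probabilities used inside Lemma~\ref{lem:IS_equiv} indeed correspond to the mitigation process under $\underline\mu$ or $\overline\mu$, so that the abstracted $\sigma(\cdot)$ is legitimately either bounding function. With these remarks the proof is a two-line expectation computation.
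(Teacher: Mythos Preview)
Your proposal is correct and matches the paper's approach: the paper does not spell out a proof but simply states the corollary as following from Lemmas~\ref{lem:equiv_lower}, \ref{lem:equiv_upper}, and~\ref{lem:IS_equiv}, and your argument (writing $\mathcal{W}_{\mathcal{R}}(S)=\sum_i Y_i(S)$, applying linearity of expectation, and invoking Lemma~\ref{lem:IS_equiv}) is precisely the intended two-line derivation.
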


\stitle{Concentration Bounds.} Next, we analyze the random variables associated with random RDR sets generated using IS. In particular, we show they have smaller variances than random RDR sets generated by RS and, as a consequence, fewer samples are required by our reverse sampling framework. Define the random variable $Z_i(S) = \frac{Y_i(S) \cdot INF_F}{n}$. Notice that the means of $Y_i(S)$ and $Z_i(S)$ are $\mathbb{E}[Y_i(S)] = \frac{\sigma(S)}{INF_F}$ and $\mathbb{E}[Z_i(S)] = \mathbb{E}[Y_i(S)] \cdot \frac{INF_F}{n} = \frac{\sigma(S)}{n}$ respectively. If we construct a set of random variables $Z_1(S), \dots, Z_{\theta}(S)$, observe that $\frac{n}{\theta} \sum_{i = 1}^{\theta} Z_i(S)$ is an empirical estimate of $\sigma(S)$. An important challenge is that $INF_F$ is \#P-hard to compute. We overcome this challenge by computing an approximation of $INF_F$, denoted $\hat{INF_F}$, and define the random variable
\begin{equation}
\label{eqn:z_hat}
\hat{Z}_i(S) = \frac{X_i(S) \cdot \hat{INF_F}}{n}    
\end{equation}
where $\mathbb{E}[\hat{Z}_i(S)] = \frac{\sigma(S)}{n} \frac{\hat{INF_F}}{INF_F}$. Notice that estimating $INF_F$ is a standard \emph{influence estimation} task, and as such the random variables associated with estimating it form a \emph{martingale} \cite{tang2015influence}. Thus, we can leverage existing solutions for the influence estimation problem \cite{nguyen2017outward} to efficiently compute an $(\epsilon,\delta)$-approximation of $INF_F$. Suppose we have computed a value for $\INF$ with error $\epsilon'$ that holds with probability $1 - \delta'$. In order to ease the exposition, we introduce the following definition.

%$\hat{Z}_i(S) = \frac{Y_i(S) \cdot \hat{INF_F}}{n}$

\begin{comment}
\begin{defn}[$(\epsilon, \delta)$-approximation]
Let $Y_1, \dots, Y_T$ be i.i.d.\ random variables in $[0,1]$ with mean $\nu_Y$. A Monte Carlo estimator $\hat{\nu}_Y = \frac{1}{T} \sum_{i=1}^{T} Y_i$ is said to be an $(\epsilon,\delta)$-approximation of $\nu_Y$ if $\Pr [ (1 - \epsilon)\nu_Y \leq \hat{\nu}_Y \leq (1+\epsilon)\nu_Y ] \geq 1 - \delta$.
%\begin{equation}
%\label{eqn:eps_delta_est}
%    \Pr [ (1 - \epsilon)\nu_Y \leq \hat{\nu}_Y \leq (1+\epsilon)\nu_Y ] \geq 1 - \delta
%\end{equation}
\end{defn}
\end{comment}

\begin{defn}
The \emph{misinformation sampling error ratio} is defined as $\Gamma = \frac{\INF}{INF_F}$ where $(1 - \epsilon') \leq \Gamma \leq (1 + \epsilon')$ holds with probability at least $1 - \delta'$.
\end{defn}
The variance of $\hat{Z}_i(S)$ satisfies the following inequality.
\begin{prop}
$\Var [\hat{Z}_i(S)] \leq 2 \Gamma \frac{\sigma(S)}{n} \frac{\INF}{n}$.
\end{prop}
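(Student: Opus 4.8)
The plan is to bound the second moment $\mathbb{E}[\hat{Z}_i(S)^2]$ and then use $\Var[\hat Z_i(S)] \le \mathbb{E}[\hat Z_i(S)^2]$. Recall $\hat Z_i(S) = X_i(S)\,\INF/n$, where $X_i(S) = Y_i(S)$ is the covered-weight random variable taking values in $\{0,1,2\}$ (it equals $\omega_{R_i(v)} \in \{1,2\}$ when $S$ covers $R_i(v)$ and $0$ otherwise). First I would note that since $X_i(S) \in \{0,1,2\}$ we have the pointwise inequality $X_i(S)^2 \le 2\,X_i(S)$: indeed $0^2 = 2\cdot 0$, $1^2 = 1 \le 2 = 2\cdot 1$, and $2^2 = 4 = 2 \cdot 2$. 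This is the one genuinely model-specific observation, and it is exactly where the reward values being at most $2$ is used; it is the analogue of the standard $Y^2 \le Y$ trick for $\{0,1\}$-valued RR-set indicators.

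Next I would take expectations. We have
\begin{align*}
\Var[\hat Z_i(S)] \le \mathbb{E}[\hat Z_i(S)^2] = \frac{\INF^2}{n^2}\,\mathbb{E}[X_i(S)^2] \le \frac{\INF^2}{n^2}\cdot 2\,\mathbb{E}[X_i(S)].
\end{align*}
Then I would invoke Lemma~\ref{lem:IS_equiv} (with $X_i = Y_i$), which gives $\mathbb{E}[X_i(S)] = \sigma(S)/INF_F$. Substituting,
\begin{align*}
\Var[\hat Z_i(S)] \le \frac{2\,\INF^2}{n^2}\cdot\frac{\sigma(S)}{INF_F} = 2\,\frac{\sigma(S)}{n}\cdot\frac{\INF}{n}\cdot\frac{\INF}{INF_F} = 2\,\Gamma\,\frac{\sigma(S)}{n}\,\frac{\INF}{n},
\end{align*}
using the definition $\Gamma = \INF/INF_F$, which is the claimed bound.

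The only subtlety — and the thing I would be careful to state precisely rather than an actual obstacle — is the conditioning: the estimator $\hat Z_i$ uses $\INF$, which is itself a random estimate of $INF_F$, so strictly the identity should be read as holding conditionally on the realized value of $\INF$ (which is what the definition of $\Gamma$ and the statement of the proposition implicitly assume; $\INF$ is treated as a fixed quantity once computed, with its error guarantee folded into the $(1-\epsilon') \le \Gamma \le (1+\epsilon')$ event). Under that reading, $\INF$ is a constant and pulls out of all expectations cleanly, and Lemma~\ref{lem:IS_equiv} applies verbatim since it concerns only the randomness in the RDR sampling. So there is no real hard part; the proof is two lines once the pointwise bound $X_i(S)^2 \le 2 X_i(S)$ is in hand, and the main thing to get right is citing Lemma~\ref{lem:IS_equiv} for $\mathbb{E}[X_i(S)]$ and matching the algebra to the $\Gamma\,\frac{\sigma(S)}{n}\,\frac{\INF}{n}$ form.
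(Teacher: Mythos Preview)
Your proof is correct and slightly more direct than the paper's. The paper decomposes the covered-weight variable as $Y_i(S) = Y_i^{(1)} + 2Y_i^{(2)}$ with $Y_i^{(j)}$ the Bernoulli indicator that $S$ covers $R_i$ with weight exactly $j$, then expands $\Var[Y_i^{(1)} + 2Y_i^{(2)}]$, uses $\Cov[Y_i^{(1)},Y_i^{(2)}] \le 0$ (mutual exclusion of the two coverage events), and bounds each $\Var[Y_i^{(j)}] \le \mathbb{E}[Y_i^{(j)}]$ to reach $\mathbb{E}[Y_i^{(1)}] + 4\,\mathbb{E}[Y_i^{(2)}] \le 2\,\mathbb{E}[Y_i(S)]$. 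Your pointwise inequality $X_i(S)^2 \le 2X_i(S)$ for $X_i(S)\in\{0,1,2\}$ collapses all of that into one line and bypasses the covariance step entirely; it is the cleaner argument. The paper's decomposition has the minor advantage that it makes explicit how the bound would change under a different (e.g., multi-step) reward function, whereas your inequality is specific to the max-reward-equals-2 case --- but for the statement as given, your route is preferable. Your remark about conditioning on the realized $\INF$ is also well taken and left implicit in the paper.
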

\begin{proof}
Define the following random variables:
\begin{equation*}
    Y_i^{(1)} = 
    \begin{cases}
    1 &\mbox{if } S \: \text{covers} \: R_i \; \text{with weight 1} \\
    0 & \mbox{otherwise}
    \end{cases} 
\end{equation*}
\begin{equation*}
    Y_i^{(2)} = 
    \begin{cases}
    1 &\mbox{if } S \: \text{covers} \: R_i \; \text{with weight 2} \\
    0 & \mbox{otherwise}
    \end{cases} 
\end{equation*}
Then, we can re-write the random variable $\hat{Z}_i(S)$ as $\hat{Z}_i(S) = \frac{\hat{INF}_F}{n} \cdot (Y_i^{(1)} + 2 Y_i^{(2)})$. As a result, we have
\begin{equation*}
    \mathbb{E}[\hat{Z}_i(S)] = \frac{\hat{INF}_F}{n} \cdot (\mathbb{E}[Y_i^{(1)}] + 2 \mathbb{E}[Y_i^{(2)}])
\end{equation*}
Now, we can bound the variance of $\hat{Z}_i(S)$.
\begin{align*}
\Var [\hat{Z}&_i(S)] = \Var \Big [ \frac{\hat{INF_F}}{n} \cdot (Y_i^{(1)} + 2 Y_i^{(2)}) \Big ] \\
    %&= \frac{\hat{INF_F}^2}{n^2} \Var [Y_i^{(1)} + 2 Y_i^{(2)}] \\
    &= \frac{\hat{INF_F}^2}{n^2} \Big ( \Var [Y_i^{(1)}] + 4 \Var [Y_i^{(2)}] + 4 \Cov [Y_i^{(1)},Y_i^{(2)}] \Big ) \\
    &\leq \frac{\hat{INF_F}^2}{n^2} \Big ( \mathbb{E}[Y_i^{(1)}] + 4 \mathbb{E}[Y_i^{(2)}] \Big )
    %&\leq 2 \cdot \frac{\hat{INF_F}^2}{n^2} \Big ( \mathbb{E}[Y_i^{(1)}] + 2 \mathbb{E}[Y_i^{(2)}] \Big ) \\
    \leq 2 \Gamma \cdot \frac{\sigma(S)}{n} \frac{\hat{INF}_F}{n}
\end{align*}
Notice that the mutual exclusion of the events described by $Y_i^{(1)}$ and $Y_i^{(2)}$ ensures that $\Cov[Y_i^{(1)},Y_i^{(2)}] \leq 0$.
\end{proof}
Now, we seek a form of Chernoff bounds for our newly defined random variables $\hat{Z}_i(S)$ in order to be able to analyze the performance guarantees of our reverse sampling framework. Importantly, we must account for the error associated with the estimation of $INF_F$. This hurdle was not encountered in previous applications of importance sampling to targeted IM \cite{li2015real} since the target set of nodes is \emph{pre-defined}. We make use of martingale-based concentration bounds in the following.
\begin{defn}[Martingale]
A sequence of random variables $Y_1, Y_2, Y_3, \dots$ is a \emph{martingale} if and only if $\mathbb{E}[|Y_i|] < + \infty$ and $\mathbb{E}[Y_i | Y_1, Y_2, \dots, Y_{i-1}] = Y_{i-1}$ for any $i$.
\end{defn}
It is straightforward to show that the random variables $\hat{Z}_i(S)$ form a martingale. Thus, the Chernoff bounds for martingales given in Lemma 2 of \cite{tang2015influence} let us derive the following concentration bounds for the random variables $\hat{Z}_i(S)$ associated with our RDR sets generated with IS by plugging in the variance derived above. Note, we assume that $\hat{INF}_F \leq \frac{n}{2}$ as a necessary boundary condition for the derivation of our concentration bounds. Further, in our experiments the condition always held since an unrealistic number of misinformation seeds would be required to influence over half the networks considered.
\begin{lem}
\label{lem:new_concentration_bounds}
Given a fixed collection of $\theta$ RDR sets $\mathcal{R}$ constructed with importance sampling and a seed set $S$, let $\Lambda(S) = \frac{\hat{INF}_F}{n} \mathcal{W}_{\mathcal{R}}(S)$ be the normalized weighted coverage of $S$ in $\mathcal{R}$. For any $\lambda > 0$ we have,
\begin{equation}
    \Pr \Big [ \Lambda(S) - \sigma(S) \cdot \Gamma \frac{\theta}{n} \geq \lambda \Big ] \leq \exp \Big ( \frac{- \lambda^2}{\frac{2}{3} \lambda + 4 \sigma(S) \Gamma \frac{\theta}{n} \frac{\hat{INF}_F}{n}} \Big )
\end{equation}
\begin{equation}
    \Pr \Big [ \Lambda(S) - \sigma(S) \cdot \Gamma \frac{\theta}{n} \leq - \lambda \Big ] \leq \exp \Big ( \frac{- \lambda^2}{4 \sigma(S) \Gamma \frac{\theta}{n} \frac{\hat{INF}_F}{n}} \Big )
\end{equation}
\end{lem}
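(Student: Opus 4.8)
The plan is to express the normalized weighted coverage $\Lambda(S)$ as a sum of the martingale random variables $\hat{Z}_i(S)$ introduced above and then invoke the martingale Chernoff bounds of Lemma 2 in~\cite{tang2015influence}, supplying as inputs the mean and variance computed earlier in this section. First I would observe that, by definition, $\Lambda(S) = \frac{\hat{INF}_F}{n}\mathcal{W}_{\mathcal{R}}(S) = \sum_{i=1}^{\theta}\hat{Z}_i(S)$, since $\mathcal{W}_{\mathcal{R}}(S)$ is the total weight of covered sets and $\hat{Z}_i(S) = \frac{X_i(S)\hat{INF}_F}{n}$ records the (weighted) contribution of the $i$-th RDR set. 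We already noted that $\mathbb{E}[\hat{Z}_i(S)] = \frac{\sigma(S)}{n}\,\Gamma$ (using $\Gamma = \hat{INF}_F/INF_F$), so $\mathbb{E}[\Lambda(S)] = \theta\,\frac{\sigma(S)}{n}\,\Gamma = \sigma(S)\,\Gamma\,\frac{\theta}{n}$, which is exactly the centering term appearing in the claimed inequalities.

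Next I would verify the hypotheses of the martingale concentration lemma: the sequence $\hat{Z}_1(S),\dots,\hat{Z}_\theta(S)$ is a martingale (already asserted in the text), each variable lies in $[0,\, 2\hat{INF}_F/n] \subseteq [0,1]$ by the boundary assumption $\hat{INF}_F \le n/2$ (this is where that hypothesis is used — it guarantees the bounded-range condition the Chernoff bound requires), and the per-step variance satisfies $\Var[\hat{Z}_i(S)] \le 2\Gamma\frac{\sigma(S)}{n}\frac{\hat{INF}_F}{n}$ by the Proposition just proved, so the cumulative variance over $\theta$ steps is at most $2\Gamma\sigma(S)\frac{\theta}{n}\frac{\hat{INF}_F}{n}$. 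Plugging the mean $\mu := \sigma(S)\Gamma\frac{\theta}{n}$ and this variance bound into the upper-tail form $\Pr[\sum \hat{Z}_i - \mu \ge \lambda] \le \exp\!\big(\frac{-\lambda^2}{\frac{2}{3}\lambda + 2V}\big)$ (with $V$ the variance sum) yields the first inequality, and the lower-tail form $\Pr[\sum \hat{Z}_i - \mu \le -\lambda] \le \exp\!\big(\frac{-\lambda^2}{2V}\big)$ yields the second; here $2V = 4\sigma(S)\Gamma\frac{\theta}{n}\frac{\hat{INF}_F}{n}$, matching the stated denominators.

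The main obstacle — really a bookkeeping subtlety rather than a deep difficulty — is reconciling the fact that $\Gamma$ is itself a random quantity (only guaranteed to lie in $[1-\epsilon',1+\epsilon']$ with probability $1-\delta'$) with the clean deterministic-looking statement of the lemma. The cleanest way to handle this is to condition on the high-probability event that $\hat{INF}_F$ has been computed to within its stated error, so that $\Gamma$ is fixed; then the martingale argument goes through verbatim with this fixed $\Gamma$, and the final failure probability is later combined with $\delta'$ via a union bound in the downstream analysis. A second point to be careful about is that $X_i(S)$ (equivalently $Y_i^{(1)}+2Y_i^{(2)}$) takes values in $\{0,1,2\}$, not $\{0,1\}$, so before applying the Chernoff bound one must either rescale by the factor $2\hat{INF}_F/n$ to land in $[0,1]$ and then undo the scaling, or appeal to the version of Lemma 2 of~\cite{tang2015influence} stated for variables bounded by an arbitrary constant — either route reproduces the displayed bounds after substituting the variance estimate, so I would simply note the rescaling and defer to the cited lemma.
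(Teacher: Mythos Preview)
Your proposal is correct and follows essentially the same route as the paper: express $\Lambda(S)=\sum_{i=1}^{\theta}\hat{Z}_i(S)$, use the mean $\mathbb{E}[\hat{Z}_i(S)]=\Gamma\sigma(S)/n$ and the variance bound from the preceding Proposition, invoke the boundary condition $\hat{INF}_F\le n/2$ to get the bounded-increment parameter $a=1$, and plug $b=2\theta\Gamma\frac{\sigma(S)}{n}\frac{\hat{INF}_F}{n}$ into the martingale Chernoff bounds of~\cite{tang2015influence}. The paper carries this out by explicitly forming the centered martingale $M_i=\sum_{j\le i}(\hat{Z}_j(S)-p)$, but this is just a cosmetic difference from your direct-sum presentation.
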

\begin{proof}
We begin by recalling the concentration bounds for martingales. Let $M_1, M_2, \dots$ be a martingale such that $|M_1| \leq a$, $|M_j - M_{j-1}| \leq a \; \forall j \in [2,\theta]$ and
\begin{equation*}
    \Var[M_1] + \sum_{j=2}^{\theta} \Var[M_j | M_1, M_2, \dots, M_{j-1}] \leq b
\end{equation*}
Where $\Var[\cdot]$ denotes the variance of a random variable. Then, for any $\lambda > 0$,
\begin{equation}
\label{eqn:cb_bound_proof_1}
    \Pr[M_{\theta} - \mathbb{E}[M_{\theta}] \geq \lambda] \leq \exp \Big ( \frac{- \lambda^2}{\frac{2}{3} a \lambda + 2 b} \Big )
\end{equation}
and,
\begin{equation}
\label{eqn:cb_bound_proof_2}
    \Pr[M_{\theta} - \mathbb{E}[M_{\theta}] \leq - \lambda] \leq \exp \Big ( \frac{ - \lambda^2}{2 b} \Big )
\end{equation}
Next, we apply the above concentration bounds on the martingale formed by the random variable $\hat{Z}_i(S)$ associated with our RDR sets generated with importance sampling. First, since each RDR set $R_i$ is generated randomly and independently of all the prior RDR sets, we have
\begin{equation*}
    \mathbb{E}[\hat{Z}_i(S) | \hat{Z}_1(S), \hat{Z}_2(S), \dots, \hat{Z}_{i-1}(S)] = \mathbb{E}[\hat{Z}_i(S)] = \Gamma \frac{\sigma(S)}{n}.
\end{equation*}
Let $p = \Gamma \frac{\sigma(S)}{n}$ and $M_i = \sum_{j=1}^{i} (\hat{Z}_j(S) - p)$. Then, we have $\mathbb{E}[M_i] = 0$ and
\begin{equation*}
    \mathbb{E}[M_i | M_1, M_2, \dots, M_{i-1}] = M_{i-1}
\end{equation*}
Therefore, $M_1, M_2, \dots, M_{\theta}$ is a martingale. We have $|M_1| \leq \frac{2 \INF}{n} \leq 1$ assuming $\INF \leq \frac{n}{2}$ and $|M_j - M_{j-1}| \leq 1 \; \forall j \in [2,\theta]$. We also have
\begin{multline}
    \Var[M_1] + \sum_{j=2}^{\theta} \Var[M_j | M_1, M_2, \dots, M_{j-1}] \\
    \qquad = \sum_{j=1}^{\theta} \Var[\hat{Z}_j(S)] \leq 2 \theta \Gamma \cdot \frac{\sigma(S)}{n} \frac{\INF}{n}
\end{multline}
Plugging into \ref{eqn:cb_bound_proof_1} and \ref{eqn:cb_bound_proof_2} completes the proof.
\end{proof}

We will make use of the above concentration bounds to derive the approximation guarantees of our sampling framework and to establish appropriate parameter settings.
\section{Reverse Sampling Framework}
\label{sec:reverse_sampling}

Recently, Tang et al. \cite{tang2018online} introduced the \emph{OPIM} approach to the online version of the IM problem. Interestingly, an adaptation of \emph{OPIM} to the traditional IM problem yields state-of-the-art performance. Unlike \emph{IMM} \cite{tang2015influence}, which uses the same collection $\mathcal{R}$ of RR sets for constructing the solution seed set $S^*$ and deriving its approximation guarantees, \emph{OPIM} generates a solution on one collection of RR sets $\mathcal{R}_1$ and then derives its approximation guarantees using $\mathcal{R}_1$ and an independent collection of RR sets $\mathcal{R}_2$. In particular, the concentration bounds leveraged by \emph{OPIM} require that $S^*$ be a fixed seed set independent of the RR sets on which it is being evaluated. Intuitively, we can think of $\mathcal{R}_1$ as a set of \emph{nominators} that nominate $S^*$ as the IM solution and $\mathcal{R}_2$ as the set of \emph{assessors} that determine whether $S^*$ is a good enough solution. Notice that, if $S^*$ is not independent of $\mathcal{R}_2$, then the evaluation of $S^*$ could be biased.

We make the following important observation that motivates our framework: the combination of nominators and assessors leveraged by the \emph{OPIM} algorithm does not depend on any particular properties of the IC propagation model, as long as it satisfies activation equivalence (Lemma \ref{lem:borgs}). Thus, the \emph{OPIM} framework can be extended as follows. Specifically, by establishing weighted versions of Lemma \ref{lem:borgs} for our RDR sets in the \model model, we can employ a nominator-assessor framework to derive approximation guarantees for our upper and lower bounding mitigation functions. Thus, our algorithm and approximation guarantees apply to any propagation models satisfying reward equivalence (Lemmas \ref{lem:equiv_lower} and \ref{lem:equiv_upper}).

\stitle{MM Solution.} Our framework for finding solutions to the upper and lower bounding objectives to the MM problem, \emph{NAMM (Nominators and Assessors for Misinformation Mitigation)} is presented in Algorithm \ref{alg:namm}. During its execution, \emph{NAMM} invokes the standard greedy algorithm for weighted maximum coverage to obtain a size-$k$ seed set $S^{*}$. All of the analysis in this section applies to both of the upper or lower bounding functions, $\underline{\mu}(\cdot)$ or $\overline{\mu}(\cdot)$, of the mitigation objective. Therefore, to simplify the exposition, we refer to a single abstracted objective $\sigma(\cdot)$ in our algorithm description.
\begin{algorithm}
\caption{NAMM}\label{alg:namm}
\algorithmicrequire $G$, $\epsilon \geq 0$, $0 < \delta < 1$, $k$ \\
\algorithmicensure An $(1-1/e-\epsilon)$-optimal solution $S^{*}$
\begin{algorithmic}[1]
\State $\delta' \gets \frac{\delta}{9}$, $\epsilon' \gets \frac{\epsilon}{2}$, $\Delta \gets \delta - \delta'$
\State compute $\INF$; an $(\epsilon',\delta')$-approximation of $INF_F$
\State set $N_{max}$ according to (\ref{eqn:n_max})
\State $N_0 = N_{max} \cdot \epsilon^2 \frac{LB}{n}$;
\State generate two collections, $\mathcal{R}_1$ and $\mathcal{R}_2$, of random RDR sets where $|\mathcal{R}_1| = |\mathcal{R}_2| = N_0$;
\State $i_{max} = \lceil \log_2 (\frac{N_{max}}{N_0}) \rceil$;
\For{$i \gets 1$ \textbf{to} $i_{max}$}
\State $S^{*} \gets \texttt{WeightedMaxCover}(\mathcal{R}_1,k,n)$
\State compute $\sigma^{l}(S^{*})$ and $\sigma^{u}(S^{o})$ by (\ref{eqn:lower_bound}) and (\ref{eqn:upper_bound}) respectively, setting $\delta_1 = \delta_2 = \Delta / (3 i_{max})$
\State $\alpha \gets \sigma^{l}(S^{*}) / \sigma^{u}(S^{o})$
\If{$\alpha \geq (1 - 1/e - \epsilon)$ \textbf{or} $i = i_{max}$}
\State \textbf{return} $S^{*}$
\EndIf
\State double the sizes of $\mathcal{R}_1$ and $\mathcal{R}_2$ with new random RDR sets
\EndFor
\end{algorithmic}
\end{algorithm}
The approximation guarantee of \emph{NAMM} relies on two critical lemmas that establish an upper bound on the mitigation of the optimal solution $S^{o}$ ($\sigma^{u}(S^{o})$) and a lower bound on the mitigation of the current solution ($\sigma^{l}(S^{*})$). After establishing an upper bound, $N_{max}$, on the number of RDR sets required in the worst-case, we show that \emph{NAMM} achieves a $(1 - 1/e - \epsilon)$-approximation by leveraging our newly defined concentration bounds (Lemma \ref{lem:new_concentration_bounds}). Importantly, the error associated with estimating $INF_F$ must be carefully accounted for in deriving upper and lower bounds to ensure the desired approximation guarantees.

\underline{\etitle{Deriving $N_{max}$.}} Tang et al.\ \cite{tang2015influence} derived a threshold for the maximum number of RR sets required by their IMM algorithm to ensure that a $(1 - 1/e - \epsilon)$ approximation guarantee holds with probability at least $1- \delta$. We derive the corresponding threshold for the MM problem. Notably, the derivation requires a careful accounting of the error associated with estimating $INF_F$.

Let $\mathcal{R}$ be a collection of $\theta$ random RDR sets generated with IS, $\mathcal{W}_{\mathcal{R}}(S)$ be the total weight of RDR sets in $\mathcal{R}$ covered by a node set $S$, and $F_{\mathcal{R}}(S) = \frac{1}{\theta} \sum_i \hat{Z}_i(S)$. Now, consider the size-$k$ node set $S^o$ with the maximum expected mitigation. Let $OPT = \sigma(S^o)$. We have $n \cdot F_{\mathcal{R}}(S^o)$ is an unbiased estimator of $\Gamma \cdot OPT$. As such, we can use our newly derived Chernoff bounds to show the following.
\begin{lem}
\label{lem:theta_one}
Let $\delta_1 \in (0,1)$, $\epsilon_1 > 0$ and
\begin{equation}
    \theta_1 = \frac{4 \INF \log(\frac{1}{\delta_1})}{\epsilon_1^2 (1 - \epsilon') OPT}.
\end{equation}
If $\theta \geq \theta_1$, then $n \cdot F_{\mathcal{R}}(S^o) \geq (1 - \epsilon')(1-\epsilon_1) \cdot OPT$ holds with at least $1 - \delta_1$ probability.
\end{lem}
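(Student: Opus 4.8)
The plan is to apply the lower-tail martingale concentration bound (the second inequality of Lemma~\ref{lem:new_concentration_bounds}) to the fixed size-$k$ set $S^o$, which is legitimate since $S^o$ is the optimal set for $\sigma(\cdot)$ and does not depend on the random RDR sets in $\mathcal{R}$. First I would instantiate that lemma with $S = S^o$, so that $\Lambda(S^o) = \frac{\hat{INF}_F}{n}\mathcal{W}_{\mathcal{R}}(S^o) = n \cdot F_{\mathcal{R}}(S^o) \cdot \frac{1}{1}$ — more precisely, unwind the definitions $F_{\mathcal{R}}(S^o) = \frac{1}{\theta}\sum_i \hat Z_i(S^o)$ and $\hat Z_i(S^o) = \frac{X_i(S^o)\hat{INF}_F}{n}$ so that $n\cdot F_{\mathcal{R}}(S^o) = \frac{\hat{INF}_F}{\theta}\mathcal{W}_{\mathcal{R}}(S^o) = \frac{n}{\theta}\Lambda(S^o)$. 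The mean term in the lemma is $\sigma(S^o)\cdot\Gamma\frac{\theta}{n} = OPT\cdot\Gamma\frac{\theta}{n}$.

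Next I would choose the deviation parameter $\lambda = \epsilon_1 \cdot OPT \cdot (1-\epsilon')\frac{\theta}{n}$, i.e. a multiplicative deviation measured against the lower end $(1-\epsilon')OPT$ of the range in which $\Gamma\cdot OPT$ is guaranteed to lie (recall $\Gamma \ge 1-\epsilon'$ with probability $\ge 1-\delta'$, by the definition of the misinformation sampling error ratio). Substituting into the lower-tail bound $\Pr[\Lambda(S^o) - OPT\,\Gamma\frac{\theta}{n} \le -\lambda] \le \exp\!\big(\frac{-\lambda^2}{4\,OPT\,\Gamma\frac{\theta}{n}\frac{\hat{INF}_F}{n}}\big)$ and using $\Gamma \le 1+\epsilon'$ (or more crudely $\Gamma\le 2$, or bounding $\Gamma$ in the denominator by the same factor appearing in $\lambda$) to simplify the exponent, the exponent becomes at most $-\frac{\epsilon_1^2(1-\epsilon')^2 OPT^2 \theta^2/n^2}{4\,OPT\,\Gamma\,\theta/n\cdot\hat{INF}_F/n} = -\frac{\epsilon_1^2(1-\epsilon')^2\,\theta\,OPT}{4\Gamma\hat{INF}_F}$; after absorbing one factor of $(1-\epsilon')/\Gamma \le 1$ this is at most $-\frac{\epsilon_1^2(1-\epsilon')\,\theta\,OPT}{4\hat{INF}_F}$. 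Setting this $\le \log\delta_1$ and solving for $\theta$ yields exactly $\theta \ge \theta_1 = \frac{4\hat{INF}_F\log(1/\delta_1)}{\epsilon_1^2(1-\epsilon')OPT}$. On the complementary event, $\Lambda(S^o) > OPT\,\Gamma\frac{\theta}{n} - \lambda = \frac{\theta}{n}\big(\Gamma\,OPT - \epsilon_1(1-\epsilon')OPT\big) \ge \frac{\theta}{n}(1-\epsilon')(1-\epsilon_1)OPT$, using $\Gamma \ge 1-\epsilon'$; multiplying through by $n/\theta$ gives $n\cdot F_{\mathcal{R}}(S^o) \ge (1-\epsilon')(1-\epsilon_1)OPT$, which is the claim.

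The main obstacle I anticipate is bookkeeping around the random quantity $\Gamma$: it appears both inside the concentration bound (as part of the mean $OPT\,\Gamma\frac{\theta}{n}$ and the variance) and in the target inequality, and the $(1-\epsilon')$ factor in the conclusion comes precisely from conditioning on (or taking a union bound against) the event $\{\Gamma \ge 1-\epsilon'\}$. Care is needed to ensure the exponent is bounded \emph{uniformly} over the allowed range of $\Gamma$ — the cleanest route is to bound $\Gamma$ from above by the same $(1+\epsilon')$ (or simply note the ratio $(1-\epsilon')/\Gamma \le 1$ always holds in the relevant regime) so that the $\Gamma$ in the variance denominator only helps. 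A secondary subtlety is the boundary condition $\hat{INF}_F \le n/2$ needed for the martingale increments to satisfy $|M_1| \le 1$, which is already assumed in Lemma~\ref{lem:new_concentration_bounds} and carries over here. Modulo these points the derivation is a direct substitution into the already-established Chernoff bound.
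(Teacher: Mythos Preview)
Your overall plan --- instantiate the lower-tail bound of Lemma~\ref{lem:new_concentration_bounds} with the fixed set $S^o$ and read off the threshold on $\theta$ --- is exactly what the paper does. However, the simplification of the exponent contains a direction-of-inequality error that breaks the derivation.

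With your choice $\lambda = \epsilon_1(1-\epsilon')\,OPT\,\tfrac{\theta}{n}$, the exponent of the concentration bound is
\[
E \;=\; -\,\frac{\epsilon_1^2(1-\epsilon')^2\,\theta\,OPT}{4\,\Gamma\,\INF}.
\]
You then write ``after absorbing one factor of $(1-\epsilon')/\Gamma \le 1$ this is at most $-\frac{\epsilon_1^2(1-\epsilon')\,\theta\,OPT}{4\INF}$.'' But replacing the factor $(1-\epsilon')/\Gamma\le 1$ by $1$ makes $|E|$ \emph{larger}, i.e.\ it yields $E \ge -\frac{\epsilon_1^2(1-\epsilon')\,\theta\,OPT}{4\INF}$, not $\le$. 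Equivalently, $\Gamma$ sits in the \emph{denominator} of $-E$, so to lower-bound $-E$ (which is what you need to conclude $\exp(E)\le\delta_1$) you would have to \emph{upper}-bound $\Gamma$, and $\Gamma\ge 1-\epsilon'$ is of no help. Using $\Gamma\le 1+\epsilon'$ instead gives a threshold with an extra $(1+\epsilon')/(1-\epsilon')$ factor, not the stated $\theta_1$.

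The paper avoids this by taking $\lambda = \epsilon_1\,\Gamma\,OPT\,\tfrac{\theta}{n}$, i.e.\ a multiplicative deviation relative to the \emph{actual} mean $\Gamma\,OPT$ rather than its lower envelope $(1-\epsilon')\,OPT$. With that choice the $\Gamma$'s cancel except for one left in the \emph{numerator} of $-E$:
\[
-E \;=\; \frac{\epsilon_1^2\,\Gamma\,\theta\,OPT}{4\,\INF}\;\ge\;\frac{\epsilon_1^2\,(1-\epsilon')\,\theta\,OPT}{4\,\INF},
\]
now using $\Gamma\ge 1-\epsilon'$ in the correct direction, and the claimed $\theta_1$ follows. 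The paper first bounds $\Pr[n F_{\mathcal{R}}(S^o)\le (1-\epsilon')(1-\epsilon_1)OPT]\le\Pr[n F_{\mathcal{R}}(S^o)\le (1-\epsilon_1)\Gamma\,OPT]$ (valid since $(1-\epsilon')\le\Gamma$) so that the complementary event still delivers the target inequality. Your ``complementary event'' paragraph is fine; only the choice of $\lambda$ needs to be changed to $\epsilon_1\Gamma\,OPT\,\theta/n$ for the bookkeeping to close.
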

\begin{proof}
Let $p = \mathbb{E}[F_{\mathcal{R}}(S^o)]$. By Lemma \ref{lem:IS_equiv} and Equation \ref{eqn:z_hat},
\begin{equation*}
    p = \mathbb{E}[F_{\mathcal{R}}(S^o)] = \Gamma \cdot \frac{\sigma(S^o)}{n} = \Gamma \cdot \frac{OPT}{n}.
\end{equation*}
By Lemma \ref{lem:new_concentration_bounds},
\begin{align*}
    \Pr[n & \cdot F_{\mathcal{R}}(S^o) \leq (1-\epsilon')(1-\epsilon_1) \cdot OPT] \\
        &= \Pr[n \cdot F_{\mathcal{R}}(S^o) \leq (1-\epsilon')(1-\epsilon_1) \cdot \frac{n p}{\Gamma}] \\
        &= \Pr[\theta \cdot F_{\mathcal{R}}(S^o) \leq (1-\epsilon')(1-\epsilon_1) \cdot \frac{\theta p}{\Gamma}] \\
        &\leq \Pr[\theta \cdot F_{\mathcal{R}}(S^o) \leq \Gamma (1-\epsilon_1) \cdot \frac{\theta p}{\Gamma}] \\
        &= \Pr \Big [ \sum_i \hat{Z}_i(S^o) - \theta p \leq -\epsilon_1 \theta p \Big ] \\
        &= \Pr \Big [ \sum_i \hat{Z}_i(S^0) - \sigma(S^o) \Gamma \frac{\theta}{n}  \leq -\epsilon_1 \sigma(S^o) \Gamma \frac{\theta}{n} \Big ] \\
        &\leq \exp \Big ( \frac{- \epsilon_1^2 \sigma(S^o)^2 \Gamma^2 (\frac{\theta}{n})^2}{4 \sigma(S^o) \Gamma \frac{\theta}{n} \frac{\INF}{n}} \Big ) \\
        &= \exp \Big ( \frac{- \epsilon_1^2 \sigma(S^o) \Gamma \theta}{4 \INF} \Big ) \leq \exp \Big ( \frac{- \epsilon_1^2 \sigma(S^o) (1 - \epsilon') \theta}{4 \INF} \Big ) \\
        &\leq \delta_1.
\end{align*}
Thus, the lemma is proved.
\end{proof}
Suppose that $n \cdot F_{\mathcal{R}}(S^o) \geq (1 - \epsilon')(1-\epsilon_1) \cdot OPT$ holds, by the properties of the greedy approach,
\begin{align}
    n \cdot F_{\mathcal{R}}(S^*) &\geq \Big ( 1-\frac{1}{e} \Big ) \cdot n \cdot F_{\mathcal{R}}(S^o) \nonumber \\
    &\geq \Big ( 1-\frac{1}{e} \Big )(1 - \epsilon')(1-\epsilon_1) \cdot OPT. \label{eqn:seed_mit_bound}
\end{align}
Intuitively, this indicates that the expected mitigation of $S^*$ is likely to be large, since $n \cdot F_{\mathcal{R}}(S^*)$ is an indicator of $\Gamma \sigma(S^*)$. This is formalized in the following lemma.
\begin{lem}
\label{lem:theta_two}
Let $\delta_2 \in (0,1)$, $\epsilon_1 < \epsilon$ and
\begin{equation}
    \theta_2 = \frac{(6 + 4 \epsilon') (1 - 1/e) \cdot n \log(\frac{\binom{n}{k}}{\delta_2})}{3 OPT [\epsilon - (1 - 1/e)(1 - (1 - \epsilon')(1 - \epsilon_1))]^2}.
\end{equation}
If Equation \ref{eqn:seed_mit_bound} holds and $\theta \geq \theta_2$, then with at least $1 - \delta_2$ probability, $\sigma(S^*) \geq (1 - 1/e - \epsilon) \cdot OPT$.
\end{lem}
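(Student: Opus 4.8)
The plan is to follow the template of the IMM-style worst-case analysis (Tang et al.), adapted to the two new features of our setting: our coverage variable $\hat Z_i(\cdot)$ is \emph{weighted}, and its mean is $\Gamma\cdot\sigma(S)/n$ rather than $\sigma(S)/n$ because $INF_F$ is only estimated. The core claim I would establish is a \emph{uniform} statement: if $\theta\ge\theta_2$, then with probability at least $1-\delta_2$, \emph{no} size-$k$ set $S$ with $\sigma(S)<(1-1/e-\epsilon)\cdot OPT$ satisfies $n\cdot F_{\mathcal R}(S)\ge(1-1/e)(1-\epsilon')(1-\epsilon_1)\cdot OPT$. Granting this, the lemma is immediate: by hypothesis $S^*$ \emph{does} satisfy $n\cdot F_{\mathcal R}(S^*)\ge(1-1/e)(1-\epsilon')(1-\epsilon_1)\cdot OPT$ (Eq.~\eqref{eqn:seed_mit_bound}), so on the good event $S^*$ cannot be one of these ``bad'' sets, i.e.\ $\sigma(S^*)\ge(1-1/e-\epsilon)\cdot OPT$.

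To prove the uniform statement I would take a union bound over the at most $\binom{n}{k}$ size-$k$ bad sets, analyzing each fixed one. Fix $S$ with $\sigma(S)<(1-1/e-\epsilon)\cdot OPT$ and recall $\Lambda(S)=\frac{\INF}{n}\mathcal W_{\mathcal R}(S)=\sum_i\hat Z_i(S)=\theta\cdot F_{\mathcal R}(S)$, with $\mathbb E[\Lambda(S)]=\Gamma\frac{\theta}{n}\sigma(S)$, where $\Gamma=\INF/INF_F\in[1-\epsilon',1+\epsilon']$ on the probability-$\ge 1-\delta'$ event already charged to the top-level algorithm, so that here $\Gamma$ may be treated as a fixed constant in that range. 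Then the event $n\cdot F_{\mathcal R}(S)\ge(1-1/e)(1-\epsilon')(1-\epsilon_1)\cdot OPT$ is the event $\Lambda(S)\ge\frac{\theta}{n}(1-1/e)(1-\epsilon')(1-\epsilon_1)\cdot OPT$, and since $\sigma(S)<(1-1/e-\epsilon)\cdot OPT$ this forces a deviation
\[
\Lambda(S)-\Gamma\tfrac{\theta}{n}\sigma(S)\ \ge\ \lambda\ :=\ \tfrac{\theta}{n}\cdot OPT\cdot\big[\epsilon-(1-1/e)\big(1-(1-\epsilon')(1-\epsilon_1)\big)\big],
\]
which is strictly positive by the hypothesis $\epsilon_1<\epsilon$, the $\Gamma$-error being absorbed into the $(1-\epsilon')(1-\epsilon_1)$ slack. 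I would then apply the upper-tail martingale bound of Lemma~\ref{lem:new_concentration_bounds} with this $\lambda$, upper-bounding the denominator $\tfrac23\lambda+4\sigma(S)\Gamma\tfrac{\theta}{n}\tfrac{\INF}{n}$ via $\sigma(S)\le OPT$, $\Gamma\le 1+\epsilon'$, and the boundary condition $\INF\le n/2$, which yields a failure probability of the form $\exp\big(-\Omega\big(\tfrac{\theta}{n}\,OPT\cdot G^2\big)\big)$, with $G$ the bracketed gap above.

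Finally I would require this tail probability to be at most $\delta_2/\binom{n}{k}$ and solve for $\theta$: substituting $\lambda$ and the denominator bound into the exponent and rearranging reproduces exactly $\theta_2=\dfrac{(6+4\epsilon')(1-1/e)\,n\log(\binom{n}{k}/\delta_2)}{3\,OPT\,[\epsilon-(1-1/e)(1-(1-\epsilon')(1-\epsilon_1))]^2}$ — the numerator constant $6+4\epsilon'$ and the factor $3$ coming from the $\tfrac23\lambda+2b$ shape of the martingale bound combined with the $(1+\epsilon')$- and $n/2$-bounds on the variance term, and the $(1-1/e)$ from $OPT\ge\sigma(S^*)/(1-1/e)$ when reinserting $S^*$. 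A union bound over the $\le\binom{n}{k}$ bad sets then makes the overall failure probability $\le\delta_2$, and combining with the reduction in the first paragraph completes the proof.

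The main obstacle is the joint handling of two complications absent from prior RR-set arguments. First, $S^*$ is the greedy maximiser on the \emph{same} sample collection $\mathcal R$, hence data-dependent, so no single concentration inequality applies to it and we must pay a $\binom{n}{k}$ union bound over ``bad'' sets. Second, and more delicate, our unbiased estimator targets $\Gamma\cdot\sigma(S)$ rather than $\sigma(S)$ because $INF_F$ is only known up to an $(\epsilon',\delta')$-approximation; the slack factors $(1-\epsilon')(1-\epsilon_1)$ in Eq.~\eqref{eqn:seed_mit_bound} and the $\Gamma$-weighted mean and variance in Lemma~\ref{lem:new_concentration_bounds} must be threaded through consistently so that the residual error collapses into the single clean $\epsilon$ in $(1-1/e-\epsilon)$ and so that $\theta_2$ stays finite (which is precisely where $\epsilon_1<\epsilon$ enters). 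Getting the bookkeeping of these $\epsilon'$ and $\epsilon_1$ terms right — rather than any individual inequality — is the crux.
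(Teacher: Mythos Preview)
Your approach is essentially identical to the paper's: define ``bad'' size-$k$ sets as those with $\sigma(S)<(1-1/e-\epsilon)\cdot OPT$, union-bound over the at most $\binom{n}{k}$ of them, and for each fixed bad $S$ apply the upper-tail bound of Lemma~\ref{lem:new_concentration_bounds} with $\lambda=\epsilon_2\cdot\tfrac{\theta}{n}\cdot OPT$ (the paper names the gap $\epsilon_2$ explicitly via Eq.~\eqref{eqn:eps_two}) to show its probability of satisfying Eq.~\eqref{eqn:seed_mit_bound} is at most $\delta_2/\binom{n}{k}$. One small correction to your bookkeeping: the $(1-1/e)$ factor in the numerator of $\theta_2$ does not come from ``$OPT\ge\sigma(S^*)/(1-1/e)$ when reinserting $S^*$'' but from upper-bounding the Chernoff denominator $\tfrac{2}{3}\lambda+4\sigma(S)\Gamma\tfrac{\theta}{n}\tfrac{\INF}{n}$ via $\sigma(S)<(1-1/e-\epsilon)\cdot OPT\le(1-1/e)\cdot OPT$, $\epsilon_2\le(1-1/e)$, and $\INF\le n/2$.
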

\begin{proof}
Let $S$ be an arbitrary size-$k$ seed set. We say $S$ is \emph{bad} if $\sigma(S) < (1 - 1/e - \epsilon) \cdot OPT$. To prove the lemma, we show that each bad size-$k$ seed set has at most $\delta_2 / \binom{n}{k}$ probability to be returned by applying the greedy algorithm to a collection of $\theta_2$ RDR sets. This suffices to establish the lemma because (1) there exist only $\binom{n}{k}$ bad size-$k$ seed sets, and (2) if each of them has at most $\delta_2 / \binom{n}{k}$ probability to be returned, then by the union bound, there is at least $1 - \delta_2$ probability that none of them is output by the greedy algorithm.

Consider any bad size-$k$ seed set $S$. Let $p = \mathbb{E}[F_{\mathcal{R}}(S)] = \Gamma \frac{\sigma(S)}{n}$. We have
\begin{align}
\Pr[n & \cdot F_{\mathcal{R}}(S) - \Gamma \sigma(S) \geq \epsilon_2 \cdot OPT] \nonumber \\
    &= \Pr[n \cdot F_{\mathcal{R}}(S) \geq \Gamma \sigma(S) + \epsilon_2 \cdot OPT] \nonumber \\
    &\geq \Pr[n \cdot F_{\mathcal{R}}(S) \geq (1+\epsilon') (1 - 1/e - \epsilon) OPT + \epsilon_2 \cdot OPT] \nonumber \\
    &= \Pr[n \cdot F_{\mathcal{R}}(S) \geq ((1+\epsilon') (1 - 1/e - \epsilon) + \epsilon_2) \cdot OPT] \label{eqn:bad_seed_bound}
\end{align}
We set $\epsilon_2$ such that the multiplicative factor of $OPT$ in Equation \ref{eqn:bad_seed_bound} is equal to the one in Equation \ref{eqn:seed_mit_bound}.
\begin{equation}
\label{eqn:eps_two}
\epsilon_2 = (1-\epsilon')(1-\epsilon_1)(1 - 1/e) - (1+\epsilon')(1 - 1/e - \epsilon).
\end{equation}
Then, we can apply our Chernoff bounds by re-writing $\Pr[n \cdot F_{\mathcal{R}}(S) - \Gamma \sigma(S) \geq \epsilon_2 \cdot OPT] = \Pr[\theta \cdot F_{\mathcal{R}}(S) - \Gamma \sigma(S) \frac{\theta}{n} \geq \epsilon_2 \cdot \frac{\theta}{n} \cdot OPT]$ and letting $\lambda = \epsilon_2 \cdot \frac{\theta}{n} \cdot OPT$ to get
\begin{align*}
    \text{r.h.s. of Eq. \ref{eqn:bad_seed_bound}} &\leq \exp \Bigg ( \frac{- \lambda^2}{\frac{2}{3} \lambda + 4 \sigma(S) \Gamma \frac{\theta}{n} \frac{\INF}{n}} \Bigg ) \\
    &= \exp \Bigg ( \frac{- \epsilon_2^2 \cdot \theta^2 \cdot OPT^2}{n^2 (\frac{2}{3} \frac{\epsilon_2 \cdot \theta \cdot OPT}{n} + 4 \sigma(S) \Gamma \frac{\theta}{n} \frac{\INF}{n})} \Bigg ) \\
    &\leq \exp \Bigg ( \frac{- \epsilon_2^2 \cdot \theta^2 \cdot OPT^2}{\frac{2}{3} \epsilon_2 n \theta \cdot OPT + 4 \theta (1 - 1/e -\epsilon) OPT \cdot \Gamma \cdot \INF} \Bigg ) \\
    &\leq \exp \Bigg ( \frac{- \epsilon_2^2 \theta OPT}{\frac{2}{3} \epsilon_2 n + 4 (1 - 1/e -\epsilon) (1+\epsilon') \INF} \Bigg ) \\
    &\leq \exp \Bigg ( \frac{- \epsilon_2^2 \theta OPT}{\frac{2n}{3} (1 - 1/e)(3+2\epsilon')} \Bigg ) \\
    &\leq \frac{\delta_2}{\binom{n}{k}}.
\end{align*}
Therefore, the lemma holds.
\end{proof}
Lemmas \ref{lem:theta_one} and \ref{lem:theta_two} lead to the following theorem.
\begin{thm}
\label{thm:n_max}
Given any $\epsilon_1$, $\epsilon'$ such that $\epsilon_1 \leq \epsilon$ and any $\delta_1, \delta_2 \in (0,1)$ with $\delta_1 + \delta_2 + \delta' \leq \delta$, setting $\theta \geq \argmax\{ \theta_1, \theta_2 \}$ ensures that the node selection phase of \emph{NAMM} returns a $(1 - 1/e - \epsilon)$-approximate solution with at least $1 - \delta$ probability.
\end{thm}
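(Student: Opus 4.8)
The plan is to obtain Theorem~\ref{thm:n_max} as a union bound over the three sources of failure that enter the analysis: (a) the $(\epsilon',\delta')$-estimation of $INF_F$ in line~2 of \emph{NAMM}, (b) the ``good lower bound on $S^o$'' event of Lemma~\ref{lem:theta_one}, and (c) the ``$S^*$ is not bad'' event of Lemma~\ref{lem:theta_two}. The key structural point is that the hypotheses of Lemmas~\ref{lem:theta_one} and~\ref{lem:theta_two} are both cardinality conditions on the RDR collection, so a single choice $\theta \ge \argmax\{\theta_1,\theta_2\}$ activates both lemmas at once; this is precisely what the threshold $N_{max}$ in~(\ref{eqn:n_max}) encodes, after replacing the unknown $OPT$ by a valid lower bound and fixing $\epsilon',\epsilon_1$ (e.g.\ $\epsilon'=\epsilon/2$, $\epsilon_1\le\epsilon$ as in \emph{NAMM}) so that $\epsilon_2$ in~(\ref{eqn:eps_two}) is positive and $\theta_2$ is finite.

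Concretely, I would first condition on the event $\mathcal{E}_0 = \{(1-\epsilon')\le \Gamma \le (1+\epsilon')\}$, which holds with probability at least $1-\delta'$ by the definition of the misinformation sampling error ratio; every statement in Lemmas~\ref{lem:theta_one} and~\ref{lem:theta_two} (and the quantities $\theta_1,\theta_2$) is to be read on $\mathcal{E}_0$, so $\Gamma$-error is accounted for exactly once. On $\mathcal{E}_0$, since $\theta\ge\theta_1$, Lemma~\ref{lem:theta_one} gives $n\cdot F_{\mathcal{R}}(S^o)\ge (1-\epsilon')(1-\epsilon_1)\cdot OPT$ with probability at least $1-\delta_1$; call this event $\mathcal{E}_1$. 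On $\mathcal{E}_1$, the $(1-1/e)$ guarantee of the greedy weighted max-cover step applied to $\mathcal{R}$ yields~(\ref{eqn:seed_mit_bound}). Then, since $\theta\ge\theta_2$ and~(\ref{eqn:seed_mit_bound}) holds, Lemma~\ref{lem:theta_two} gives $\sigma(S^*)\ge(1-1/e-\epsilon)\cdot OPT$ with probability at least $1-\delta_2$; call this $\mathcal{E}_2$. On $\mathcal{E}_0\cap\mathcal{E}_1\cap\mathcal{E}_2$ the claimed approximation holds, and by the union bound the failure probability is at most $\delta'+\delta_1+\delta_2\le\delta$, which is exactly the hypothesis. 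Finally, since the doubling loop of \emph{NAMM} stops no later than when $|\mathcal{R}_1|=|\mathcal{R}_2|$ reaches $N_{max}\ge\argmax\{\theta_1,\theta_2\}$, the node selection phase indeed returns a $(1-1/e-\epsilon)$-approximate seed set with probability at least $1-\delta$.

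The main obstacle I expect is the conditioning/dependence bookkeeping rather than any new inequality. Two points need care: (i) $S^*$ is data-dependent on $\mathcal{R}$, yet Lemma~\ref{lem:theta_two} still applies to it because its proof is a \emph{uniform} bound — a union bound over all $\binom{n}{k}$ bad size-$k$ sets — so it controls the random output regardless of how $S^*$ depends on $\mathcal{R}$; I would state this explicitly to justify chaining Lemma~\ref{lem:theta_one} into Lemma~\ref{lem:theta_two}. (ii) The conclusion of Lemma~\ref{lem:theta_two} is phrased conditionally on~(\ref{eqn:seed_mit_bound}), which is not an event about $\mathcal{R}$ alone but a deterministic consequence of $\mathcal{E}_1$ plus the greedy guarantee; so the three events are naturally nested ($\mathcal{E}_2$ is evaluated on the subset where $\mathcal{E}_0,\mathcal{E}_1$ hold), and the union bound over their complements is legitimate. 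I would also verify that the constraints $\epsilon_1\le\epsilon$ and $\delta_1+\delta_2+\delta'\le\delta$ assumed in the theorem statement are exactly what the computation above consumes, and note that the concrete settings in Algorithm~\ref{alg:namm} ($\delta'=\delta/9$, $\epsilon'=\epsilon/2$, etc.) are one feasible instantiation.
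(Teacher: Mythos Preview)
Your proposal is correct and follows essentially the same approach as the paper: a union bound combining the failure events of Lemma~\ref{lem:theta_one} and Lemma~\ref{lem:theta_two}, with the observation that $\theta \ge \argmax\{\theta_1,\theta_2\}$ activates both lemmas simultaneously. If anything, you are more explicit than the paper's own proof about conditioning on the $\Gamma$-estimation event $\mathcal{E}_0$ (the paper's terse argument writes $1-\delta_1-\delta_2 \ge 1-\delta$ without separately naming the $\delta'$ contribution, even though the theorem hypothesis includes it), and about why the data-dependence of $S^*$ on $\mathcal{R}$ is handled by the uniform union bound inside Lemma~\ref{lem:theta_two}.
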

\begin{proof}
By Lemma \ref{lem:theta_two}, $\sigma(S^*) \geq (1 - 1/e - \epsilon) \cdot OPT$ holds with at least $1 - \delta_2$ probability under the condition that Equation \ref{eqn:seed_mit_bound} holds. And by Lemma \ref{lem:theta_one}, Equation \ref{eqn:seed_mit_bound} holds with at least $1 - \delta_1$ probability. By the union bound, $\sigma(S^*) \geq (1 - 1/e - \epsilon) \cdot OPT$ holds with at least $1 - \delta_1 - \delta_2 \geq 1 - \delta$ probability. Thus, the theorem is proved.
\end{proof}
Now, a natural question is, how should we select $\epsilon'$, $\epsilon_1$, $\delta_1$, and $\delta_2$ that adhere to the conditions of Theorem \ref{thm:n_max} in order to minimize $\theta$? Assume that $OPT$ is known. We are trying to minimize $\theta^* = \argmax \{ \theta_1, \theta_2 \}$ subject to $\delta_1 + \delta_2 + \delta' \leq \delta$. Following an approach similar to that of \cite{tang2015influence}, we set $\delta_1 = \delta_1 = \frac{4}{9} \delta$ and $\delta' = \frac{1}{9} \delta$ and set $\theta_1 = \theta_2$ to derive an approximately minimal value for $\theta^*$. Solving for $\epsilon_1$ and plugging the result back into $\theta_1$ yields
\begin{equation}
\theta^* \leq \frac{8 n (3 + \epsilon') (1 - 1/e) [\ln \frac{9}{4 \delta} + \ln \binom{n}{k}]}{3 \cdot OPT [\epsilon(1 + \epsilon') - 2 \epsilon' (1 - 1/e)]^2}.
\end{equation}
Thus, we have the following result.
\begin{lem}
\label{lem:n_max}
Let $\mathcal{R}$ be a collection of random RDR sets and $S^*$ be a size-$k$ seed set generated by applying the greedy algorithm on $\mathcal{R}$. For fixed $\epsilon$, $\epsilon'$, and $\delta$, if $\delta' \leq \frac{\delta}{9}$ and
\begin{equation*}
    |\mathcal{R}| \geq \frac{8 n (3 + \epsilon') (1 - 1/e) [\ln \frac{9}{4 \delta} + \ln \binom{n}{k}]}{3 \cdot OPT [\epsilon(1 + \epsilon') - 2 \epsilon' (1 - 1/e)]^2}
\end{equation*}
then $S^*$ is a $(1-1/e-\epsilon)$-approximate solution with at least $1-\delta$ probability.
\end{lem}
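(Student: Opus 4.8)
The plan is to obtain this bound as a direct corollary of Theorem~\ref{thm:n_max} by committing to a concrete, near-optimal choice of the free parameters $\epsilon_1,\epsilon',\delta_1,\delta_2$. Recall Theorem~\ref{thm:n_max} certifies a $(1-1/e-\epsilon)$-approximation with probability at least $1-\delta$ whenever $\epsilon_1\le\epsilon$, $\delta_1+\delta_2+\delta'\le\delta$, and $|\mathcal{R}|\ge\max\{\theta_1,\theta_2\}$, with $\theta_1,\theta_2$ as in Lemmas~\ref{lem:theta_one} and~\ref{lem:theta_two}. Hence it suffices to exhibit a parameter setting under which $\max\{\theta_1,\theta_2\}$ is at most the displayed quantity.

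First I would split the failure budget as $\delta_1=\delta_2=\tfrac{4}{9}\delta$ and $\delta'=\tfrac{1}{9}\delta$, which makes $\delta_1+\delta_2+\delta'=\delta$ and is consistent with the hypothesis $\delta'\le\delta/9$. With this choice $\log(1/\delta_1)=\ln\tfrac{9}{4\delta}$ and $\log(\binom{n}{k}/\delta_2)=\ln\binom{n}{k}+\ln\tfrac{9}{4\delta}$, so both thresholds share the logarithmic factor $\ln\tfrac{9}{4\delta}+\ln\binom{n}{k}$ (the $\theta_1$ factor is a subterm of the $\theta_2$ factor, so inflating $\theta_1$ to carry the extra $\ln\binom{n}{k}$ is harmless). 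Next, following the balancing idea used for IMM~\cite{tang2015influence}, I would treat $\epsilon_1\in(0,\epsilon]$ as a tuning knob and set $\theta_1=\theta_2$ so that $\max\{\theta_1,\theta_2\}$ is minimized: since $\theta_1\propto 1/\epsilon_1^2$ is decreasing in $\epsilon_1$ while $\theta_2\propto 1/[\epsilon-(1-1/e)(1-(1-\epsilon')(1-\epsilon_1))]^2$ is increasing in $\epsilon_1$, the equation $\theta_1(\epsilon_1)=\theta_2(\epsilon_1)$ has a unique root in the admissible range and that root realizes the min.

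The substantive step is to solve $\theta_1=\theta_2$ for $\epsilon_1$ in closed form and feed the result back into $\theta_1$ (equivalently $\theta_2$). Clearing denominators turns the equality into a quadratic in $\epsilon_1$; solving it, then simplifying — using the boundary condition $\INF\le n/2$ to replace $\INF$ by $n/2$ wherever it appears as a multiplicative factor, and collecting the $(1-1/e)$ and $\epsilon'$ terms — yields
\[
\theta^{*}=\max\{\theta_1,\theta_2\}\le \frac{8n(3+\epsilon')(1-1/e)\,[\ln\tfrac{9}{4\delta}+\ln\binom{n}{k}]}{3\cdot OPT\,[\epsilon(1+\epsilon')-2\epsilon'(1-1/e)]^2}.
\]
Along the way one checks that the root $\epsilon_1$ produced satisfies $\epsilon_1\le\epsilon$ — this holds because for the parameter ranges of interest $\epsilon(1+\epsilon')-2\epsilon'(1-1/e)>0$, which pins $\epsilon_1$ strictly below $\epsilon$ — so Theorem~\ref{thm:n_max} is applicable. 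Finally, since the hypothesis $|\mathcal{R}|\ge\theta^{*}$ gives $|\mathcal{R}|\ge\max\{\theta_1,\theta_2\}$, Theorem~\ref{thm:n_max} with the above parameter choices delivers that $S^{*}$ is $(1-1/e-\epsilon)$-approximate with probability at least $1-\delta$, as claimed.

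I expect the main obstacle to be purely the algebraic balancing: executing the quadratic solve for $\epsilon_1$ and then coaxing the resulting unwieldy expression into the tidy denominator $[\epsilon(1+\epsilon')-2\epsilon'(1-1/e)]^2$, while simultaneously tracking the $\INF\le n/2$ substitution and confirming $\epsilon_1\le\epsilon$. All probabilistic content is already supplied by Theorem~\ref{thm:n_max} and Lemmas~\ref{lem:theta_one}–\ref{lem:theta_two}; what remains mirrors the IMM derivation~\cite{tang2015influence}, with the only genuinely new wrinkle being the bookkeeping of the importance-sampling error factor $\Gamma=\INF/INF_F$ that propagates through $\theta_1$.
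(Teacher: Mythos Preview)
Your proposal is correct and follows essentially the same route as the paper: set $\delta_1=\delta_2=\tfrac{4}{9}\delta$, $\delta'=\tfrac{1}{9}\delta$, balance $\theta_1=\theta_2$ by solving for $\epsilon_1$, substitute back, and invoke Theorem~\ref{thm:n_max}. The additional bookkeeping you flag (inflating the $\theta_1$ log factor to match $\theta_2$, invoking $\INF\le n/2$, and verifying $\epsilon_1\le\epsilon$) is exactly the hidden algebra the paper glosses over with ``solving for $\epsilon_1$ and plugging the result back into $\theta_1$ yields\ldots''.
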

Define $\Delta = \delta - \delta'$. Based on Lemma \ref{lem:n_max}, we define a value $N_{max}$:
\begin{equation}
\label{eqn:n_max}
    N_{max} = \frac{8 n (3 + \epsilon') (1 - 1/e) [\ln \frac{27}{4 \Delta} + \ln \binom{n}{k}]}{3 \cdot LB [\epsilon(1 + \epsilon') - 2 \epsilon' (1 - 1/e)]^2},
\end{equation}
which is an upper bound on the number of RDR sets needed to guarantee a $(1-1/e-\epsilon)$ approximation with at least $1-\Delta/3$ probability when $LB \leq OPT$ is a lower bound of the optimal mitigation.

\underline{\etitle{Deriving $LB$.}} In \emph{OPIM}, a crude lower bound of $k$ is used that ensures the number of iterations is bounded by $O(\log n)$. By contrast, due to the objective of the MM problem, the same lower bound is no longer valid. Specifically, each seed selected in the IM problem ensures that, independent of the graph structure, the seed will contribute one activation to the objective. \LL{However}, the objective of the MM problem requires that a node $v$ \textit{would} have been reached by $F$ \LL{in order for} $M$ to achieve any reward at $v$. \textit{As such, the ability for seeds in $S_M$ to provide reward is intimately connected to the set $S_F$ and the graph structure.} A pessimistic lower bound can be derived by considering the probability that a randomly selected node is reached by $F$.

To tackle this, we adopt ideas from the classical \emph{Maximum Influence Arborescence (MIA)} \cite{wang2012scalable} approach to the IM problem to derive a lower bound on $OPT$. To recall, the MIA framework assumes that influence only travels via the paths of maximum influence in the network and leverages the resulting structures to estimate influence spread. A \emph{MIA} structure constructed from a seed set $S$ computes an activation probability $ap(v)$ for each node $v$ in the arborescence. Notably, the assumption that influence only spreads through the maximum influence paths implies that the resulting activation probabilities lower bound the true activation probabilities. To see this, notice that influence spreading through paths \emph{other} than the maximum influence path can only increase the probability that any node becomes activated. We are interested in a lower bound $LB$ on the mitigation of $k$ seeds which depends critically on the behaviour of the spread of $S_F$. Our idea is to use a \emph{MIA} to lower bound the influence of $S_F$. Notice that the $ap(v)$ value gives a direct avenue to estimating the expected reward achieved at $v$ by selecting $v$ as a seed for the mitigating campaign. In particular, we select the top-$k$ nodes at depth $1$ in a \emph{MIA} constructed for $S_F$, ranked by activation probabilities. Thus, the expected mitigation of the selected nodes $S_{LB}$ is at least $LB \geq \sum_{v \in S_{LB}} ap(v)$. We replace $OPT$ in Equation \ref{eqn:n_max} with $LB$ and note that this setting ensures that the number of iterations is still bounded by $O(\log n)$.

\underline{\etitle{Deriving $\sigma^{l}(S^{*})$.}} Let $\Lambda_2(S^*) = \frac{\INF}{n} \mathcal{W}_{\mathcal{R}_2}(S^*)$ be the weighted coverage of $S^*$ in $\mathcal{R}_2$ and $\theta_2 = |\mathcal{R}_2|$. We have the following result.
\begin{lem}
\label{lem:lower_bound}
For any $\delta \in (0,1)$, we have
\begin{multline}
    \Pr \bigg [ \sigma(S^*) \geq \Bigg ( \bigg ( \sqrt{ \Lambda_2(S^*) + \frac{25 a}{36}} - \sqrt{a} \bigg )^2 - \frac{a}{36} \Bigg ) \frac{n}{\theta_2 (1+\epsilon')} \bigg ] \\
        \qquad \qquad \geq 1 - \delta
\end{multline}
where $a = \ln(1/\delta)$.
\end{lem}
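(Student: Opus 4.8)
The plan is to follow the lower-bound argument underlying \emph{OPIM} \cite{tang2018online}, adapted to our weighted RDR sets and, crucially, to the importance-sampling normalization by $\INF$. The conceptual starting point is that the candidate $S^*$ is produced by greedy applied to the nominator collection $\mathcal{R}_1$, hence is \emph{independent} of the assessor collection $\mathcal{R}_2$. Conditioning on $\mathcal{R}_1$ (equivalently, on $S^*$), the partial sums $M_i=\sum_{j\le i}\big(\hat{Z}_j(S^*)-\Gamma\,\sigma(S^*)/n\big)$ taken over the RDR sets of $\mathcal{R}_2$ form a martingale with mean $0$, increments bounded by $1$ (invoking the boundary condition $\INF\le n/2$ together with $|\hat{Z}_i|\le 2\INF/n$), and total conditional variance at most $2\theta_2\,\Gamma\,\tfrac{\sigma(S^*)}{n}\tfrac{\INF}{n}$ by the variance Proposition. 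Since $\Lambda_2(S^*)=\tfrac{\INF}{n}\mathcal{W}_{\mathcal{R}_2}(S^*)=\sum_{i=1}^{\theta_2}\hat{Z}_i(S^*)$ has mean $\mu:=\Gamma\,\sigma(S^*)\,\theta_2/n$, the upper-tail bound of Lemma~\ref{lem:new_concentration_bounds} (with $\theta=\theta_2$) is directly available: $\Pr[\Lambda_2(S^*)-\mu\ge\lambda]\le\exp\!\big(-\lambda^2/(\tfrac23\lambda+4\mu\,\tfrac{\INF}{n})\big)\le\exp\!\big(-\lambda^2/(\tfrac23\lambda+4\mu)\big)$, using $\INF\le n$.

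The key step is to invert this tail bound, solving for a lower bound on $\mu$ (hence on $\sigma(S^*)$) in terms of the observed $\Lambda_2(S^*)$. Set $a=\ln(1/\delta)$ and $g(x)=\big(\sqrt{x+\tfrac{25a}{36}}-\sqrt{a}\big)^2-\tfrac{a}{36}$; a direct computation rewrites this as $g(x)=x+\tfrac{5a}{3}-\sqrt{4ax+\tfrac{25a^2}{9}}$, shows that $g$ is strictly increasing on the range $x>\tfrac{11a}{36}$ on which it is non-negative, and verifies that $\lambda(\mu):=g^{-1}(\mu)-\mu=\tfrac{a}{3}+\sqrt{\tfrac{a^2}{9}+4a\mu}$ is exactly the positive root of $\lambda^2-\tfrac23 a\lambda-4a\mu=0$, i.e. of $\lambda^2/(\tfrac23\lambda+4\mu)=a$. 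Plugging $\lambda=\lambda(\mu)$ into the tail bound above then gives $\Pr[\Lambda_2(S^*)-\mu\ge\lambda(\mu)]\le e^{-a}=\delta$.

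To finish, I would translate the failure event of the claimed inequality into a tail event for $\Lambda_2(S^*)$. On the (separately established, and union-bounded at the level of \emph{NAMM}'s analysis) event $\Gamma\le 1+\epsilon'$ we have $\sigma(S^*)=\mu n/(\Gamma\theta_2)\ge \mu n/((1+\epsilon')\theta_2)$, so the claimed bound $\sigma(S^*)\ge g(\Lambda_2(S^*))\cdot\tfrac{n}{\theta_2(1+\epsilon')}$ can fail only if $g(\Lambda_2(S^*))>\mu$. This already forces $g(\Lambda_2(S^*))>0$, hence $\Lambda_2(S^*)>\tfrac{11a}{36}$ where $g$ is invertible and strictly increasing; moreover $g^{-1}(\mu)=\mu+\lambda(\mu)\ge\tfrac{2a}{3}>\tfrac{11a}{36}$, so $g(\Lambda_2(S^*))>\mu$ is equivalent to $\Lambda_2(S^*)>g^{-1}(\mu)$, i.e. $\Lambda_2(S^*)-\mathbb{E}[\Lambda_2(S^*)]>\lambda(\mu)$. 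By the previous paragraph this has probability at most $\delta$, so the claimed inequality holds with probability at least $1-\delta$; the unconditional statement follows by averaging over $\mathcal{R}_1$.

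I expect the main obstacle to be twofold. Conceptually, the linchpin is the independence of $S^*$ from $\mathcal{R}_2$ together with correctly identifying the mean and variance of the assessor-side martingale \emph{under importance sampling}, so that Lemma~\ref{lem:new_concentration_bounds} applies with the right constants; in particular the factor $\Gamma=\INF/INF_F$, which has no analogue in \emph{OPIM}, must be tracked carefully so the final bound is stated in terms of $\sigma(S^*)$ rather than $\Gamma\,\sigma(S^*)$. Technically, the delicate part is the algebraic inversion: producing the closed form $g$ with the exact constants $\tfrac{25a}{36}$ and $\tfrac{a}{36}$, and checking that $g$ is monotone on exactly the range where the bound is non-vacuous, which is what makes the equivalence $g(\Lambda_2(S^*))>\mu \iff \Lambda_2(S^*)>g^{-1}(\mu)$ legitimate.
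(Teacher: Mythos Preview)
Your proposal is correct and follows essentially the same approach as the paper: both reduce the failure event to $\Lambda_2(S^*)-\mu>\lambda(\mu)$ with $\mu=\Gamma\,\sigma(S^*)\,\theta_2/n$ and $\lambda(\mu)=\tfrac{a}{3}+\sqrt{\tfrac{a^2}{9}+4a\mu}$, then apply the upper-tail bound of Lemma~\ref{lem:new_concentration_bounds} together with $\INF/n\le 1$ and $\Gamma\le 1+\epsilon'$ to obtain the $e^{-a}=\delta$ bound. The only cosmetic difference is that you abstract the inversion through the function $g$ and its monotone branch, whereas the paper carries out the same algebra line by line (including the square-root case split you handle via the monotonicity argument).
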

\begin{proof}
Let $b = \sigma(S^*) \cdot \Gamma \cdot \frac{\theta_2}{n}$. We have
\begin{align*}
    \Pr \bigg [ & \sigma(S^*) < \Bigg ( \bigg ( \sqrt{ \Lambda_2(S^*) + \frac{25 a}{36}} - \sqrt{a} \bigg )^2 - \frac{a}{36} \Bigg ) \frac{n}{\theta_2 (1+\epsilon')} \bigg ] \\
    &\leq \Pr \bigg [ \sigma(S^*) < \Bigg ( \bigg ( \sqrt{ \Lambda_2(S^*) + \frac{25 a}{36}} - \sqrt{a} \bigg )^2 - \frac{a}{36} \Bigg ) \frac{n}{\theta_2 \cdot \Gamma} \bigg ] \\
    &= \Pr \bigg [ \sigma(S^*) \cdot \frac{\Gamma \cdot \theta_2}{n} < \bigg ( \sqrt{ \Lambda_2(S^*) + \frac{25 a}{36}} - \sqrt{a} \bigg )^2 - \frac{a}{36} \bigg ] \\
    &= \Pr \bigg [ b + \frac{a}{36} < \bigg ( \sqrt{ \Lambda_2(S^*) + \frac{25 a}{36}} - \sqrt{a} \bigg )^2 \bigg ] \\
    &= \Pr \bigg [ \sqrt{b + \frac{a}{36}} < \sqrt{ \Lambda_2(S^*) + \frac{25 a}{36}} - \sqrt{a} \bigg ] \\
    & \qquad + \Pr \bigg [ \sqrt{b + \frac{a}{36}} < \sqrt{a} - \sqrt{ \Lambda_2(S^*) + \frac{25 a}{36}} \bigg ] \\
    &= \Pr \bigg [ \sqrt{b + \frac{a}{36}} < \sqrt{ \Lambda_2(S^*) + \frac{25 a}{36}} - \sqrt{a} \bigg ] \\
    &= \Pr \bigg [ \sqrt{b + \frac{a}{36}} + \sqrt{a} < \sqrt{ \Lambda_2(S^*) + \frac{25 a}{36}} \bigg ] \\
    &= \Pr \bigg [ b + \sqrt{4ab + \frac{a^2}{9}} + \frac{a}{3} < \Lambda_2(S^*) \bigg ] \\
    &= \Pr \bigg [ \sqrt{4ab + \frac{a^2}{9}} + \frac{a}{3} < \Lambda_2(S^*) - b \bigg ] \\
    &= \Pr \bigg [ \sqrt{4ab + \frac{a^2}{9}} + \frac{a}{3} < \Lambda_2(S^*) - \sigma(S^*) \cdot \Gamma \cdot \frac{\theta_2}{n} \bigg ] \\
    &\leq \Pr \bigg [ \sqrt{4ab + \frac{a^2}{9}} + \frac{a}{3} < \Lambda_2(S^*) - \sigma(S^*) \cdot \Gamma \cdot \frac{\theta_2}{n} \bigg ] \\
    &\leq \exp \Bigg ( \frac{- \Big ( \sqrt{4ab + \frac{a^2}{9}} + \frac{a}{3} \Big )^2}{4 \sigma(S^*) \cdot \Gamma \cdot \frac{\theta_2}{n} \frac{\INF}{n} + \frac{2}{3} \Big ( \sqrt{4ab + \frac{a^2}{9}} + \frac{a}{3} \Big )} \Bigg ) \\
    &\leq \exp \Bigg ( \frac{- \Big ( \sqrt{4ab + \frac{a^2}{9}} + \frac{a}{3} \Big )^2}{4 \sigma(S^*) \cdot \Gamma \cdot \frac{\theta_2}{n} + \frac{2}{3} \Big ( \sqrt{4ab + \frac{a^2}{9}} + \frac{a}{3} \Big )} \Bigg ) \\
    &= \exp(-a) = \delta,
\end{align*}
where the second probability term vanishes due to
\begin{equation*}
    \sqrt{b + \frac{a}{36}} \geq \sqrt{\frac{a}{36}} = \sqrt{a} - \sqrt{\frac{25 a}{36}} \geq \sqrt{a} - \sqrt{\Lambda_2(S^*) + \frac{25 a}{36}}.
\end{equation*}
\end{proof}
Based on Lemma \ref{lem:lower_bound} and a parameter $\delta_2$ to be discussed shortly, we set
\begin{multline}
    \label{eqn:lower_bound}
    \sigma^{l}(S^{*}) = \Bigg ( \bigg ( \sqrt{ \Lambda_2(S^*) + \frac{25 \ln(1 / \delta_2)}{36}} - \sqrt{\ln(1 / \delta_2)} \bigg )^2 \\ - \frac{\ln (1 / \delta_2)}{36} \Bigg ) \frac{n}{\theta_2 (1+\epsilon')}
\end{multline}

\underline{\etitle{Deriving $\sigma^{u}(S^{o})$.}} We establish an upper bound of $\sigma(S^o)$ based on the weighted coverage of $S^*$ in $\mathcal{R}_1$, denoted as $\Lambda_1(S^*) = \frac{\INF}{n} \mathcal{W}_{\mathcal{R}_1}(S^*)$, by leveraging the property of the greedy algorithm that ensures $\Lambda_1(S^*) \geq (1 - 1/e) \Lambda_1(S^o)$.
\begin{lem}
\label{lem:upper_bound}
Let $\theta_1 = |\mathcal{R}_1|$. For any $\delta \in (0,1)$, we have
\begin{equation}
    \Pr \bigg [ \sigma(S^o) \leq \Bigg ( \sqrt{ \frac{\Lambda_1(S^*)}{1 - 1/e} + a} \; + \; \sqrt{a} \Bigg )^2 \frac{n}{\theta_1 (1 - \epsilon')} \bigg ] \geq 1 - \delta
\end{equation}
where $a = \ln(1/\delta)$.
\end{lem}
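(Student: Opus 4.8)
The plan is to mirror the proof of the lower-bound counterpart (Lemma~\ref{lem:lower_bound}), but invoke the \emph{lower}-tail martingale concentration inequality of Lemma~\ref{lem:new_concentration_bounds} applied to the \emph{fixed} optimal set $S^o$ (which, crucially, does not depend on the samples $\mathcal{R}_1$). Three ingredients drive the argument: (i) the greedy guarantee $\Lambda_1(S^*) \geq (1-1/e)\,\Lambda_1(S^o)$, which holds deterministically on $\mathcal{R}_1$ and lets us replace $\tfrac{\Lambda_1(S^*)}{1-1/e}$ by the (weakly) larger $\Lambda_1(S^o)$; (ii) the bound $\Gamma \geq 1-\epsilon'$, which lets us replace $\tfrac{n}{\theta_1(1-\epsilon')}$ by the (weakly) smaller $\tfrac{n}{\theta_1\Gamma}$; and (iii) the identity $\mathbb{E}[\Lambda_1(S^o)] = \Gamma\,\tfrac{\theta_1}{n}\,\sigma(S^o)$ from Lemma~\ref{lem:IS_equiv}, together with the boundary condition $\hat{INF}_F/n \leq 1$.

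Concretely, I would first use (i) and (ii) to obtain
\begin{equation*}
\Pr\!\left[\sigma(S^o) > \Bigl(\sqrt{\tfrac{\Lambda_1(S^*)}{1-1/e}+a}+\sqrt{a}\Bigr)^{\!2}\tfrac{n}{\theta_1(1-\epsilon')}\right]
\;\leq\;
\Pr\!\left[\sigma(S^o) > \Bigl(\sqrt{\Lambda_1(S^o)+a}+\sqrt{a}\Bigr)^{\!2}\tfrac{n}{\theta_1\Gamma}\right].
\end{equation*}
Setting $b = \Gamma\,\tfrac{\theta_1}{n}\,\sigma(S^o) = \mathbb{E}[\Lambda_1(S^o)]$, the event on the right is equivalent to $b > \bigl(\sqrt{\Lambda_1(S^o)+a}+\sqrt{a}\bigr)^2$, i.e.\ $\sqrt{b}-\sqrt{a} > \sqrt{\Lambda_1(S^o)+a}$. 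If $b \leq a$ this is impossible and the probability is $0$, so assume $b > a$ and square twice to get the equivalent event $b - \Lambda_1(S^o) > 2\sqrt{ab}$, i.e.\ $\Lambda_1(S^o) - b < -2\sqrt{ab}$ --- a pure lower-deviation event for $\Lambda_1(S^o)$ about its mean $b$.

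Finally I would apply the second (lower-tail) inequality of Lemma~\ref{lem:new_concentration_bounds} with $S = S^o$, $\theta = \theta_1$, and $\lambda = 2\sqrt{ab}$: its denominator equals $4\sigma(S^o)\Gamma\tfrac{\theta_1}{n}\tfrac{\hat{INF}_F}{n} = 4b\cdot\tfrac{\hat{INF}_F}{n} \leq 4b$, so the tail is at most $\exp\!\bigl(-(2\sqrt{ab})^2/(4b)\bigr) = \exp(-a) = \delta$. Chaining the displayed inequalities then yields $\Pr[\sigma(S^o) \le (\sqrt{\Lambda_1(S^*)/(1-1/e)+a}+\sqrt{a})^2\,\tfrac{n}{\theta_1(1-\epsilon')}] \geq 1-\delta$, as claimed.

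The main obstacle is the self-referential form of the bound: $\sigma(S^o)$, the quantity we are upper bounding, also sits inside the variance term of the concentration inequality. The resolution, exactly as in Lemma~\ref{lem:lower_bound}, is the change of variable to $b = \mathbb{E}[\Lambda_1(S^o)]$, which renders both the target inequality and the concentration bound homogeneous in $b$ so that $\sigma(S^o)$ cancels; the algebra must be arranged so that the deviation $\lambda = 2\sqrt{ab}$ lands precisely on the $\exp(-a)$ threshold, and one must not overlook the trivial case $b \leq a$ nor the crude bound $\hat{INF}_F/n \leq 1$ (valid under the assumption $\hat{INF}_F \leq n/2$). No probabilistic machinery beyond Lemma~\ref{lem:new_concentration_bounds} is needed; the martingale structure of the $\hat{Z}_i(\cdot)$ variables was already established when proving that lemma.
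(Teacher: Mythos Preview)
Your proposal is correct and follows essentially the same route as the paper's proof: reduce via the greedy guarantee and $\Gamma\ge 1-\epsilon'$, set $b=\Gamma\tfrac{\theta_1}{n}\sigma(S^o)$, unwind the algebra to the lower-deviation event $\Lambda_1(S^o)-b<-2\sqrt{ab}$, and apply the lower-tail inequality of Lemma~\ref{lem:new_concentration_bounds} with $\lambda=2\sqrt{ab}$, finishing with $\hat{INF}_F/n\le 1$. You are in fact slightly more careful than the paper in two places: you explicitly dispose of the trivial case $b\le a$ (where the event is vacuous, so squaring $\sqrt{b}-\sqrt{a}$ would otherwise need justification), and you articulate why the concentration bound is applicable to $S^o$ (it is fixed, not data-dependent on $\mathcal{R}_1$).
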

\begin{proof}
Let $b = \sigma(S^o) \cdot \Gamma \cdot \frac{\theta_1}{n}$. We have
\begin{align*}
    \Pr \bigg [ & \sigma(S^o) > \Bigg ( \sqrt{ \frac{\Lambda_1(S^*)}{1 - 1/e} + a} + \sqrt{a} \Bigg )^2 \frac{n}{\theta_1 (1-\epsilon')} \bigg ] \\
    &\leq \Pr \bigg [ \sigma(S^o) > \Bigg ( \sqrt{ \frac{\Lambda_1(S^*)}{1 - 1/e} + a} + \sqrt{a} \Bigg )^2 \frac{n}{\theta_1 \Gamma} \bigg ] \\
    &\leq \Pr \bigg [ \sigma(S^o) > \Bigg ( \sqrt{ \Lambda_1(S^o) + a} + \sqrt{a} \Bigg )^2 \frac{n}{\theta_1 \Gamma} \bigg ] \\
    &= \Pr \bigg [ \sigma(S^o) \cdot \Gamma \cdot \frac{\theta_1}{n} > \bigg ( \sqrt{ \Lambda_1(S^o) + a} + \sqrt{a} \bigg )^2 \bigg ] \\
    &= \Pr \Big [ b > ( \sqrt{ \Lambda_1(S^o) + a} + \sqrt{a} )^2 \Big ] \\
    &= \Pr \Big [ \sqrt{b} > \sqrt{ \Lambda_1(S^o) + a} + \sqrt{a} \Big ] \\
    &= \Pr \Big [ \sqrt{b} - \sqrt{a} > \sqrt{ \Lambda_1(S^o) + a} \Big ] \\
    &= \Pr \Big [ b + a - 2 \sqrt{ab} > \Lambda_1(S^o) + a \Big ] \\
    &= \Pr \Big [ - \sqrt{4ab} > \Lambda_1(S^o) - b \Big ] \\
    &= \Pr \bigg [ - \sqrt{4ab} > \Lambda_1(S^o) - \sigma(S^o) \cdot \Gamma \cdot \frac{\theta_1}{n} \bigg ] \\
    &\leq \Pr \bigg [ - \sqrt{4ab} > \Lambda_1(S^o) - \sigma(S^o) \cdot \Gamma \cdot \frac{\theta_1}{n} \bigg ] \\
    &\leq \exp \Bigg ( \frac{-4a \sigma(S^o) \cdot \Gamma \cdot \frac{\theta_1}{n}}{4 \sigma(S^o) \cdot \Gamma \cdot \frac{\theta_1}{n} \frac{\INF}{n}} \Bigg ) \\
    &= \exp \Bigg ( \frac{-a}{\frac{\INF}{n}} \Bigg ) \leq \exp(-a) = \delta.
\end{align*}
\end{proof}
Based on Lemma \ref{lem:upper_bound} and a parameter $\delta_1$ to be discussed shortly, we set
\begin{equation}
\label{eqn:upper_bound}
    \sigma^{u}(S^o) = \Bigg ( \sqrt{ \frac{\Lambda_1(S^*)}{1 - 1/e} + \ln(1 / \delta_1)} + \sqrt{\ln(1 / \delta_1)} \Bigg )^2 \frac{n}{\theta_1 (1-\epsilon')}
\end{equation}

\stitle{Putting It Together.} The reason \emph{NAMM} ensures a $(1 - 1/e - \epsilon)$ approximation with at least $(1 - \delta)$ probability is as follows. First, the algorithm has at most $i_{max}$ iterations. In each of the first $i_{max} - 1$ iterations, Algorithm \ref{alg:namm} generates a size-$k$ seed set $S^*$ and derives $\sigma^l(S^*)$ and $\sigma^u(S^o)$ from $\mathcal{R}_2$ and $\mathcal{R}_1$, respectively, setting $\delta_1 = \delta_2 = \Delta/(3 i_{max})$. Then, it computes $\alpha \gets \sigma^{l}(S^{*}) / \sigma^{u}(S^{o})$ as the approximation guarantee of $S^*$. By Lemmas \ref{lem:lower_bound} and \ref{lem:upper_bound}, and conditioning on the event $(1-\epsilon') \leq \Gamma \leq (1+\epsilon')$, $\alpha$ is correct with at least $1 - 2 \Delta / (3 i_{max})$ probability. By the union bound, it has at most $\frac{2 \Delta}{3}$ probability to output an incorrect solution in the first $i_{max}-1$ iterations. Meanwhile, in the last iteration, it returns a seed set $S^*$ generated by applying the greedy algorithm on $\mathcal{R}_1$, with $|\mathcal{R}_1| \geq N_{max}$. By Equation \ref{eqn:n_max} and conditioning on the event $(1-\epsilon') \leq \Gamma \leq (1+\epsilon')$, this ensures that $S^*$ is an $(1-1/e-\epsilon)$-approximation with at least $1 - \Delta/3$ probability. Therefore, the probability that \emph{NAMM} returns an incorrect solution is at most $\frac{2 \Delta}{3} + \Delta/3 + \delta' = \delta$ leading to the following result regarding Algorithm \ref{alg:namm}.
\begin{thm}
Algorithm \ref{alg:namm} returns a $(1-1/e-\epsilon)$-approximate solution for $\underline\mu$ as well as $\overline\mu$ with at least $1 - \delta$ probability.
\end{thm}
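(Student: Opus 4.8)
The plan is to follow the nominator--assessor (\emph{OPIM}-style) argument, bounding the two ways \emph{NAMM} can fail: an early iteration halting with an over-optimistic certificate $\alpha$, and the final iteration relying on the worst-case sample bound $N_{max}$. First I would invoke the reward-equivalence lemmas (Lemmas~\ref{lem:equiv_lower} and \ref{lem:equiv_upper}) together with Lemma~\ref{lem:IS_equiv} and Corollary~\ref{cor:equiv}, so that the whole analysis can be carried out on the single abstracted submodular, monotone objective $\sigma(\cdot)$ and then instantiated once as $\underline\mu$ and once as $\overline\mu$; submodularity (Lemma~\ref{lem:lbsm} and its upper-bound counterpart) is what makes the greedy \texttt{WeightedMaxCover} step in line~8 give the $(1-1/e)$ guarantee on empirical weighted coverage, i.e.\ $\Lambda_1(S^*) \ge (1-1/e)\Lambda_1(S^o)$, which the rest of the argument needs.

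Second, I would condition on the good event $\mathcal{G} = \{(1-\epsilon') \le \Gamma \le (1+\epsilon')\}$; by line~2, $\hat{INF}_F$ is an $(\epsilon',\delta')$-approximation of $INF_F$ with $\delta' = \delta/9$, so $\Pr[\mathcal{G}] \ge 1-\delta'$. On $\mathcal{G}$, Lemmas~\ref{lem:lower_bound} and \ref{lem:upper_bound} — whose proofs already absorb the $\Gamma$ factor by a monotone substitution — give $\sigma^{l}(S^{*}) \le \sigma(S^{*})$ and $\sigma^{u}(S^{o}) \ge \sigma(S^{o}) = OPT$, each with per-iteration failure probability $\delta_1$, resp.\ $\delta_2$. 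Setting $\delta_1 = \delta_2 = \Delta/(3 i_{max})$ with $\Delta = \delta - \delta'$ (line~1) and union-bounding over the $i_{max}$ iterations, the event that in some iteration the reported ratio $\alpha = \sigma^{l}(S^{*})/\sigma^{u}(S^{o})$ exceeds the true ratio $\sigma(S^{*})/OPT$ has probability at most $2\Delta/3$. Hence whenever \emph{NAMM} halts in one of the first $i_{max}-1$ iterations with $\alpha \ge 1-1/e-\epsilon$, the returned $S^{*}$ is genuinely $(1-1/e-\epsilon)$-optimal except with probability $2\Delta/3$ (beyond $\neg\mathcal{G}$). Crucially, here I would note that $S^{*}$ is built from $\mathcal{R}_1$ only, hence is independent of the assessor collection $\mathcal{R}_2$, which is exactly the independence needed to apply Lemma~\ref{lem:lower_bound} to the data-dependent set $S^{*}$.

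Third, for the final iteration I would appeal to Theorem~\ref{thm:n_max} and Lemma~\ref{lem:n_max} together with the definition of $N_{max}$ in (\ref{eqn:n_max}): since $LB \le OPT$ (guaranteed by the MIA-based lower bound $LB \ge \sum_{v \in S_{LB}} ap(v)$), having $|\mathcal{R}_1| \ge N_{max}$ forces the greedy solution to satisfy $\sigma(S^{*}) \ge (1-1/e-\epsilon)\,OPT$ with probability at least $1-\Delta/3$, again on $\mathcal{G}$. The doubling schedule $N_0, 2N_0, \dots$ with $i_{max} = \lceil \log_2(N_{max}/N_0)\rceil$ ensures $|\mathcal{R}_1|$ reaches $N_{max}$ by iteration $i_{max}$, so this case is always reached if no earlier termination occurs. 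A final union bound over the three bad events — $\neg\mathcal{G}$ (probability $\le \delta'$), an incorrect early certificate (probability $\le 2\Delta/3$), and the worst-case bound failing in the last iteration (probability $\le \Delta/3$) — gives total failure probability at most $\delta' + 2\Delta/3 + \Delta/3 = \delta' + \Delta = \delta$. Running the procedure with $\sigma = \underline\mu$ and then with $\sigma = \overline\mu$ yields the claim for both bounding functions.

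\textbf{Main obstacle.} I expect the delicate part to be keeping the conditioning on $\mathcal{G}$ airtight: every concentration statement (Lemmas~\ref{lem:new_concentration_bounds}, \ref{lem:lower_bound}, \ref{lem:upper_bound}) carries $\Gamma$ inside the bound, so one must verify in each place that $(1-\epsilon') \le \Gamma \le (1+\epsilon')$ is substituted in the correct direction — tightening for upper bounds, loosening for lower bounds — and, equally, that the $\delta'$ budget for estimating $INF_F$ is charged exactly once and not double-counted against the per-iteration budgets $\delta_1,\delta_2$. The accounting that $2\Delta/3 + \Delta/3 + \delta' = \delta$ is where the specific split $\delta' = \delta/9$, $\Delta = \delta-\delta'$ and $\delta_1 = \delta_2 = \Delta/(3 i_{max})$ must all line up consistently with the earlier $\theta_1,\theta_2$ derivations.
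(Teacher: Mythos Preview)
Your proposal is correct and follows essentially the same approach as the paper's proof: condition on the event $\mathcal{G}=\{(1-\epsilon')\le\Gamma\le(1+\epsilon')\}$, use Lemmas~\ref{lem:lower_bound} and \ref{lem:upper_bound} with $\delta_1=\delta_2=\Delta/(3i_{max})$ to bound the probability of an incorrect certificate in the early iterations by $2\Delta/3$, use the $N_{max}$ bound (Lemma~\ref{lem:n_max}) for the final iteration with failure probability $\Delta/3$, and union-bound the three bad events to obtain $\delta'+2\Delta/3+\Delta/3=\delta$. Your additional remarks on the independence of $S^*$ from $\mathcal{R}_2$ and on the consistent handling of $\Gamma$ are valid elaborations but do not depart from the paper's argument.
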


\stitle{Runtime.} \edit{The computation overhead of \emph{NAMM} consists of (i) the generation of RDR sets, (ii) the execution of greedy, and (iii) the computation of $\sigma^{l}(S^{*})$ and $\sigma^{u}(S^{o})$. Both (ii) and (iii) are linear in the number of RDR sets, therefore the time complexity of \emph{NAMM} is $O(\sum_{R \in \mathcal{R}_1 \cup \mathcal{R}_2} EPT)$, where $EPT$ is the expected runtime required to generate an RDR set. Next, we compare $EPT$ to the expected runtime required to generate a traditional RR set used for the IM problem allowing us to gauge the time complexity of the MM problem relative to the IM problem. Let $INF_1$ be the largest expected influence of any size-1 node set in $G$ under the IC model. We have the following lemma.}
\begin{lem}
\label{lem:ept}
The expected runtime to generate a RDR set is given by $EPT = O \Big ( INF_F + \frac{m}{n} \cdot \Big ( INF_1 \Big )^2 \Big )$.
\end{lem}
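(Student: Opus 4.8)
The plan is to charge the cost of producing one random RDR set rooted at a node $v$ to two phases and bound the expected cost of each: a \emph{misinformation phase} that resolves everything about campaign $F$ that the reward rule of Eq.~(\ref{eqn:reward}) depends on --- the set $R_F^X$ (so that a root $v$ can be drawn uniformly from it under importance sampling), the arrival time $t_v^F$, and the window length $\tau_v$ --- and a \emph{reverse phase} that walks backward from $v$ and, for every node $u$ that can reach $v$ along a live path, decides whether some such path wins all tie-breaks when $\{u\}$ seeds campaign $M$, recording the pair $(u,\rho_X(v,\{u\}))$ exactly as in Definition~\ref{defn:rdr}.

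For the misinformation phase, the assumed special case $m_F\equiv 1$ means $F$ propagates exactly like a standard IC cascade from $S_F$, so $R_F^X$ together with the arrival times of the nodes it reaches is obtained by a single shortest-first traversal whose expected cost is proportional to $\mathbb{E}[|R_F^X|] = INF_F$; $\tau_v$ is read off the sampled closing function at no extra asymptotic cost. For the reverse phase, I would first form the candidate set $C$ by exploring backward from $v$ over live edges (respecting the sampled meeting lengths $h^M_e$); a node can contribute to the RDR set only if it reaches $v$ by a live path, so $C$ is contained in the IC reverse-reachable set of $v$, whose expected size is $O(INF_1)$ by the counting argument underlying reverse sampling for IM (every singleton $\{u\}$ has IC influence at most $INF_1$), while the exploration spends $O(m/n)$ per visited node scanning incident edges, for $O\big(\tfrac{m}{n}INF_1\big)$ total. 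Then, for each of the $O(INF_1)$ candidates $u$, I would replay the (IC-type) forward $M$-cascade seeded at $\{u\}$ against the cached behaviour of $F$: this is what is needed to evaluate the tie-breaks along $u$'s paths to $v$ (which depend on how many in-neighbours of each internal node $u$ turns $M$-active) and hence to read off $\rho_X(v,\{u\})$ from $t_v^M$, $t_v^F$, and $\tau_v$, and by the same reasoning as in the misinformation phase it costs $O\big(\tfrac{m}{n}INF_1\big)$ in expectation per candidate. Multiplying, the reverse phase costs $O\big(\tfrac{m}{n}(INF_1)^2\big)$, and adding the misinformation phase gives $EPT = O\big(INF_F + \tfrac{m}{n}(INF_1)^2\big)$.

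The main obstacle is making the second half of the reverse phase rigorous: one must show that certifying each candidate's reward genuinely requires only a \emph{size-one} forward $M$-propagation, and that the $F$/$M$ interactions flagged after Definition~\ref{defn:rdr} --- where a node's activation window can open earlier than it would under either campaign alone --- are fully determined by the arrival data computed in the misinformation phase, so no second global pass is needed. Secondary care is needed to verify that the backward exploration and the per-candidate forward replays do not double-count edge work, and that the two phases' expectations compose correctly even though $C$, the candidates' rewards, and the importance-sampled root $v$ are all functions of the same possible world $X$; here linearity of expectation over root nodes (as in the proofs of Lemmas~\ref{lem:lbsm} and \ref{lem:equiv_lower}) and the mutual independence of distinct RDR samples are the levers to use, together with a check that the importance-sampling conditioning on $v\in R_F^X$ does not inflate the expected reverse-reachable-set size beyond $O(INF_1)$.
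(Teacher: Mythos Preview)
Your proposal is correct and follows essentially the same three-phase decomposition as the paper: a forward pass from $S_F$ costing $O(INF_F)$ in expectation, a backward traversal from the root $v$ of expected size $O(\tfrac{m}{n}INF_1)$, and then per-candidate tie-break resolution over the $O(INF_1)$ candidates at cost $O(\tfrac{m}{n}INF_1)$ each, yielding the $\tfrac{m}{n}(INF_1)^2$ term. The paper implements the third phase as a dynamic program over the cached backward DAG rather than a fresh forward replay from each $u$, but the accounting is identical; notably, you are more explicit than the paper about the correlation issue in taking $\mathbb{E}[n^* m^*]$ and about the effect of the importance-sampling conditioning on the reverse-reachable-set size, both of which the paper's proof simply elides.
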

\begin{proof}
Define $n^{*}$ and $m^{*}$ as the number of nodes and edges inspected by Phase II of the RDR construction algorithm. Phase I is an invocation of BFS, and has expected runtime $INF_F$. The Phase II traversal has a runtime of $n^{*} \log n^{*} + m^{*}$. Finally, Phase III applies a dynamic programming routine on $\mathcal{A}_2$.
%, and therefore the size of $\mathcal{S}$ is bounded by $m^{*}$. 
Further, Phase III applies the dynamic programming routine on at most every node in $\mathcal{S}$, and so the runtime is bounded by $n^{*} m^{*}$. In total, the runtime is bounded by
\begin{equation}
\label{eqn:ept}
    EPT = EPT_I + EPT_{II} + EPT_{III}
\end{equation}
Next, we make the observation that the size of $\mathbb{E}[|\mathcal{S}|] = \mathbb{E}[n^{*}] \leq OPT_{inf}^{1}$. This follows from the definition of $OPT_{inf}^{1}$. Further, $\mathbb{E}[m^{*}] \leq \frac{m}{n} \cdot OPT_{inf}^{1}$. This follows from known results for the IM problem under the IC model \cite{tang2015influence}. Combining \ref{eqn:ept} and the above observations, we get
\begin{align*}
    EPT &= INF_F + \mathbb{E}[m^* + n^* \log n^* + n^* m^*] \\
        &\leq INF_F + \mathbb{E}[2 n^* m^*] \\
        &= O \Big ( INF_F + \frac{m}{n} \Big ( OPT_{inf}^{*} \Big )^2 \Big )
\end{align*}
\end{proof}
In contrast, the expected runtime to generate a traditional RR set is $EPT_{RR} = O(\frac{m}{n} \cdot INF_1)$. Thus, in terms of $EPT_{RR}$, the expected runtime to generate an RDR set is $O(INF_F + EPT_{RR} \cdot INF_1)$. The additional factor of $INF_1$ in the runtime expression is explained by observing that RDR generation has to potentially perform tie-breaking on every node reverse-reachable from the root of the RDR set. Meanwhile, the additional $INF_F$ term follows from the necessary computation of the set of nodes influenced by $F$.

\stitle{Sandwich Algorithm.} Based on the above results, the \emph{SA} algorithm is given in Algorithm \ref{alg:sandwich} and returns the seed set $S^{*} \in \{ S^L, S^U \}$ that leads to the largest objective value w.r.t.\ the original objective. The solution produced by \emph{SA} has the following performance guarantee due to \cite{lu2015competition}.
\begin{lem}[\cite{lu2015competition}]
\label{lem:sandwich}
$\mu(S^{*}) \ge \beta \cdot (1 - 1/e) \cdot \mu(S^o)$, where \sloppy $\beta = \max \big \{ \frac{\mu(S^U)}{\overline\mu(S^U)}, \frac{\underline\mu(S^o)}{\mu(S^o)} \big \} $ and $S^o$ is the optimal solution to the MM problem.
\end{lem}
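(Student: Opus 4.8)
The plan is to replay the Sandwich Approximation argument of \cite{lu2015competition} specialized to our two candidate solutions. Recall that \emph{NAMM} (Algorithm~\ref{alg:namm}) returns a $(1-1/e)$-approximate maximizer $S^U$ of the submodular upper bound $\overline\mu(\cdot)$ and, run on the other objective, a $(1-1/e)$-approximate maximizer $S^L$ of the submodular lower bound $\underline\mu(\cdot)$; the \emph{SA} algorithm then outputs $S^* \in \{S^L, S^U\}$ of maximum $\mu$-value. The three ingredients we rely on are: (i) the pointwise sandwich $\underline\mu(S) \le \mu(S) \le \overline\mu(S)$ for every feasible $S$; (ii) monotonicity and submodularity of $\overline\mu$ and $\underline\mu$ (established in \S~\ref{sec:sandwich}), which give the greedy $(1-1/e)$ guarantee for weighted max-coverage; and (iii) optimality of the respective exact maximizers of the two bounding functions. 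We lower-bound $\mu(S^U)$ and $\mu(S^L)$ separately and then note that $\mu(S^*) = \max\{\mu(S^U), \mu(S^L)\}$ dominates both.

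For the upper-bound branch, let $\overline S^o \in \argmax_{|S|\le k}\overline\mu(S)$. Since $S^o$ is itself feasible, $\mu(S^o) \le \overline\mu(S^o) \le \overline\mu(\overline S^o)$, where the first inequality is the pointwise upper bound and the second is optimality of $\overline S^o$ for $\overline\mu$. Combining with the greedy guarantee $\overline\mu(S^U) \ge (1-1/e)\,\overline\mu(\overline S^o)$ gives $\overline\mu(S^U) \ge (1-1/e)\,\mu(S^o)$. Multiplying both sides by the ratio $\mu(S^U)/\overline\mu(S^U)$ yields $\mu(S^U) \ge \frac{\mu(S^U)}{\overline\mu(S^U)}\,(1-1/e)\,\mu(S^o)$. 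For the lower-bound branch, let $\underline S^o \in \argmax_{|S|\le k}\underline\mu(S)$. Using the pointwise lower bound, the greedy guarantee for the monotone submodular $\underline\mu$, and optimality of $\underline S^o$, we obtain $\mu(S^L) \ge \underline\mu(S^L) \ge (1-1/e)\,\underline\mu(\underline S^o) \ge (1-1/e)\,\underline\mu(S^o) = \frac{\underline\mu(S^o)}{\mu(S^o)}\,(1-1/e)\,\mu(S^o)$. Taking the maximum of the two branches, $\mu(S^*) \ge \max\big\{ \frac{\mu(S^U)}{\overline\mu(S^U)}, \frac{\underline\mu(S^o)}{\mu(S^o)} \big\}\,(1-1/e)\,\mu(S^o) = \beta\,(1-1/e)\,\mu(S^o)$, as claimed.

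The chained inequalities above are routine; the part requiring care is ingredient (i). The inequality $\underline\mu(S)\le\mu(S)$ is immediate, since $\max_{u\in S}\rho_X(v,\{u\}) \le \rho_X(v,S)$ in every possible world $X$. The inequality $\mu(S)\le\overline\mu(S)$ needs the coupling argument sketched in \S~\ref{sec:sandwich}: fix a possible world $X$ and its modified world $X'$ obtained by setting $h_e^M = 1$ on all critical edges and lifting the per-node reward to $2$ whenever both campaigns meet $v$ within its AW; one must verify that neither modification can ever decrease the reward at any node (removing $M$-side meeting delays only makes $M$ arrive weakly earlier, and $\rho'_X \ge \rho_X$), so $\mu_X(S) \le \overline\mu_X(S)$, and taking expectations over the shared randomness preserves the inequality. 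I expect this monotonicity-under-coupling bookkeeping to be the main obstacle, along with the bookkeeping needed to note that, since \emph{NAMM} actually returns $(1-1/e-\epsilon)$-approximate maximizers rather than exact $(1-1/e)$ ones, the factor $(1-1/e)$ in the statement degrades to $(1-1/e-\epsilon)$ for the implemented algorithm.
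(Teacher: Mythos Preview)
Your argument is correct and is precisely the standard Sandwich Approximation proof from \cite{lu2015competition}; the paper does not give its own proof of this lemma but simply cites that reference, so there is nothing further to compare. Your observations about the pointwise sandwich (ingredient (i)) and the $(1-1/e-\epsilon)$ degradation are accurate side remarks, though strictly speaking they belong to the surrounding development rather than to the lemma itself.
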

\vspace{-10pt}
\begin{algorithm}
\caption{Sandwich Approximation($\underline\mu$, $\overline\mu$, $G$, $k$)}\label{alg:sandwich}
\begin{algorithmic}[1]
\State $S^L \gets NAMM(\underline\mu, G, k)$
\State $S^U \gets NAMM(\overline\mu, G, k)$
\State \textbf{return} $S^{*} = \argmax_{S \in \{S^L, S^U\}} \mu(S)$
\end{algorithmic}
\end{algorithm}

\subsection{Anytime Algorithm}
Often, in MM applications, it is advantageous to have the ability to know how close to the optimum the current solution at any time is, so we can stop if desired. The OPIM algorithm developed for IM is such an anytime algorithm. Unlike IM, the non-submodular objective of MM complicates the development of an anytime algorithm. We overcome this challenge as follows. 

Lemmas \ref{lem:lower_bound} and \ref{lem:upper_bound} are the key ingredients required for an anytime algorithm for the MM problem. We can modify Algorithm \ref{alg:namm} to stop at a user-specified timestamp and produce a seed set along with its current approximation assurance. However, due to the non-submodular behaviour of the mitigation objective, the anytime algorithm has to generate RDR sets for both the upper and lower bounding functions \textit{simultaneously} in order to apply Algorithm \ref{alg:sandwich} on the resulting anytime solutions. Formally, for each of the submodular bounding functions, given the collections $\mathcal{R}_1$ and $\mathcal{R}_2$ of RDR sets that have been generated so far, our anytime algorithm derives a size-$k$ seed set $S^*$ and derives $\sigma^l(S^*)$ and $\sigma^u(S^o)$ from $\mathcal{R}_2$ and $\mathcal{R}_1$, respectively, and returns $\alpha \gets \sigma^{l}(S^{*}) / \sigma^{u}(S^{o})$ as the approximation guarantee of $S^*$. By Lemmas \ref{lem:lower_bound} and \ref{lem:upper_bound}, $\alpha$ is correct with at least $1 - \delta_1 - \delta_2$ probability. Therefore, setting $\delta_1 = \delta_2 = \delta / 2$ suffices to ensure the failure probability does not exceed $\delta$.
\section{RDR Set Generation}
\label{sec:rdr_gen}

In this section we describe our sampling process for constructing RDR sets for both $\underline\mu(\cdot)$ and $\overline\mu(\cdot)$. We will begin by describing how to compute an RDR set in a possible world $X$ corresponding to the lower bounding objective. We will follow up with a modification for the upper bounding objective  which handles the extended possible worlds $X'$ required by the latter.

Constructing RDR sets is significantly more difficult than traditional RR sets due to the fact that the membership status and associated weight of a node in the RDR set depends on the complex propagation interactions between campaigns $F$ and $M$. This interaction requires that we generate RDR sets with a particular order of edge propagation in mind and requires that we maintain a more nuanced notion of membership. The nature of our problem requires that we identify nodes with the potential to be reached by the mitigation campaign within the node's activation window.

Furthermore, to the best of our knowledge, we are the first to consider how to incorporate a weighted random tie-breaking policy among reverse sampling approaches. Importantly, we must overcome the inherent difficulty that to be able to resolve a tie-break we need to have complete information about \emph{all} shortest paths from $S_F$ to the random RDR root node $v$ \emph{and} from an RDR set candidate $u$ to the random RDR root node $v$ in a possible world. We develop a dynamic programming approach to overcome this challenge.

\stitle{Resolving Tie-breaks.} Let $X$ be a random possible world and $v$ a randomly chosen root node of an RDR set. Denote the set of nodes that can reach $v$ in $X$ as $\mathcal{S}$ and consider a candidate RDR set node $u \in \mathcal{S}$. In the possible world $X$, $v$ becomes active in campaign $M$ due to $u$ if there exists a path $P$ from $u$ to $v$ for which all tie-breaks along $P$ are won by $M$. Next, we describe how to recursively compute the event that all the tie-breaks required for $v$ to adopt campaign $M$ when using $u$ as a seed node are successful via a dynamic programming procedure that captures the dynamics of the \model model unfolding in $X$.

% In order to adhere to the reward equivalence lemmas (Lemmas \ref{lem:equiv_lower} and \ref{lem:equiv_upper}) we must include $u$ in the RDR set as a tuple $(u,\omega)$ with the same probability as if $v$ achieved reward $\omega$ when $\{u\}$ was used as a seed for a misinformation mitigation process in $G$.

We denote the event that a node $w \in \mathcal{S}$ adopts either campaign $C \in \{M,F\}$ in $X$ when using $u$ as a seed by the indicator $a(w,C) : C \rightarrow \{0,1\}$. Also, we denote the timestep in which $w$'s AW window closes in $X$ when using $u$ as a seed as $close(w)$. Our tie-breaking algorithm is presented in Algorithm \ref{alg:tb}.
\begin{algorithm}
\caption{Tie Breaking}\label{alg:tb}
\algorithmicrequire $G$, $S_F$, $u$, $\mathcal{S}$
\begin{algorithmic}[1]
\State Initialize priority queue $Q \gets \emptyset$
%\State $fc(w) = 0$ if $w \in \{u\} \cup S_F$
\ForAll{$w \in N^{+}(S_F)$}
    \State If $w \not\in Q$, insert $<w,1+\tau_w>$ into $Q$
\EndFor
\ForAll{$w \in N^{+}(u) \setminus N^{+}(S_F)$}
    \State If $w \not\in Q$, insert $<w,h_{(u,w)}+\tau_w>$ into $Q$
\EndFor
\State $a(w,M) = 1$ if $w = u$ and $0$ otherwise
\State $a(w,F) = 1$ if $w \in S_F$ and $0$ otherwise
\While{$Q$ is not empty}
    \State $<w,t> \gets Q.pop()$
    \State $close(w) = t$
    \ForAll{$w' \in \pi(w)$}
        \If{$a(w',M)$ and $close(w') + h_{(w',w)} \leq t$}
            \State $a(w,M) = 1$, \textbf{break}
        \ElsIf{$a(w',F)$ and $close(w') + 1 \leq t$}
            \State $a(w,F) = 1$, \textbf{break}
        \EndIf
    \EndFor
    \ForAll{$w' \in N^{+}(w)$}
        \State $t' \gets t + \tau_{w'} + a(w,M) \cdot h_{(w,w')} + a(w,F)$
        \If{$w' \not\in Q$}
            \State Insert $<w', t'>$ into $Q$
        \Else
            \State If $t' < priority(w')$ then $priority(w') \gets t'$
        \EndIf
    \EndFor
\EndWhile
\end{algorithmic}
\end{algorithm}
Intuitively, for each node $w$ processed in Algorithm \ref{alg:tb}, an active neighbour $w'$ of $w \in \mathcal{S}$ is chosen uniformly at random and $w$ activates in the same campaign as $w'$. The status of each node in $\mathcal{S}$ is computed recursively by traversing from nodes in $\{u\} \cup S_F$ to the root $v$. Define $\mathcal{P}_F$ ($\mathcal{P}_u$) as the collection of paths from $S_F$ ($u$) to $v$ in $X$. The time complexity and correctness of Algorithm \ref{alg:tb} is established in the following theorem.
\begin{thm}
Given a candidate seed node $u$, misinformation seeds $S_F$, and a random RDR set root $v$, Algorithm \ref{alg:tb} correctly computes $a(v,F)$ and $a(v,M)$ and runs in time $O(|\mathcal{P}_{F}| + |\mathcal{P}_{u}|)$.
\end{thm}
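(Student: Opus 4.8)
The plan is to prove the two assertions separately, with correctness the substantive part. The structural fact I would establish first concerns the \emph{joint} propagation of $F$ from $S_F$ and $M$ from $\{u\}$ in the fixed possible world $X$: writing $close^*(w)$ for the step at which $w$'s activation window truly closes (with the convention $close^*(s)=0$ for $s\in S_F\cup\{u\}$, since seeds carry no window), I claim that $close^*(x) < close^*(w)$ for every live in-neighbour $x$ of $w$ whose meeting with $w$ occurs no later than $close^*(w)$. This holds because an active node commits only when its window closes, then forwards along each live out-edge with delay exactly $1$ if it committed to $F$ (as $m_F=1$) or $h_{(x,w)}\ge 1$ if it committed to $M$, after which $w$'s own window contributes $\tau_w\ge 0$; hence every meeting relevant to $w$'s decision originates from a node that closed strictly earlier, and the joint process is governed by a well-founded recursion: $w$'s awareness step is the minimum over active live in-neighbours $x$ of $close^*(x)$ plus that per-edge delay, its window closes $\tau_w$ later, and --- by the possible-worlds rendering of the weighted tie-break --- it commits to the campaign of the \emph{first} node in the permutation $\pi(w)$ that is active and met $w$ by step $close^*(w)$. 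This last rule is precisely what the inner loop of Algorithm~\ref{alg:tb} encodes (the \texttt{elif} is harmless since an active node is committed to exactly one campaign).

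Given this, I would prove correctness by induction on the order in which nodes are extracted from $Q$, with the Dijkstra-style invariant: when $w$ is popped with priority $t$, (i) $t=close^*(w)$, and (ii) after the $\pi(w)$-scan, $a(w,M)$ and $a(w,F)$ equal the indicators of $w$ committing to $M$, resp.\ $F$, in the joint process. A sub-claim threaded through the induction is that every popped node is active: initial insertions are live out-neighbours of $S_F\cup\{u\}$ (hence reached, hence active), and every relaxation emanates from an active popped node along a live edge. Because a popped node is active, the delay added to each of its out-edge relaxations is strictly $\ge 1$, which gives the usual Dijkstra guarantees --- pops occur in non-decreasing priority order, and a node's priority is final at the moment it is popped, since any later relaxation into it is by a value $\ge t+1>t$. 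Combining this with the structural fact, every in-neighbour of $w$ that can influence its decision has $close^*<t$, hence was already popped with correct state, so the relaxations feeding $w$ are exactly $\{\,close^*(x)+\text{delay}(x,w)+\tau_w\,\}$ together with the seed initialisations $1+\tau_w$ (from $S_F$) and $h_{(u,w)}+\tau_w$ (from $u$); their minimum is $close^*(w)$, giving (i). Part (ii) is then the rule of the previous paragraph applied with the now-correct values $close(\cdot)$ and $a(\cdot,\cdot)$ of the already-popped in-neighbours; one also checks the scan always fires for a popped non-seed $w$, since the in-neighbour that first made $w$ aware satisfies the loop condition. Finally, since $v\in R_F^X$, some in-neighbour of $v$ on an $S_F$--$v$ path is popped and relaxes $v$ into $Q$, so $v$ is itself popped, and the invariant at $v$ gives the claim.

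For the running time, I would argue that every vertex and edge touched by Algorithm~\ref{alg:tb} lies in the union of the $S_F$--$v$ and $u$--$v$ paths of $X$. A node enters $Q$ only through a chain of relaxations rooted at a live out-neighbour of $S_F$ or $u$, so every popped node is reachable from $S_F\cup\{u\}$; it is popped (and scanned) only if it lies in $\mathcal{S}$, i.e.\ reaches $v$; hence it sits on a path counted by $\mathcal{P}_F$ or $\mathcal{P}_u$, as do its incident live edges that the scans traverse. Each such node is inserted and extracted $O(1)$ times, and the work charged to it --- one pass over its live in-neighbours for the tie-break and one over its live out-neighbours for relaxation --- is $O(1)$ per incident edge of $\mathcal{P}_F\cup\mathcal{P}_u$. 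Since all priorities are integers (meeting lengths and window lengths are integer-valued), a monotone bucket priority queue supports the operations in $O(1)$ amortised time, giving the stated $O(|\mathcal{P}_F|+|\mathcal{P}_u|)$ bound; with a comparison-based heap one picks up the extra logarithmic factor recorded in Lemma~\ref{lem:ept}.

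The main obstacle I anticipate is making part (i) of the induction airtight in the presence of the $F$/$M$ interactions the paper flags --- a node's window can open earlier in the joint process than under either campaign alone --- which is exactly why the argument must reason directly about $close^*$ in the joint process rather than about delayed-distances, and must identify the ``first qualifying in-neighbour in $\pi(w)$'' rule as the faithful possible-worlds realisation of the weighted-random-choice tie-break. The seed conventions $close^*(s)=0$, $a(u,M)=1$, and $a(s,F)=1$ for $s\in S_F$ must also be carried carefully through the base case so that the seed initialisations of $Q$ match the recursion.
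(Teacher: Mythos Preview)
The paper states this theorem without proof, so there is no argument to compare against directly. Your Dijkstra-style induction for correctness is the natural one and is sound: the structural observation that every live in-neighbour relevant to $w$'s decision must have closed strictly earlier is exactly what makes the priority-queue extraction order respect the causal order of the joint $F$/$M$ process, and your identification of the $\pi(w)$-scan as the possible-worlds realisation of the weighted tie-break matches the model.

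There is one gap in your running-time argument. You assert that a node ``is popped (and scanned) only if it lies in $\mathcal{S}$, i.e.\ reaches $v$,'' but Algorithm~\ref{alg:tb} as written carries no such restriction: it relaxes along all live out-edges of every popped node and would therefore visit every node reachable from $S_F\cup\{u\}$ in $X$, not merely those on paths to $v$. The stated bound $O(|\mathcal{P}_F|+|\mathcal{P}_u|)$ holds only because the paper intends the routine to be run on the subgraph $\mathcal{A}_1\cup\mathcal{A}_2$ assembled in Phases~I and~II --- the Phase~III prose says this explicitly --- in which every node already lies on some $S_F$--$v$ or $u$--$v$ live path by construction. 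You should make that restriction an explicit hypothesis of the argument rather than claim it as a consequence of the pseudocode, since it does not follow from the pseudocode alone. With that fix, your charging of $O(1)$ work per incident live edge and your integer-bucket queue observation give the bound as stated.
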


\subsection{Construction Algorithm}
We are now ready to describe our RDR set generation algorithm which works in three phases: (1) a forward traversal from $S_F$ to determine $R_F^X$, for choosing the RDR set $R$'s root node $v$, (2) a backwards traversal from $v$ to identify candidate RDR set nodes, and (3) A forward dynamic programming routine to resolve tie-breaks. Note, we can employ \emph{lazy sampling} by only revealing edge ``liveness'', activation window lengths, and meeting events on demand, based on the principle of deferred decisions. Thus, we avoid the costly computation of sampling the entire possible world. We include a running example on the graph in Figure \ref{fig:rdr_gen_1}.
\begin{figure}
\centering
\begin{tikzpicture}

\tikzset{node/.style={circle,draw,minimum size=0.5cm,inner sep=0pt},}
\tikzset{edge/.style={->,> = latex'},}

\node[node] (1) {$v_F$};
\node[node] (2) [right = 1cm of 1] {$v_1$};
\node[node] (3) [right = 1cm of 2] {$v_2$};
\node[node] (4) [below = 0.5cm of 1] {$v_3$};
\node[node] (5) [right = 1cm of 4] {$v_4$};
\node[node] (6) [below = 0.5cm of 4] {$v_6$};
\node[node] (7) [below = 0.5cm of 5] {$v_7$};
\node[node] (8) [right = 1cm of 7] {$v_8$};
\node[node] (9) [right = 1cm of 5] {$v_5$};

\path[draw,thick,->,> = latex']
(1) edge node {} (2)
(1) edge node {} (4)
(2) edge node {} (5)
(2) edge node {} (3)
(4) edge node {} (5)
(3) edge node {} (5)
(6) edge node {} (4)
(6) edge node {} (7)
(7) edge node {} (5)
(9) edge node {} (5)
(8) edge node {} (7);
\end{tikzpicture}
\caption{Example graph.}
\label{fig:rdr_gen_1}
\end{figure}
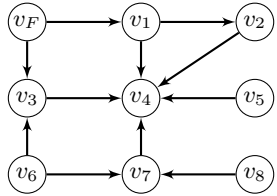

\stitle{Phase I: Forward BFS.} First, we run a breadth-first search (BFS) from $S_F$ in $G$. Since we are only interested in determining $R_F^X$ in the Phase I traversal, we do not incorporate AW's or meeting events during the BFS. The stopping condition for the traversal is when the queue of ``live'' edges to traverse are exhausted.

Next, as per our IS framework, we select a node uniformly at random from $R_F^X$ as the root of $R$. The collection of paths $\mathcal{P}_F$ from $S_F$ to $v$, to be leveraged in subsequent phases, can be constructed by following parent relationships along live edges from $v$ to nodes in $S_F$. Finally, we can determine the distance $d_X(S_F,v)$ during the path reconstruction by a simple dynamic programming procedure. Specifically, we sample AW lengths along the paths in $\mathcal{P}_F$ and retain the minimum value seen at each node in the reconstruction. We maintain the state, denoted $\mathcal{A}_1$, discovered in Phase I as defined by $\mathcal{P}_F$. Specifically, we maintain the traversal DAG from $S_F$ to $v$ defined on $X$ along with the sampled AW lengths for paths in $\mathcal{P}_F$.

In Figure \ref{fig:rdr_gen_2} we show an example traversal from $v_F$ where solid lines represent live edges and dotted lines represent dead edges. The graph represents the collection of reachable nodes $R_F^X$. Suppose we randomly select $v$ as the RDR set root from the collection $R_F^X$. Finally, the DAG represented by the subgraph connected by red edges, corresponds to the collection of paths $\mathcal{P}_F$ and is maintained, along with the sampled AW lengths shown in red, as the state $\mathcal{A}_1$. The distance from $v_F$ to $v$ is determined to be $d(v_F,v) = 3$.
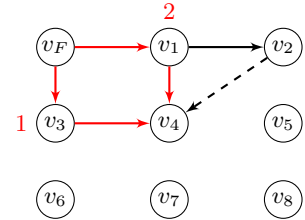
\begin{figure}
\centering
\begin{tikzpicture}

\tikzset{node/.style={circle,draw,minimum size=0.5cm,inner sep=0pt},}
\tikzset{edge/.style={->,> = latex'},}

\node[node] (1) {$v_F$};
\node[node, label={[red]above:$2$}] (2) [right = 1cm of 1] {$v_1$};
\node[node] (3) [right = 1cm of 2] {$v_2$};
\node[node, label={[red]left:$1$}] (4) [below = 0.5cm of 1] {$v_3$};
\node[node] (5) [right = 1cm of 4] {$v_4$};
\node[node] (6) [below = 0.5cm of 4] {$v_6$};
\node[node] (7) [below = 0.5cm of 5] {$v_7$};
\node[node] (8) [right = 1cm of 7] {$v_8$};
\node[node] (9) [right = 1cm of 5] {$v_5$};

\path[draw,thick,->,> = latex']
(1) edge [red] node {} (2)
(1) edge [red] node {} (4)
(2) edge [red] node {} (5)
(2) edge node {} (3)
(4) edge [red] node {} (5)
(3) edge [dashed] node {} (5);
\end{tikzpicture}
\caption{Phase I execution. Dashed lines represent edges that are not live. The subgraph spanned by the red edges makes up the Phase I traversal DAG from $v_F$ to $v_4$. Red numbers next to nodes represent sampled AW lengths. The distance from $v_F$ to $v_4$ is determined to be $d(v_F,v_4) = 3$.}
\label{fig:rdr_gen_2}
\end{figure}

\stitle{Phase II: Backwards Delayed-distance Djikstra.} Phase II generates two disjoint components of $R$ corresponding to the set of candidate RDR nodes that require tie-breaking and those that do not. To begin, we run an invocation of Djikstra backwards from $v$. The nodes encountered during this traversal are candidates to be included in $R$. Notice that some edges visited will already have their liveness determined from Phase I while the rest have their liveness sampled as necessary. Further, lazy sampling reveals results of meeting events on demand. In particular, when traversing a live edge $e \in X$, the lazy generation of meeting events for $e$ using $m(e)$ can be reduced to sampling a value for $h_e$ from a geometric distribution. This approach allows us to bypass flipping a coin with bias $m(e)$ for each live edge $e$ in every iteration until a meeting event occurs. Unlike Phase I, we maintain delayed-distances for every node reached in Phase II. Delayed-distances for neighbours of $v$ are computed as $dd_X(v,w) = h_{(v,w)}$ for $w \in N^{-}(v)$ and otherwise, when processing an edge $e = (w,w')$, we can compute the delayed-distance of $w'$ as $dd_X(v, w') = dd_X(v,w) + \tau_w + h_e$. Similar to Phase I, we also maintain the state $\mathcal{A}_2$ discovered in Phase II. In particular, $\mathcal{A}_2$ corresponds to the traversal DAG generated by the Phase II traversal in $X^T$, i.e. the set of live in-edges for each node reached in addition to the sampled meeting events and AW length parameters.

Next we describe some additional bookkeeping performed in Phase II. We maintain an \emph{overlap} indicator variable $\mathbb{I}_{w}^{OL}$ at each node $w$ reached in Phase II that determines if any path from $w$ to $v$ contains a node $w'$ reached by $F$. $\mathbb{I}_{w}^{OL}$ is calculated during the traversal by setting the value to $1$ if (i) $w' \in \mathcal{A}_1$ for some $w' \neq v \in N^{-}(w)$ or (ii) $\mathbb{I}_{w'}^{OL} = 1$ for some $w' \neq v \in N^{-}(w)$, and $0$ otherwise. If $\mathbb{I}_{w}^{OL} = 0$ then the collection of paths from $S_F$ to $v$ excluding $w$ are disjoint from the collection of paths from $w$ to $v$. As a result, calculating the reward associated with $w$ in the RDR set is completely determined by delayed-distance values as there are no potential tie-breaks to be resolved.

The following type of nodes do not require tie-breaking. First, nodes $u$ at a delayed-distance less than $d_X(S_F,v) - \tau_v$ that satisfy $\mathcal{I}_{u}^{OL} = 0$. Intuitively, the campaign $M$ from these nodes reaches $v$ early enough that the AW closes before $F$ arrives. All such nodes are used to construct a pair $(u, 2)$ and inserted into $R$, i.e., $M$ arrives at these nodes well before $F$ so the $M$ campaign gets a maximum credit of $2$ on these nodes.  Second, nodes $u$ at a delayed-distance less than or equal to $d_X(S_F,v) + \tau_v$ that satisfy $\mathcal{I}_{u}^{OL} = 0$ also achieve a reward that is completely determined by their delayed-distance to $v$, so we can construct a pair $(u, 1)$ and insert it into $R$. All nodes that are not of the above two types are inserted into a set $\mathcal{S}_{OL}$ for tie-breaking in Phase III.

In Figure \ref{fig:rdr_gen_3} we show an example Phase II traversal from $v$ where solid lines represent live edges and dotted lines represent dead edges. Blue numbers next the edges represent the sampled meeting lengths. Table \ref{tbl:rdr_gen_3} shows the delayed distances and overlap indicator values resulting from Phase II.
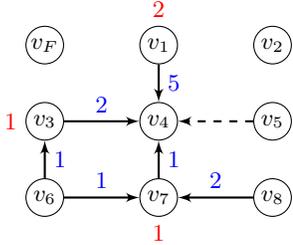
\begin{figure}
\centering
\begin{tikzpicture}

\tikzset{node/.style={circle,draw,minimum size=0.5cm,inner sep=0pt},}
\tikzset{edge/.style={->,> = latex'},}

\node[node] (1) {$v_F$};
\node[node, label={[red]above:$2$}] (2) [right = 1cm of 1] {$v_1$};
\node[node] (3) [right = 1cm of 2] {$v_2$};
\node[node, label={[red]left:$1$}] (4) [below = 0.5cm of 1] {$v_3$};
\node[node] (5) [right = 1cm of 4] {$v_4$};
\node[node] (6) [below = 0.5cm of 4] {$v_6$};
\node[node, label={[red]below:$1$}] (7) [below = 0.5cm of 5] {$v_7$};
\node[node] (8) [right = 1cm of 7] {$v_8$};
\node[node] (9) [right = 1cm of 5] {$v_5$};

\path[draw,thick,->,> = latex']
(2) edge node[right, blue] {5} (5)
(4) edge node[above, blue] {2} (5)
(6) edge node[right, blue] {1} (4)
(6) edge node[above, blue] {1} (7)
(7) edge node[right, blue] {1} (5)
(8) edge node[above, blue] {2} (7)
(9) edge [dashed] node {} (5);
\end{tikzpicture}
\caption{Phase II execution. Blue numbers next the edges represent the sampled meeting lengths and an additional AW length was sampled.}
\label{fig:rdr_gen_3}
\end{figure}
\begin{table}
\centering
\begin{tabular}{ | c | c | c |}
    \hline
    Node & delayed dist & overlap \\ \hline \hline
    $v_1$ & 5 & 0 \\ \hline
    $v_3$ & 2 & 0 \\ \hline
    $v_6$ & 3 & 1 \\ \hline
    $v_7$ & 1 & 0 \\ \hline
    $v_8$ & 4 & 0 \\ \hline
   \end{tabular}
    \caption{Phase II details.}
    \label{tbl:rdr_gen_3}
\end{table}

%Similarly, the second type of nodes $u$ have delayed distance $d'_X(u,v) > d_{max}$ where $d_{max} = d_X(S_F,v) + \tau_v$ and satisfy $\mathcal{I}_{u}^{OL} = 0$. To contrast, these nodes reach $v$ after the AW has closed due to $F$. The delayed-distances $d'_X(u,v)$ of the second type of nodes will determine the reward associated with $u$ in $R$. Precisely, the reward can be computed as $\rho(v,u) = \alpha \cdot \exp(- \Delta_v)$ where $\Delta_v = d'_X(u,v) - d_X(S_F,v) + \tau_v$ and we insert the pair $(u,\rho(v,u))$ into $R$.

\stitle{Phase III: Forward Tie-breaking.} Phase III proceeds by resolving tie-breaks for all nodes in $\mathcal{S}_{OL}$ via the dynamic programming approach described by Algorithm \ref{alg:tb}. Recall, the input to Algorithm \ref{alg:tb} requires the collection of paths defined by $\mathcal{P}_F$ and $\mathcal{P}_u$ in order to determine the reward achieved when using $u$ as a seed node for campaign $M$. Observe, the state collected from Phases I and II are sufficient to construct $\mathcal{P}_F \cup \mathcal{P}_u$ for all $u \in \mathcal{S}_{OL}$ since each such collection of paths is a subgraph of the union of the DAGs constructed during Phase I and II, i.e.\ $\mathcal{A}_1 \cup \mathcal{A}_2$.

%Recall that the dynamic programming approach outlined above only resolves ties for a \emph{single} candidate RDR node $u$. Instead of executing the dynamic programming routine for each node $u \in \mathcal{S}_{OL}$ iteratively, we propose \emph{vectorizing} the process. Specifically, for each node $u \in \mathcal{S}_{OL}$, we maintain a vector of adoption status' $a_u(w,M)$ and $a_u(w,F)$ at each node $w \in X_{\mathcal{S}}$.

For a node $u \in \mathcal{S}_{OL}$, if $a(v,M) = 1$ and every neighbour of $v$, $v' \in N^{-}(v)$ satisfies $a(v',M) = 1$ then we construct the pair $(u, 2)$. If $a(v,M) = 1$ but some neighbour of $v$, $v' \in N^{-}(v)$ satisfies $a(v',F) = 1$ and $\mathbb{I}_{v',v}^{F} = 1$ then we construct the pair $(u, 1)$. Finally, if $a(v,M) = 0$ but some neighbour of $v$, $v' \in N^{-}(v)$ satisfies $a(v',M) = 1$ and $\mathbb{I}_{v',v}^{M} = 1$ then we construct the pair $(u, 1)$. Otherwise, $u$ cannot reach $v$ within its AW so we do not add a tuple to $R$.
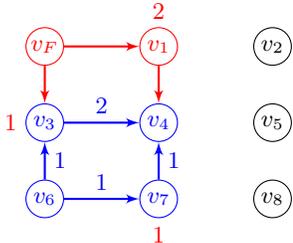
\begin{figure}
\centering
\begin{tikzpicture}

\tikzset{node/.style={circle,draw,minimum size=0.5cm,inner sep=0pt},}
\tikzset{edge/.style={->,> = latex'},}

\node[node, red] (1) {$v_F$};
\node[node, red, label={[red]above:$2$}] (2) [right = 1cm of 1] {$v_1$};
\node[node] (3) [right = 1cm of 2] {$v_2$};
\node[node, blue,  label={[red]left:$1$}] (4) [below = 0.5cm of 1] {$v_3$};
\node[node, blue] (5) [right = 1cm of 4] {$v_4$};
\node[node, blue] (6) [below = 0.5cm of 4] {$v_6$};
\node[node, blue, label={[red]below:$1$}] (7) [below = 0.5cm of 5] {$v_7$};
\node[node] (8) [right = 1cm of 7] {$v_8$};
\node[node] (9) [right = 1cm of 5] {$v_5$};

\path[draw,thick,->,> = latex']
(1) edge [red] node {} (2)
(1) edge [red] node {} (4)
(2) edge [red] node {} (5)
(4) edge [blue] node[above, blue] {2} (5)
(6) edge [blue] node[right, blue] {1} (4)
(6) edge [blue] node[above, blue] {1} (7)
(7) edge [blue] node[right, blue] {1} (5);
\end{tikzpicture}
\caption{Phase III execution for node $v_6$ with $\tau_{v_4} = 2$. Red nodes have adopted $F$ and blue nodes have adopted $M$. Campaign $F$ ($M$) propagates along red (blue) edges.}
\label{fig:rdr_gen_4}
\end{figure}

\section{Experiments}
\label{sec:experiments}
\begin{table}
\small
\caption{Network Statistics.}
%\vspace{-5pt}
\centering
\begin{tabular}{|l|l|l|l|l|l|}
\hline
& Gnutella & Flixster & D-B & D-M & DBLP \\ \hline
\# nodes & 6.3K & 7.6K & 23.3K & 34.9K & 317K \\ \hline
\# edges & 20.8K & 75.7K & 141K & 274K & 2.1M \\ \hline
avg. deg & 3.3 & 9.43 & 6.5 & 7.9 & 6.62 \\ \hline
type & dir & undir & dir & dir & undir \\ \hline
\end{tabular}
\label{tbl:networks}
%\vspace{-15pt}
\end{table}
\stitle{Setup.} We perform experiments on 5 real networks. 
%We learn the activation window length distribution from data collected in \cite{mitchell2016long} where they study the reading habits of social media users and the reading probabilities from \cite{gabielkov2016social}. We set meeting lengths according to observations made in \cite{vosoughi2018spread} on the relative propagation rates of false and true information. 
The seeds for campaign $F$ are generated in two ways. First, we sample a small number of nodes from the top-$k$ most influential nodes in the network to simulate the spread of fake news by a popular user in the network. Second, we sample a larger number of users at random from the network to simulate the coordinated spread of misinformation by bots or newly created puppet accounts. \edit{Figures for each network list the results for influential fake seeds (left) and random fake seeds (right).} In our experiments, the mitigation achieved by the final solution of each algorithm is evaluated by 20K Monte Carlo simulations. All experiments are performed on a Linux machine with Intel Xeon 2.6 GHz CPU and 128 GB RAM.

\stitle{Networks.} Table \ref{tbl:networks} summarizes the networks and their characteristics. Flixster is mined from a social movie site and a strongly connected component is extracted. Gnutella is a peer-to-peer file sharing network. Douban is a Chinese social network, where users rate books, movies, music, etc. All movie and book ratings of the users in the graph are crawled separately to derive two datasets from book and movie ratings: Douban-Book and Douban-Movie. Finally DBLP, a peer collaboration network, is a large network that we use to test scalability.

\stitle{Baselines.} Since there is no algorithm explicitly addressing the model and reward function considered in this paper, we compare the mitigation achieved by \emph{NAMM} against the following baselines. \edit{\emph{IMM} \cite{tang2015influence} is a state-of-the-art solution for the IM problem.} The \textsc{Influential} baseline selects seeds in decreasing order of expected influence. The \textsc{Proximity} baseline selects seed nodes from the out-neighbors of the fake seeds, where a preference is given to those nodes connected by a high probability edge. \textsc{Random} is a baseline method which selects the seeds randomly.

%The \textsc{BaselineGreedy} baseline greedily maximizes our delay-specific reward function under a propagation model that omits the novel temporal components of our TCIC model; namely, activation windows and meeting events. The goal of \textsc{BaselineGreedy} is to evaluate what role the temporal components of the TCIC model play on the ability to effectively mitigate the spread of misinformation. The reward achieved by a seed set $S$ under the simplified non-temporal CIC model with proportional tie-breaking is evaluated with 5K Monte Carlo simulations. We note that existing related work \cite{song2017temporal, tong2018misinformation, tong2019beyond, simpson2020reverse, tong2017efficient} investigating the MM problem all operate under simpler propagation models that do not include proportional tie-breaking or the temporal components present in our TCIC model: activation windows and campaign-specific meeting events. Further, the reward functions considered are not temporally sensitive. As a result, the \textsc{BaselineGreedy} baseline subsumes these approaches by optimizing over our delay-specific reward function in a model that drops some temporal components of TCIC, but that is still more complex and temporally-aware than the models and reward functions previously considered.

\stitle{Default Parameters.} Following previous works \cite{tong2018misinformation, tong2019beyond, simpson2020reverse, tong2017efficient, tang2015influence, tang2018online, lu2015competition} we set edge probabilities for $e = (u,v)$ to $1 / indeg(v)$, where $indeg(v)$ is the in degree of node $v$. Unless otherwise specified, we use $\epsilon = 0.1$ and $\delta = 1 / n$ as our default for all methods.

\underline{\etitle{Meeting Probabiltiies.}} \edit{In the same way as \cite{chen2012time}, where meeting events were introduced, we sample meeting lengths from a geometric distribution. We calibrate the meeting probabilities for campaign $M$ based on observations made in \cite{vosoughi2018spread}. The authors investigate the temporal characteristics of the propagation of true and false news over all the fact-checked cascades that spread on Twitter from 2006 to 2017 totalling over 126,000 cascades. The classification into truth or falsehood was established by six independent fact-checking organizations. The resulting cascades were compared on depth (number of retweet hops from the source tweet), size (number of users involved in the cascade), maximum breadth, and structural virality. The authors observed that falsehood diffused significantly farther, faster, deeper, and more broadly than the truth. In particular, it was observed that truth rarely reached more than 1500 users and it took the truth about six times as long as falsehood to reach 1500 people. Based on these observations, the meeting length distribution is parameterized by success probability $m(e) = 1 / 6$ so that, on average, the misinformation propagates 6$\times$ faster than the truth.}

%Following the observations made in \cite{vosoughi2018spread} on real-world propagation behaviours of true/false information, the meeting length distribution is parameterized by success probability $m(e) = 1 / 6$. That is, the misinformation propagates, on average, 6$\times$ faster than the truth.

\underline{\etitle{Activation Windows.}} \edit{The activation window lengths are generated by a two-step procedure. First, we simulate if a user reads the linked content following observations in \cite{gabielkov2016social} on real-world click-through behaviours. The authors present a large scale, unbiased study of social clicks by gathering a month of web visits to online resources mentioned in Twitter. Their dataset covers 2.8 million shares and 9.6 million actual clicks. The authors estimate that a majority ($59\%$) of URLs mentioned on Twitter are not clicked at all. Informed by these observations, we first flip a biased coin with probability $0.6$ and set the AW length to $0$ if the flip succeeds.} 

\edit{Meanwhile, non-zero AW lengths are generated based on observations made in \cite{mitchell2016long} on real-world reading times of social media users. The authors utilize audience behavior metrics provided by the web analytics firm Parse.ly covering 117 million anonymized, complete cellphone interactions with 74,840 articles from 30 news websites in the month of September 2015. Overall, short-form news stories ($<1000$ words) represent 76\% of the articles while long-form articles ($\geq 1000$ words) account for the remaining 24\% of articles. The authors observe that on short-form and long-form articles users spend on average 57 seconds and 123 seconds reading, respectively. Further, a more fine-grained breakdown of average reading times reveals a distribution that closely resembles a geometric distribution. Based on these observations, we generate a sample distribution of reading times from a geometric distribution parameterized by $p=1/57$ with probability $0.76$ and $p=1/123$ with probability $0.24$ reflecting the distribution of short-form and long-form articles. Finally, we apply the bias-corrected maximum likelihood estimator for geometric distributions \cite{cox1968general} to learn a AW length parameterization of $74$ seconds.}

%Meanwhile, based on observations made in \cite{mitchell2016long} on real-world reading times of social media users, we sample the reading length (in seconds) from a geometric distribution with a mean of $30$ seconds.

\edit{Finally, AW lengths must be normalized to the length of a single hop in our model, which corresponds to the base propagation rate of misinformation measured in seconds. We learned a normalization factor from a collection of retweet cascades of misinformation crawled from Twitter during Oct. 10--Nov. 10, 2020. We extract the distribution of retweet intervals from the cascades and clean the distribution by removing the effect of bots that are programmed to instantly retweet particular accounts, by removing intervals of length less than $3$ seconds. We observed that the resulting distribution closely resembles a geometric distribution. As a result, we apply the bias-corrected maximum likelihood estimator for geometric distributions \cite{cox1968general} to learn a base propagation rate of $200$ seconds for campaign $F$.}

%The activation window times are generated by a two-step procedure. First, following the observations in \cite{gabielkov2016social} on real-world click-through behaviours, we determine if the user actually reads the linked content by flipping a biased coin with probability $0.6$. If the flip succeeds, the activation window is set to zero. Otherwise, based on observations made in \cite{mitchell2016long} on real-world reading times of social media users, we sample the reading length (in seconds) from a geometric distribution with a mean of $30$ seconds. Finally, the length is normalized to the length of a single hop in our model, which corresponds to the base propagation rate of the misinformation measured in seconds. We learned a normalization factor of $200$ sec from a collection of retweet cascades of misinformation crawled from Twitter during Oct. 10---Nov. 10, 2020. 
%between 10-October to 10-November 2020.

%%%%%%%%%%%%%%%%%%%%%%%%%%%%%%%%%
%%%%%%%%%%%%%%%%%%%%%%%%%%%%%%%%%
%    MITIGATION
%%%%%%%%%%%%%%%%%%%%%%%%%%%%%%%%%
%%%%%%%%%%%%%%%%%%%%%%%%%%%%%%%%%

%%%%%%%%%%%%%%%%%%%%%%%%%%%%%%%%%
%    BEGIN FIGURE
%%%%%%%%%%%%%%%%%%%%%%%%%%%%%%%%%

\begin{figure*}
\centering
\begin{tikzpicture}[scale=0.3]
\pgfplotsset{every axis legend/.append style={
        at={(0.7,1.07)},
        anchor=south}}
\begin{axis}[
    font=\LARGE,
    ylabel={Mitigation},
    xlabel={$k$},
    xmin=1, xmax=20,
    ymin=0, ymax=2200,
    xtick={1,3,5,10,15,20},
%    legend pos=north west,
    ymajorgrids=true,
    grid style=dashed,
    legend columns=5,
]

\addplot[
    color=blue,
    mark=square,
]
coordinates { (1,233.25) (3,623.72) (5,954.34) (10,1645.79) (15,1994.59) (20,2179.26) };
\addlegendentry{\emph{NAMM}}

\addplot[
    color=cyan,
    mark=triangle,
]
coordinates { (1,60.82) (3,291.27) (5,421.49) (10,709.03) (15,880.54) (20,984.74) };
\addlegendentry{\emph{IMM}}
    
\addplot[
    color=red,
    mark=triangle,
]
coordinates { (1,50.60) (3,221.44) (5,270.10) (10,432.10) (15,818.17) (20,1086.59) };
\addlegendentry{\textsc{Inf}}

\addplot[
    color=purple,
    mark=diamond,
]
coordinates { (1,2.00) (3,110.19) (5,230.54) (10,499.82) (15,698.08) (20,907.95) };
\addlegendentry{\textsc{Prox}}

\addplot[
    color=orange,
    mark=x,
]
coordinates { (1,3.67) (3,21.21) (5,10.85) (10,30.36) (15,72.27) (20,54.70) };
\addlegendentry{\textsc{Ran}}
    
\end{axis}
\end{tikzpicture}
\hspace{-7.5ex}
\begin{tikzpicture}[scale=0.3]
\begin{axis}[
    font=\LARGE,
    ylabel={Mitigation},
    xlabel={$k$},
    xmin=1, xmax=20,
    ymin=0, ymax=440,
    xtick={1,3,5,10,15,20},
    %legend pos=north west,
    ymajorgrids=true,
    grid style=dashed,
]

\addplot[
    color=blue,
    mark=square,
]
coordinates { (1,75.67) (3,175.40) (5,219.70) (10,323.56) (15,395.28) (20,439.82) };

\addplot[
    color=cyan,
    mark=triangle,
]
coordinates { (1,25.21) (3,64.25) (5,83.09) (10,127.99) (15,172.94) (20,198.14) };
    
\addplot[
    color=red,
    mark=triangle,
]
coordinates { (1,17.21) (3,33.00) (5,47.15) (10,68.73) (15,101.24) (20,129.66) };

\addplot[
    color=purple,
    mark=diamond,
]
coordinates { (1,16.98) (3,21.15) (5,25.39) (10,110.35) (15,122.13) (20,132.00) };

\addplot[
    color=orange,
    mark=x,
]
coordinates { (1,1.02) (3,6.88) (5,10.07) (10,15.48) (15,18.43) (20,19.92) };
    
\end{axis}
\end{tikzpicture}
\begin{tikzpicture}[scale=0.3]
\begin{axis}[
    font=\LARGE,
    ylabel={Mitigation},
    xlabel={$k$},
    xmin=1, xmax=20,
    ymin=0, ymax=620,
    xtick={1,3,5,10,15,20},
    %legend pos=north west,
    ymajorgrids=true,
    grid style=dashed,
]

\addplot[
    color=blue,
    mark=square,
]
coordinates { (1,72.89) (3,188.87) (5,266.18) (10,414.24) (15,522.55) (20,616.45) };
%\addlegendentry{\emph{NAMM}}

\addplot[
    color=cyan,
    mark=triangle,
]
coordinates { (1,43.27) (3,117.42) (5,202.48) (10,309.83) (15,388.98) (20,430.50) };
%\addlegendentry{\textsc{Inf}}
    
\addplot[
    color=red,
    mark=triangle,
]
coordinates { (1,33.41) (3,94.78) (5,154.12) (10,247.86) (15,334.67) (20,382.71) };
%\addlegendentry{\textsc{Inf}}

\addplot[
    color=purple,
    mark=diamond,
]
coordinates { (1,10.70) (3,17.29) (5,21.30) (10,34.48) (15,55.78) (20,151.61) };
%\addlegendentry{\textsc{Prox}}

\addplot[
    color=orange,
    mark=x,
]
coordinates { (1,0.43) (3,8.59) (5,5.46) (10,6.69) (15,25.05) (20,29.60) };
%\addlegendentry{\textsc{Ran}}
    
\end{axis}
\end{tikzpicture}
\begin{tikzpicture}[scale=0.3]
\begin{axis}[
    font=\LARGE,
    ylabel={Mitigation},
    xlabel={$k$},
    xmin=1, xmax=20,
    ymin=0, ymax=165,
    xtick={1,3,5,10,15,20},
    %legend pos=north west,
    ymajorgrids=true,
    grid style=dashed,
]

\addplot[
    color=blue,
    mark=square,
]
coordinates { (1,24.85) (3,52.57) (5,74.83) (10,115.95) (15,139.54) (20,159.03) };

\addplot[
    color=cyan,
    mark=triangle,
]
coordinates { (1,11.70) (3,28.59) (5,50.47) (10,78.66) (15,103.47) (20,110.98) };
    
\addplot[
    color=red,
    mark=triangle,
]
coordinates { (1,7.40) (3,20.42) (5,33.83) (10,61.30) (15,69.93) (20,76.49) };

\addplot[
    color=purple,
    mark=diamond,
]
coordinates { (1,8.56) (3,12.55) (5,16.58) (10,26.57) (15,36.57) (20,70.57) };

\addplot[
    color=orange,
    mark=x,
]
coordinates { (1,0.43) (3,0.20) (5,2.25) (10,1.56) (15,5.32) (20,3.34) };
    
\end{axis}
\end{tikzpicture}
\begin{tikzpicture}[scale=0.3]
\begin{axis}[
    font=\LARGE,
    ylabel={$\beta$ lower bound},
    xlabel={$k$},
    xmin=1, xmax=20,
    ymin=0.6, ymax=0.9,
    xtick={1,3,5,10,15,20},
    legend pos=south east,
    ymajorgrids=true,
    grid style=dashed,
]

\addplot[
    color=teal,
    mark=square,
]
coordinates { (1,0.62) (3,0.68) (5,0.72) (10,0.78) (15,0.80) (20,0.81) };
\addlegendentry{FSTop}
    
\addplot[
    color=violet,
    mark=triangle,
]
coordinates { (1,0.71) (3,0.83) (5,0.82) (10,0.84) (15,0.85) (20,0.85) };
\addlegendentry{FSRan}
    
\end{axis}
\end{tikzpicture}
\begin{tikzpicture}[scale=0.3]
\begin{axis}[
    font=\LARGE,
    ylabel={$\beta$ lower bound},
    xlabel={$k$},
    xmin=1, xmax=20,
    ymin=0.66, ymax=0.9,
    xtick={1,3,5,10,15,20},
    ymajorgrids=true,
    grid style=dashed,
]

\addplot[
    color=teal,
    mark=square,
]
coordinates { (1,0.69) (3,0.70) (5,0.71) (10,0.72) (15,0.73) (20,0.75) };
    
\addplot[
    color=violet,
    mark=triangle,
]
coordinates { (1,0.73) (3,0.80) (5,0.83) (10,0.88) (15,0.86) (20,0.84) };
    
\end{axis}
\end{tikzpicture}
\caption{Mitigation on the Gnutella (left) \& Flixster (middle) datasets with $10$ ($50$) influential (random) fake seeds. Lower bound of $\beta$ for \emph{SA} (right).}
\label{fig:small_mit}
\end{figure*}
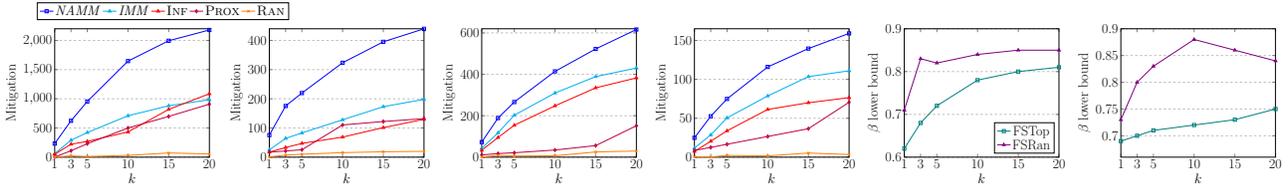

%%%%%%%%%%%%%%%%%%%%%%%%%%%%%%%%%
%    END FIGURE
%%%%%%%%%%%%%%%%%%%%%%%%%%%%%%%%%

%%%%%%%%%%%%%%%%%%%%%%%%%%%%%%%%%
%    BEGIN FIGURE
%%%%%%%%%%%%%%%%%%%%%%%%%%%%%%%%%

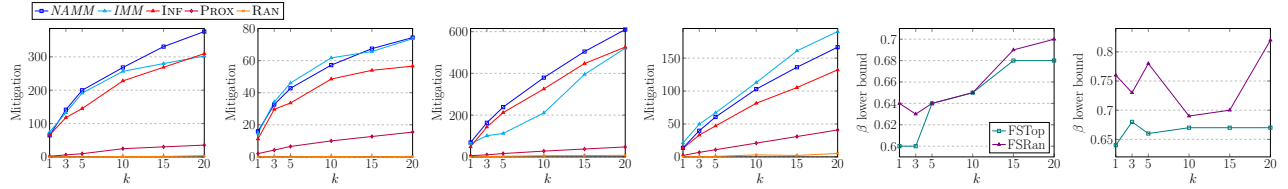
\begin{figure*}
\centering
    
\begin{tikzpicture}[scale=0.3]
\pgfplotsset{every axis legend/.append style={
        at={(0.7,1.07)},
        anchor=south}}
\begin{axis}[
    font = \LARGE,
    ylabel={Mitigation},
    xlabel={$k$},
    xmin=1, xmax=20,
    ymin=0, ymax=385,
    xtick={1,3,5,10,15,20},
    %ytick={0,10,20,30,40,50,60},
    %legend pos=north west,
    ymajorgrids=true,
    grid style=dashed,
    legend columns=5,
]

\addplot[
    color=blue,
    mark=square,
]
coordinates { (1,64.63) (3,141.45) (5,199.83) (10,268.15) (15,330.66) (20,375.80) };
\addlegendentry{\emph{NAMM}}

\addplot[
    color=cyan,
    mark=triangle,
]
coordinates { (1,74.56) (3,132.67) (5,191.36) (10,256.49) (15,279.91) (20,302.45) };
\addlegendentry{\emph{IMM}}
    
\addplot[
    color=red,
    mark=triangle,
]
coordinates { (1,65.57) (3,117.92) (5,145.00) (10,227.91) (15,268.73) (20,309.60) };
\addlegendentry{\textsc{Inf}}

\addplot[
    color=purple,
    mark=diamond,
]
coordinates { (1,2.00) (3,6.00) (5,9.51) (10,24.57) (15,30.14) (20,35.34) };
\addlegendentry{\textsc{Prox}}

\addplot[
    color=orange,
    mark=x,
]
coordinates { (1,0.01) (3,0.13) (5,0.25) (10,1.27) (15,1.33) (20,3.64) };
\addlegendentry{\textsc{Ran}}
    
\end{axis}
\end{tikzpicture}
\hspace{-7.5ex}
\begin{tikzpicture}[scale=0.3]
\begin{axis}[
    font= \LARGE,
    ylabel={Mitigation},
    xlabel={$k$},
    xmin=1, xmax=20,
    ymin=0, ymax=80,
    xtick={1,3,5,10,15,20},
    legend pos=north west,
    ymajorgrids=true,
    grid style=dashed,
]

\addplot[
    color=blue,
    mark=square,
]
coordinates { (1,15.94) (3,32.61) (5,42.77) (10,57.18) (15,67.45) (20,74.36) };

\addplot[
    color=cyan,
    mark=triangle,
]
coordinates { (1,14.72) (3,34.27) (5,46.04) (10,61.72) (15,65.68) (20,73.66) };
    
\addplot[
    color=red,
    mark=triangle,
]
coordinates { (1,11.17) (3,29.64) (5,33.59) (10,48.50) (15,53.94) (20,56.55) };

\addplot[
    color=purple,
    mark=diamond,
]
coordinates { (1,2.03) (3,4.25) (5,6.50) (10,9.97) (15,12.71) (20,15.44) };

\addplot[
    color=orange,
    mark=x,
]
coordinates { (1,0.13) (3,0.13) (5,0.13) (10,0.27) (15,0.10) (20,0.32) };
    
\end{axis}
\end{tikzpicture}
\begin{tikzpicture}[scale=0.3]
\begin{axis}[
    font= \LARGE,
    ylabel={Mitigation},
    xlabel={$k$},
    xmin=1, xmax=20,
    ymin=0, ymax=615,
    xtick={1,3,5,10,15,20},
    %legend pos=north west,
    ymajorgrids=true,
    grid style=dashed,
]

\addplot[
    color=blue,
    mark=square,
]
coordinates { (1,69.20) (3,163.50) (5,238.50) (10,379.60) (15,504.66) (20,609.97) };
%\addlegendentry{\emph{NAMM}}

\addplot[
    color=cyan,
    mark=triangle,
]
coordinates { (1,62.96) (3,101.47) (5,112.41) (10,210.53) (15,395.12) (20,520.79) };
%\addlegendentry{\textsc{Inf}}
    
\addplot[
    color=red,
    mark=triangle,
]
coordinates { (1,48.39) (3,143.42) (5,212.28) (10,325.42) (15,447.24) (20,525.87) };
%\addlegendentry{\textsc{Inf}}

\addplot[
    color=purple,
    mark=diamond,
]
coordinates { (1,5.84) (3,9.88) (5,15.99) (10,27.73) (15,37.72) (20,47.74) };
%\addlegendentry{\textsc{Prox}}

\addplot[
    color=orange,
    mark=x,
]
coordinates { (1,0.16) (3,1.26) (5,2.73) (10,5.88) (15,6.36) (20,7.50) };
%\addlegendentry{\textsc{Ran}}
    
\end{axis}
\end{tikzpicture}
\begin{tikzpicture}[scale=0.3]
\begin{axis}[
     font= \LARGE,
    ylabel={Mitigation},
    xlabel={$k$},
    xmin=1, xmax=20,
    ymin=0, ymax=195,
    xtick={1,3,5,10,15,20},
    %legend pos=north west,
    ymajorgrids=true,
    grid style=dashed,
]

\addplot[
    color=blue,
    mark=square,
]
coordinates { (1,13.69) (3,39.74) (5,60.62) (10,102.97) (15,136.17) (20,166.60) };

\addplot[
    color=cyan,
    mark=triangle,
]
coordinates { (1,21.23) (3,49.71) (5,66.91) (10,112.95) (15,161.05) (20,190.43) };
    
\addplot[
    color=red,
    mark=triangle,
]
coordinates { (1,13.14) (3,33.12) (5,47.03) (10,81.47) (15,105.28) (20,132.20) };

\addplot[
    color=purple,
    mark=diamond,
]
coordinates { (1,2.16) (3,6.88) (5,10.90) (10,20.90) (15,30.90) (20,40.92) };

\addplot[
    color=orange,
    mark=x,
]
coordinates { (1,0.18) (3,0.25) (5,0.77) (10,3.06) (15,2.18) (20,5.18) };
    
\end{axis}
\end{tikzpicture}
\begin{tikzpicture}[scale=0.3]
\begin{axis}[
     font= \LARGE,
    ylabel={$\beta$ lower bound},
    xlabel={$k$},
    xmin=1, xmax=20,
    ymin=0.59, ymax=0.71,
    xtick={1,3,5,10,15,20},
    legend pos=south east,
    ymajorgrids=true,
    grid style=dashed,
]

\addplot[
    color=teal,
    mark=square,
]
coordinates { (1,0.60) (3,0.60) (5,0.64) (10,0.65) (15,0.68) (20,0.68) };
\addlegendentry{FSTop}
    
\addplot[
    color=violet,
    mark=triangle,
]
coordinates { (1,0.64) (3,0.63) (5,0.64) (10,0.65) (15,0.69) (20,0.70) };
\addlegendentry{FSRan}
    
\end{axis}
\end{tikzpicture}
\begin{tikzpicture}[scale=0.3]
\begin{axis}[
     font= \LARGE,
    ylabel={$\beta$ lower bound},
    xlabel={$k$},
    xmin=1, xmax=20,
    ymin=0.62, ymax=0.84,
    xtick={1,3,5,10,15,20},
    ymajorgrids=true,
    grid style=dashed,
]

\addplot[
    color=teal,
    mark=square,
]
coordinates { (1,0.64) (3,0.68) (5,0.66) (10,0.67) (15,0.67) (20,0.67) };
    
\addplot[
    color=violet,
    mark=triangle,
]
coordinates { (1,0.76) (3,0.73) (5,0.78) (10,0.69) (15,0.70) (20,0.82) };
    
\end{axis}
\end{tikzpicture}
\caption{Mitigation on the Douban-Book (left) \& Douban-Movie (middle) datasets with $50$ ($200$) influential (random) fake seeds. Lower bound of $\beta$ for \emph{SA} (right).}
\label{fig:med_mit}
\end{figure*}

%%%%%%%%%%%%%%%%%%%%%%%%%%%%%%%%%
%    END FIGURE
%%%%%%%%%%%%%%%%%%%%%%%%%%%%%%%%%

%%%%%%%%%%%%%%%%%%%%%%%%%%%%%%%%%
%    BEGIN FIGURE
%%%%%%%%%%%%%%%%%%%%%%%%%%%%%%%%%

\begin{figure}
\centering
    \pgfplotsset{every axis legend/.append style={
        at={(0.7,1.07)},
        anchor=south}}
\begin{tikzpicture}[scale=0.3]
\begin{axis}[
    font=\LARGE,
    ylabel={Mitigation},
    xlabel={$k$},
    xmin=10, xmax=60,
    ymin=0, ymax=315,
    xtick={10,20,30,40,50,60},
    %ytick={0,10,20,30,40,50,60},
 %   legend pos=north west,
    ymajorgrids=true,
    grid style=dashed,
    legend columns=5,
]

\addplot[
    color=blue,
    mark=square,
]
coordinates { (10,70.81) (20,131.31) (30,174.57) (40,226.38) (50,259.37) (60,309.85) };
\addlegendentry{\emph{NAMM}}

\addplot[
    color=cyan,
    mark=triangle,
]
coordinates { (10,32.99) (20,66.67) (30,95.24) (40,112.84) (50,141.98) (60,154.87) };
\addlegendentry{\emph{IMM}}
    
\addplot[
    color=red,
    mark=triangle,
]
coordinates { (10,25.88) (20,67.48) (30,97.24) (40,126.33) (50,152.66) (60,180.06) };
\addlegendentry{\textsc{Inf}}

\addplot[
    color=purple,
    mark=diamond,
]
coordinates { (10,20.00) (20,40.00) (30,60.00) (40,80.00) (50,100.00) (60,120.00) };
\addlegendentry{\textsc{Prox}}

\addplot[
    color=orange,
    mark=x,
]
coordinates { (10,0.59) (20,1.00) (30,0.82) (40,2.01) (50,1.34) (60,1.64) };
\addlegendentry{\textsc{Ran}}
    
\end{axis}
\end{tikzpicture}
\hspace{-8ex}
\begin{tikzpicture}[scale=0.3]
\begin{axis}[
    font=\LARGE,
    ylabel={Mitigation},
    xlabel={$k$},
    xmin=10, xmax=60,
    ymin=0, ymax=160,
    xtick={10,20,30,40,50,60},
    legend pos=north west,
    ymajorgrids=true,
    grid style=dashed,
]

\addplot[
    color=blue,
    mark=square,
]
coordinates { (10,35.71) (20,63.19) (30,86.73) (40,110.43) (50,130.58) (60,154.34) };

\addplot[
    color=cyan,
    mark=triangle,
]
coordinates { (10,5.53) (20,16.50) (30,22.40) (40,26.51) (50,28.33) (60,30.25) };
    
\addplot[
    color=red,
    mark=triangle,
]
coordinates { (10,6.41) (20,14.29) (30,21.89) (40,27.78) (50,28.58) (60,33.05) };

\addplot[
    color=purple,
    mark=diamond,
]
coordinates { (10,20.00) (20,40.00) (30,59.50) (40,73.74) (50,89.03) (60,104.16) };

\addplot[
    color=orange,
    mark=x,
]
coordinates { (10,0.02) (20,0.19) (30,0.43) (40,1.97) (50,0.40) (60,0.40) };
    
\end{axis}
\end{tikzpicture}
\hspace{-1ex}
\begin{tikzpicture}[scale=0.3]
\pgfplotsset{every axis legend/.append style={
        at={(0.4,1.07)},
        anchor=south}}
\begin{axis}[
    font=\LARGE,
    ylabel={$\beta$ lower bound},
    xlabel={$k$},
    xmin=10, xmax=60,
    ymin=0.60, ymax=0.80,
    xtick={10,20,30,40,50,60},
%    legend pos=north west,
    ymajorgrids=true,
    grid style=dashed,
    legend columns=2,
]

\addplot[
    color=teal,
    mark=square,
]
coordinates { (10,0.62) (20,0.65) (30,0.63) (40,0.65) (50,0.63) (60,0.66) };
\addlegendentry{FSTop}
    
\addplot[
    color=violet,
    mark=triangle,
]
coordinates { (10,0.74) (20,0.75) (30,0.75) (40,0.76) (50,0.75) (60,0.77) };
\addlegendentry{FSRan}
    
\end{axis}
\end{tikzpicture}
    
\caption{Mitigation on the DBLP dataset with $50$ ($200$) influential (random) fake seeds and lower bound of $\beta$.}
\label{fig:dblp_mit_beta}
%\vspace{-5pt}
\end{figure}
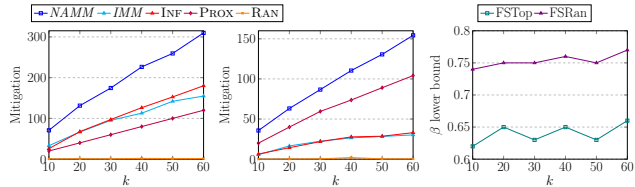

%%%%%%%%%%%%%%%%%%%%%%%%%%%%%%%%%
%    END FIGURE
%%%%%%%%%%%%%%%%%%%%%%%%%%%%%%%%%

%%%%%%%%%%%%%%%%%%%%%%%%%%%%%%%%%
%%%%%%%%%%%%%%%%%%%%%%%%%%%%%%%%%
%    REWARD BREAKDOWN
%%%%%%%%%%%%%%%%%%%%%%%%%%%%%%%%%
%%%%%%%%%%%%%%%%%%%%%%%%%%%%%%%%%

%%%%%%%%%%%%%%%%%%%%%%%%%%%%%%%%%
%    BEGIN FIGURE
%%%%%%%%%%%%%%%%%%%%%%%%%%%%%%%%%

\begin{figure}
\centering
\hspace{-1ex}
\begin{tikzpicture}[scale=0.25]
    \begin{axis}[
    font=\LARGE,
    ybar stacked,
    xlabel={$k$},
    y tick label style={/pgf/number format/.cd,precision=4},
    xticklabels={1,3,5,10,15,20},
    xtick={1,...,6}
    ]
    \addplot coordinates
{(1,0.99595) (2,0.99753) (3,0.99775) (4,0.99882) (5,0.99929) (6,0.99957)};
    \addplot coordinates
{(1,0.00405) (2,0.00247) (3,0.00225) (4,0.00117) (5,0.00071) (6,0.00043)}; \end{axis}
\end{tikzpicture}
\hspace{-1ex}
\begin{tikzpicture}[scale=0.25]
    \begin{axis}[
    font=\LARGE,
    ybar stacked,
    xlabel={$k$},
    y tick label style={/pgf/number format/.cd,precision=4},
    xticklabels={1,3,5,10,15,20},
    xtick={1,...,6}
    ]
    \addplot coordinates
{(1,0.99955) (2,0.99911) (3,0.99885) (4,0.99820) (5,0.99867) (6,0.99901)};
    \addplot coordinates
{(1,0.00045) (2,0.00089) (3,0.00115) (4,0.00180) (5,0.00133) (6,0.00099)}; \end{axis}
\end{tikzpicture}
\hspace{-1ex}
\begin{tikzpicture}[scale=0.25]
    \begin{axis}[
    font=\LARGE,
    ybar stacked,
    xlabel={$k$},
    y tick label style={/pgf/number format/.cd,precision=4},
    xticklabels={1,3,5,10,15,20},
    xtick={1,...,6}
    ]
    \addplot coordinates
{(1,0.81808) (2,0.81553) (3,0.84325) (4,0.86696) (5,0.87415) (6,0.87939)};
    \addplot coordinates
{(1,0.18192) (2,0.18447) (3,0.15675) (4,0.13304) (5,0.12585) (6,0.12061)}; \end{axis}
\end{tikzpicture}
\hspace{-1ex}
\begin{tikzpicture}[scale=0.25]
    \begin{axis}[
    font=\LARGE,
    ybar stacked,
    xlabel={$k$},
    y tick label style={/pgf/number format/.cd,precision=4},
    xticklabels={1,3,5,10,15,20},
    xtick={1,...,6}
    ]
    \addplot coordinates
{(1,0.94065) (2,0.93110) (3,0.94025) (4,0.94632) (5,0.94853) (6,0.95260)};
    \addplot coordinates
{(1,0.05935) (2,0.06890) (3,0.05975) (4,0.05368) (5,0.05147) (6,0.04740)}; \end{axis}
\end{tikzpicture}
\caption{Reward breakdown on Gnutella (left) \& Douban-Book (right) with influential \& random fake seeds respectively.}
\label{fig:small_med_reward_breakdown}
%\vspace{-15pt}
\end{figure}
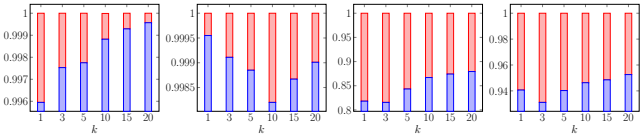

%%%%%%%%%%%%%%%%%%%%%%%%%%%%%%%%%
%    END FIGURE
%%%%%%%%%%%%%%%%%%%%%%%%%%%%%%%%%

\stitle{Mitigation Results.} Across all datasets our \emph{NAMM} algorithm significantly outperforms the baselines in terms of mitigation achieved. \edit{Among the baselines, \emph{IMM} typically outperforms the rest and even produces the best mitigation on Douban-Movie under random fake seeds illustrating that, in some specific scenarios, a ``blanket'' approach that tries to spread the truth blindly to as much of the network as possible may outperform a more targeted approach.} Figures \ref{fig:small_mit}, \ref{fig:med_mit}, \& \ref{fig:dblp_mit_beta} also show lower bounds on the data-dependent approximation guarantee (Lemma \ref{lem:sandwich}) achieved by \emph{NAMM}. Specifically, we compute $\frac{\mu(S^U)}{\overline\mu(S^U)}$ which acts as a lower bound on the data-dependent guarantee of \emph{NAMM}. Across all datasets we observe that $\beta > 0.6$ and that $\beta$ is typically better when the fake seeds are chosen at random. \edit{In Figure \ref{fig:small_med_reward_breakdown} we plot the reward breakdown achieved by \emph{NAMM} on Gnutella and Douban-Book where the blue (red) component refers to the the fraction of nodes for which a reward of 2 (1) was achieved. We observe a strong tendency for reward 2 nodes to dominate the total reward breakdown. In particular, on Gnutella, over $99\%$ of the reward is due to winning outright under both influential and random fake seeds.}

%%%%%%%%%%%%%%%%%%%%%%%%%%%%%%%%%
%%%%%%%%%%%%%%%%%%%%%%%%%%%%%%%%%
%    RUNNING TIME
%%%%%%%%%%%%%%%%%%%%%%%%%%%%%%%%%
%%%%%%%%%%%%%%%%%%%%%%%%%%%%%%%%%

%%%%%%%%%%%%%%%%%%%%%%%%%%%%%%%%%
%    BEGIN FIGURE
%%%%%%%%%%%%%%%%%%%%%%%%%%%%%%%%%

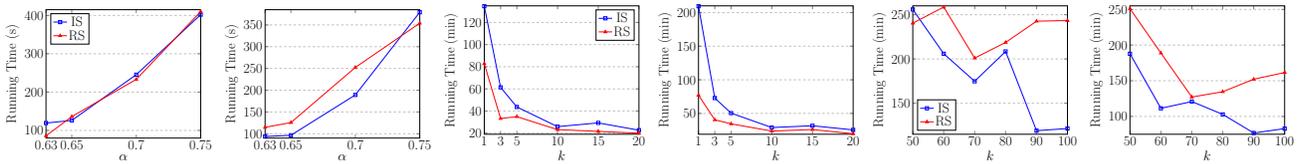
\begin{figure*}
\centering

\begin{tikzpicture}[scale=0.3]
\begin{axis}[
    font=\LARGE,
    ylabel={Running Time (s)},
    xlabel={$\alpha$},
    xmin=0.63, xmax=0.75,
    ymin=80, ymax=415,
    xtick={0.63,0.65,0.7,0.75},
    legend pos=north west,
    ymajorgrids=true,
    grid style=dashed,
]

\addplot[
    color=blue,
    mark=square,
]
coordinates { (0.63,119.24) (0.65,126.17) (0.7,245.45) (0.75,402.08) };
\addlegendentry{IS}
    
\addplot[
    color=red,
    mark=triangle,
]
coordinates { (0.63,85.96) (0.65,136.02) (0.7,233.22) (0.75,410.05) };
\addlegendentry{RS}
    
\end{axis}
\end{tikzpicture}
\begin{tikzpicture}[scale=0.3]
\begin{axis}[
    font=\LARGE,
    ylabel={Running Time (s)},
    xlabel={$\alpha$},
    xmin=0.63, xmax=0.75,
    ymin=90, ymax=385,
    xtick={0.63,0.65,0.7,0.75},
    ymajorgrids=true,
    grid style=dashed,
]

\addplot[
    color=blue,
    mark=square,
]
coordinates { (0.63,93.58) (0.65,96.44) (0.7,189.15) (0.75,379.16) };
    
\addplot[
    color=red,
    mark=triangle,
]
coordinates { (0.63,114.73) (0.65,125.62) (0.7,252.33) (0.75,354.04) };
    
\end{axis}
\end{tikzpicture}
\begin{tikzpicture}[scale=0.3]
\begin{axis}[
    font=\LARGE,
    ylabel={Running Time (min)},
    xlabel={$k$},
    xmin=1, xmax=20,
    ymin=19, ymax=135,
    xtick={1,3,5,10,15,20},
    legend pos=north east,
    ymajorgrids=true,
    grid style=dashed,
]

\addplot[
    color=blue,
    mark=square,
]
coordinates { (1,134.69) (3,61.39) (5,43.64) (10,25.88) (15,29.25) (20,22.66) };
\addlegendentry{IS}
    
\addplot[
    color=red,
    mark=triangle,
]
coordinates { (1,82.78) (3,33.22) (5,34.90) (10,23.19) (15,21.59) (20,19.93) };
\addlegendentry{RS}
    
\end{axis}
\end{tikzpicture}
\begin{tikzpicture}[scale=0.3]
\begin{axis}[
    font=\LARGE,
    ylabel={Running Time (min)},
    xlabel={$k$},
    xmin=1, xmax=20,
    ymin=19, ymax=210,
    xtick={1,3,5,10,15,20},
    ymajorgrids=true,
    grid style=dashed,
]

\addplot[
    color=blue,
    mark=square,
]
coordinates { (1,209.43) (3,72.89) (5,50.43) (10,28.98) (15,31.65) (20,25.48) };
    
\addplot[
    color=red,
    mark=triangle,
]
coordinates { (1,77.05) (3,40.48) (5,34.48) (10,23.73) (15,26.01) (20,19.76) };
    
\end{axis}
\end{tikzpicture}
\begin{tikzpicture}[scale=0.3]
\begin{axis}[
    font=\LARGE,
    ylabel={Running Time (min)},
    xlabel={$k$},
    xmin=50, xmax=100,
    ymin=115, ymax=260,
    xtick={50,60,70,80,90,100},
    legend pos=south west,
    ymajorgrids=true,
    grid style=dashed,
]

\addplot[
    color=blue,
    mark=square,
]
coordinates { (50,255.96) (60,205.93) (70,174.82) (80,208.51) (90,119.03) (100,121.45) };
\addlegendentry{IS}
%(10,782.14) (20,380.64) (30,355.67) (40,346.96)
    
\addplot[
    color=red,
    mark=triangle,
]
coordinates { (50,240.44) (60,258.68) (70,201.04) (80,218.70) (90,242.69) (100,243.40) };
\addlegendentry{RS}
%(10,524.82) (20,451.23) (30,424.31) (40,258.07)
    
\end{axis}
\end{tikzpicture}
\begin{tikzpicture}[scale=0.3]
\begin{axis}[
    font=\LARGE,
    ylabel={Running Time (min)},
    xlabel={$k$},
    xmin=50, xmax=100,
    ymin=75, ymax=255,
    xtick={50,60,70,80,90,100},
    ymajorgrids=true,
    grid style=dashed,
]

\addplot[
    color=blue,
    mark=square,
]
coordinates { (50,187.82) (60,111.39) (70,120.85) (80,102.84) (90,76.74) (100,82.90) };
%(10,584.51) (20,332.67) (30,218.74) (40,136.73)
    
\addplot[
    color=red,
    mark=triangle,
]
coordinates { (50,251.00) (60,188.82) (70,127.17) (80,134.61) (90,152.16) (100,161.53) };
%(10,323.42) (20,264.32) (30,193.78) (40,213.76)
    
\end{axis}
\end{tikzpicture}
    
\caption{Running time (left) under varying $\alpha$'s on Flixster with $k=20$, (middle) on Douban-Movie and (right) DBLP with $50$ ($200$) influential (random) fake seeds.}
\label{fig:runtime}
\end{figure*}

%%%%%%%%%%%%%%%%%%%%%%%%%%%%%%%%%
%    END FIGURE
%%%%%%%%%%%%%%%%%%%%%%%%%%%%%%%%%

\stitle{Running Time.} Next, we investigate the running time achieved when leveraging importance sampling (IS) compared to rejection sampling (RS). We observe three interesting parameter domains exhibiting different behaviours. 

First, on the medium datasets (middle of Figure \ref{fig:runtime}), we observe that RS is typically faster than IS. To explain, consider the competing mechanisms at play that contribute to running time. On the one hand, RS typically requires more iterations of \emph{NAMM} to terminate due to the possibility of generating ``empty'' RDR sets that do not contribute reward signal. On the other hand, IS incurs additional runtime overhead as it is required to maintain and manage significantly more state than RS. Recall, under IS, the root of the RDR set is selected from $R_F^X$. Thus, in order to faithfully represent the possible world, the traversal tree from this first phase must be constructed and stored for every sample generated under IS. Further, in order to ensure efficient edge liveness lookups in the backward traversal, the edges traversed by $F$ must be sorted, incurring additional overheard. Note, we also tested storing these live edges in a hashmap, but observed an increase in runtime.

Interestingly, this trend is reversed in one parameter domain and mixed in another. First, while \emph{NAMM} terminates as soon as a $(1-1/e-\epsilon)$ approximation guarantee is achieved for each of the upper and lower bounding functions, the algorithm can be run beyond such a threshold. In the left plots of Figure \ref{fig:runtime} we observe that there is not a clear winner when the guarantee $\alpha$ is increased beyond $(1-1/e-\epsilon)$. Second, on the large dataset (right of Figure \ref{fig:runtime}) we see that IS outperforms RS for larger seed set sizes. In this regime, the cost of additional iterations of \emph{NAMM} outweighs the cost of additional state management.

\begin{table}
\caption{Effect of varying meeting length on Flixster.}
%\vspace{-5pt}
\centering
\resizebox{\linewidth}{!}{%
\begin{tabular}{|c||c|c|c|c||c|c|c|c|}
\hline
& \multicolumn{4}{c||}{FSTop} & \multicolumn{4}{c|}{FSRan} \\ \hline
ML & \emph{NAMM} & \textsc{Inf} & \textsc{Prox} & \textsc{Ran} & \emph{NAMM} & \textsc{Inf} & \textsc{Prox} & \textsc{Ran} \\ \hline
6 & (709.34) & (461.70) & (162.93) & (35.72) & (186.43) & (101.63) & (74.56) & (8.75) \\ \hline
5 & +7.81 & +14.19 & +2.56 & -1.40 & +2.12 & +2.95 & +0.22 & -1.51 \\ \hline
4 & +17.17 & +24.01 & +2.68 & -9.14 & +4.31 & +6.41 & -0.03 & +1.63 \\ \hline
3 & +14.11 & +30.53 & +2.44 & +9.18 & +2.5 & +6.95 & +0.24 & -0.08 \\ \hline
2 & +39.62 & +50.85 & +6.47 & +33.22 & +8.67 & +12.58 & +0.40 & -1.14 \\ \hline
1 & +76.38 & +117.34 & +13.39 & -29.27 & +17.63 & +27.29 & +0.54 & +4.59 \\ \hline
\end{tabular}}
\label{tbl:abl_ml}
%\vspace{-15pt}
\end{table}

\stitle{\edit{Alternative Parameter Settings.}} We conduct experiments that consider how the ability to mitigate the spread of fake news is impacted by varying the reading probabilities, activation window lengths and meeting lengths. In particular, we are interested in determining if mitigation can improve if users are more likely to click through and read an article, spend more time reading and considering the content and/or are exposed to true information with a similar propagation rate as false information.

Table \ref{tbl:abl_ml} shows the effects of reducing the meeting length disadvantage incurred by campaign $M$. In particular, we see that as the propagation rate of truth and misinformation approach an equal footing, there is a steady increase in the mitigation achieved by \emph{NAMM} and \textsc{Inf}. Furthermore, we observe that the marginal gain achieved increases as the propagation rate of campaign $M$ approaches that of $F$. \edit{Meanwhile, we observe that the \textsc{Random} and \textsc{Prox} baselines do not exhibit monotone behaviour when varying the meeting length.} Table \ref{tbl:abl_aw} shows the effects of elongating the AW length. We see a similar trend where the mitigation gain of \emph{NAMM} grows at an increasing rate with longer AW lengths. \edit{Meanwhile, the baselines all exhibit non-monotone behaviour with varying AW lengths and lower total gains.}

%When the misinformation seeds are chosen from influential users, despite the gain of \textsc{Inf} even surpassing that of \emph{NAMM}, it is not enough to yield a greater overall reward.

%In order to better observe the impact of AW length on mitigation, we alter the base propagation rate of campaign $F$ to bring it in line with the range of AW lengths we consider. In particular, we observed that when the base propagation rate of campaign $F$ is orders of magnitude larger than the AW length (which is the case for the data-driven values) there is little change in mitigation when varying AW length.

Finally, we did not observe a noticeable change in mitigation when varying the reading probability. To explain, notice that even when the reading probability succeeds (i.e., AW length is positive), the AW length sampled is unlikely to exceed a few hops in the propagation model. As such, there is little effect on the overall mitigation as the reading probability goes to 1.

\edit{In Figures \ref{fig:small_mit_ego} and \ref{fig:small_mit_fixed} we show the mitigation achieved under two alternative models: ego-centric meeting events and fixed edge probabilities respectively. First, under ego-centric meeting events, we consider $m(u,v) = c / (d_{out}(u) + c)$, since it is reasonable to deem that the more friends $u$ has, the less probable that $u$ could meet a certain individual in one time unit. Here $c$ is a smoothing constant and we set it according to \cite{chen2012time} which introduced meeting events in a single campaign setting. We find that the mitigation behaviour closely matches the results without ego-centric meeting events (Figure \ref{fig:small_mit}) both in terms of outperforming the baselines and in absolute reward values. Second, we consider fixing edge probabilities to $p = 0.1$ and continue to observe superior performance by \emph{NAMM}.}

\begin{table}
\caption{Effect of varying activation window length on Flixster.}
%\vspace{-5pt}
\centering
\resizebox{\linewidth}{!}{%
\begin{tabular}{|c||c|c|c|c||c|c|c|c|}
\hline
& \multicolumn{4}{c||}{FSTop} & \multicolumn{4}{c|}{FSRan} \\ \hline
AW & \emph{NAMM} & \textsc{Inf} & \textsc{Prox} & \textsc{Ran} & \emph{NAMM} & \textsc{Inf} & \textsc{Prox} & \textsc{Ran} \\ \hline
30 & (719.47) & (374.44) & (150.28) & (25.52) & (191.36) & (74.87) & (70.30) & (3.71) \\ \hline
60 & +10.51 & +12.72 & -0.76 & -10.04 & +0.12 & -0.03 & +0.45 & +4.87 \\ \hline
120 & +24.41 & -8.10 & -0.51 & +7.39 & +4.47 & +1.87 & -0.23 & -3.73 \\ \hline
240 & +48.56 & +7.94 & +2.23 & +4.68 & +8.6 & +1.95 & +0.53 & +0.47 \\ \hline
480 & +73.59 & +15.70 & +0.24 & -3.70 & +17.07 & +3.06 & -0.71 & +1.47 \\ \hline
\end{tabular}}
\label{tbl:abl_aw}
%\vspace{-5pt}
\end{table}

%%%%%%%%%%%%%%%%%%%%%%%%%%%%%%%%%
%%%%%%%%%%%%%%%%%%%%%%%%%%%%%%%%%
%    ROBUSTNESS
%%%%%%%%%%%%%%%%%%%%%%%%%%%%%%%%%
%%%%%%%%%%%%%%%%%%%%%%%%%%%%%%%%%

\begin{table}
\tiny
\caption{\edit{Jaccard similarity of seed sets for ground truth, 5/10\% perturbed, and no temporal parameter values on Flixster.}}
%\vspace{-5pt}
\centering
%\resizebox{\linewidth}{!}{%
\begin{tabular}{|c||c|c|c||c|c|c|}
\hline
& \multicolumn{3}{c||}{FSTop} & \multicolumn{3}{c|}{FSRan} \\ \hline
k & P5 & P10 & CIC & P5 & P10 & CIC \\ \hline
1 & 1.00 & 1.00 & 1.00 & 1.00 & 1.00 & 1.00 \\ \hline
3 & 1.00 & 1.00 & 0.50 & 1.00 & 1.00 & 0.50 \\ \hline
5 & 1.00 & 1.00 & 0.55 & 1.00 & 1.00 & 0.67 \\ \hline
10 & 1.00 & 0.91 & 0.82 & 1.00 & 1.00 & 0.67 \\ \hline
15 & 0.88 & 0.94 & 0.67 & 0.94 & 0.94 & 0.67 \\ \hline
20 & 1.00 & 0.86 & 0.74 & 0.86 & 0.86 & 0.63 \\ \hline
\end{tabular}
%}
\label{tbl:robust_flix}
%\vspace{-5pt}
\end{table}

\begin{table}
\tiny
\caption{\edit{Jaccard similarity of seed sets for ground truth, 5/10\% perturbed, and no temporal parameter values on Douban-Book.}}
%\vspace{-5pt}
\centering
%\resizebox{\linewidth}{!}{%
\begin{tabular}{|c||c|c|c||c|c|c|}
\hline
& \multicolumn{3}{c||}{FSTop} & \multicolumn{3}{c|}{FSRan} \\ \hline
k & P5 & P10 & CIC & P5 & P10 & CIC \\ \hline
1 & 1.00 & 1.00 & 1.00 & 1.00 & 1.00 & 1.00 \\ \hline
3 & 1.00 & 1.00 & 1.00 & 1.00 & 1.00 & 0.50 \\ \hline
5 & 1.00 & 1.00 & 1.00 & 0.84 & 0.84 & 0.67 \\ \hline
10 & 0.91 & 0.91 & 0.67 & 1.00 & 1.00 & 0.67 \\ \hline
15 & 1.00 & 1.00 & 0.94 & 1.00 & 1.00 & 0.67 \\ \hline
20 & 0.90 & 0.90 & 0.74 & 0.95 & 0.90 & 0.78 \\ \hline
\end{tabular}
%}
\label{tbl:robust_db}
%\vspace{-5pt}
\end{table}

\stitle{Robustness.} \edit{In real deployment, we may not have the exact ground truth values of the parameters. How robust are our solutions in the face of possibly imperfect temporal parameter settings? Also, how do our solutions compare with models such as CIC which have no temporal parameters? In Tables \ref{tbl:robust_flix} and \ref{tbl:robust_db} we show the Jaccard similarity between seed sets selected under a fixed ground truth, corresponding to the default parameter settings of \model, and those selected when the temporal parameters are perturbed and/or ignored. Specifically, we generate solutions after perturbing each of the meeting length, activation window, reading probability, and base propagation rate parameters with 5 \& 10\% Gaussian noise (P5 and P10). Additionally, we consider solutions generated with all temporal parameters ``turned off', i.e.,\ all meetings lengths set to $1$ and activation window lengths of $0$, which reduces \model to CIC. We observe that the perturbed solutions retain high similarity with the ground truth seed sets across both influential and random fake seeds. Furthermore, the similarity is always greater compared to the seed sets generated under CIC, highlighting the importance of capturing the differential propagation rates of truth and misinformation. Finally, the relative drop in mitigation for solutions generated under CIC is up to $20\%$ and $4.62\%$ on Flixster and Douban-Book respectively.}

%%%%%%%%%%%%%%%%%%%%%%%%%%%%%%%%%
%%%%%%%%%%%%%%%%%%%%%%%%%%%%%%%%%
%    ALTERNATIVE MODELS
%%%%%%%%%%%%%%%%%%%%%%%%%%%%%%%%%
%%%%%%%%%%%%%%%%%%%%%%%%%%%%%%%%%

%%%%%%%%%%%%%%%%%%%%%%%%%%%%%%%%%
%    BEGIN FIGURE
%%%%%%%%%%%%%%%%%%%%%%%%%%%%%%%%%

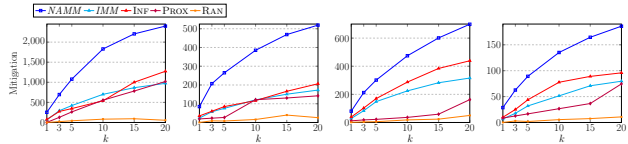
\begin{figure}
\centering
\begin{tikzpicture}[scale=0.23]
\pgfplotsset{every axis legend/.append style={
        at={(0.7,1.07)},
        anchor=south}}
\begin{axis}[
    font=\LARGE,
    ylabel={Mitigation},
    xlabel={$k$},
    xmin=1, xmax=20,
    ymin=0, ymax=2450,
    xtick={1,3,5,10,15,20},
%    legend pos=north west,
    ymajorgrids=true,
    grid style=dashed,
    legend columns=5,
]

\addplot[
    color=blue,
    mark=square,
]
coordinates { (1,253.93) (3,695.08) (5,1078.85) (10,1830.11) (15,2205.09) (20,2399.63) };
\addlegendentry{\emph{NAMM}}

\addplot[
    color=cyan,
    mark=triangle,
]
coordinates { (1,63.87) (3,293.57) (5,421.65) (10,700.70) (15,866.01) (20,974.59) };
\addlegendentry{\emph{IMM}}
    
\addplot[
    color=red,
    mark=triangle,
]
coordinates { (1,75.48) (3,274.84) (5,346.51) (10,539.79) (15,1001.71) (20,1274.75) };
\addlegendentry{\textsc{Inf}}

\addplot[
    color=purple,
    mark=diamond,
]
coordinates { (1,2.00) (3,128.21) (5,263.60) (10,555.17) (15,782.68) (20,1020.21) };
\addlegendentry{\textsc{Prox}}

\addplot[
    color=orange,
    mark=x,
]
coordinates { (1,0.39) (3,22.07) (5,42.72) (10,82.39) (15,93.35) (20,55.62) };
\addlegendentry{\textsc{Ran}}
    
\end{axis}
\end{tikzpicture}
\hspace{-6ex}
\begin{tikzpicture}[scale=0.23]
\begin{axis}[
    font=\LARGE,
    xlabel={$k$},
    xmin=1, xmax=20,
    ymin=0, ymax=525,
    xtick={1,3,5,10,15,20},
    %legend pos=north west,
    ymajorgrids=true,
    grid style=dashed,
]

\addplot[
    color=blue,
    mark=square,
]
coordinates { (1,84.82) (3,207.10) (5,265.17) (10,385.12) (15,468.98) (20,519.28) };

\addplot[
    color=cyan,
    mark=triangle,
]
coordinates { (1,23.69) (3,57.27) (5,75.57) (10,120.77) (15,150.88) (20,172.84) };
    
\addplot[
    color=red,
    mark=triangle,
]
coordinates { (1,35.00) (3,59.92) (5,84.75) (10,116.73) (15,166.39) (20,206.55) };

\addplot[
    color=purple,
    mark=diamond,
]
coordinates { (1,19.01) (3,22.93) (5,26.77) (10,121.53) (15,130.67) (20,142.22) };

\addplot[
    color=orange,
    mark=x,
]
coordinates { (1,0.25) (3,8.85) (5,8.01) (10,15.96) (15,39.53) (20,25.82) };
    
\end{axis}
\end{tikzpicture}
\begin{tikzpicture}[scale=0.23]
\begin{axis}[
    font=\LARGE,
    xlabel={$k$},
    xmin=1, xmax=20,
    ymin=0, ymax=700,
    xtick={1,3,5,10,15,20},
    %legend pos=north west,
    ymajorgrids=true,
    grid style=dashed,
]

\addplot[
    color=blue,
    mark=square,
]
coordinates { (1,80.52) (3,212.59) (5,301.56) (10,474.49) (15,601.45) (20,698.76) };
%\addlegendentry{\emph{NAMM}}

\addplot[
    color=cyan,
    mark=triangle,
]
coordinates { (1,28.76) (3,83.53) (5,148.54) (10,224.78) (15,282.65) (20,316.01) };
%\addlegendentry{\textsc{Inf}}
    
\addplot[
    color=red,
    mark=triangle,
]
coordinates { (1,38.88) (3,102.62) (5,170.17) (10,287.37) (15,384.39) (20,438.96) };
%\addlegendentry{\textsc{Inf}}

\addplot[
    color=purple,
    mark=diamond,
]
coordinates { (1,11.70) (3,18.24) (5,22.27) (10,36.26) (15,59.07) (20,162.63) };
%\addlegendentry{\textsc{Prox}}

\addplot[
    color=orange,
    mark=x,
]
coordinates { (1,1.06) (3,3.91) (5,5.70) (10,18.41) (15,23.80) (20,50.09) };
%\addlegendentry{\textsc{Ran}}
    
\end{axis}
\end{tikzpicture}
\begin{tikzpicture}[scale=0.23]
\begin{axis}[
    font=\LARGE,
    xlabel={$k$},
    xmin=1, xmax=20,
    ymin=0, ymax=190,
    xtick={1,3,5,10,15,20},
    %legend pos=north west,
    ymajorgrids=true,
    grid style=dashed,
]

\addplot[
    color=blue,
    mark=square,
]
coordinates { (1,28.71) (3,62.63) (5,89.09) (10,135.13) (15,164.50) (20,185.90) };

\addplot[
    color=cyan,
    mark=triangle,
]
coordinates { (1,6.67) (3,17.75) (5,31.94) (10,51.41) (15,70.85) (20,79.66) };
    
\addplot[
    color=red,
    mark=triangle,
]
coordinates { (1,10.07) (3,25.11) (5,43.82) (10,77.66) (15,88.97) (20,95.83) };

\addplot[
    color=purple,
    mark=diamond,
]
coordinates { (1,8.61) (3,12.62) (5,16.61) (10,26.62) (15,36.60) (20,74.49) };

\addplot[
    color=orange,
    mark=x,
]
coordinates { (1,0.07) (3,3.00) (5,1.91) (10,5.52) (15,7.49) (20,10.78) };
    
\end{axis}
\end{tikzpicture}
\caption{\edit{Mitigation under ego-centric meeting lengths on the Gnutella (left) \& Flixster (right) datasets.}}
\label{fig:small_mit_ego}
\end{figure}

%%%%%%%%%%%%%%%%%%%%%%%%%%%%%%%%%
%    BEGIN FIGURE
%%%%%%%%%%%%%%%%%%%%%%%%%%%%%%%%%

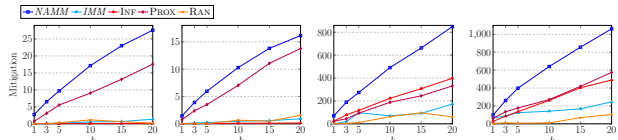
\begin{figure}
\centering
\begin{tikzpicture}[scale=0.23]
\pgfplotsset{every axis legend/.append style={
        at={(0.7,1.07)},
        anchor=south}}
\begin{axis}[
    font=\LARGE,
    ylabel={Mitigation},
    xlabel={$k$},
    xmin=1, xmax=20,
    ymin=0, ymax=29,
    xtick={1,3,5,10,15,20},
%    legend pos=north west,
    ymajorgrids=true,
    grid style=dashed,
    legend columns=5,
]

\addplot[
    color=blue,
    mark=square,
]
coordinates { (1,2.73) (3,6.51) (5,9.70) (10,17.15) (15,23.00) (20,27.60) };
\addlegendentry{\emph{NAMM}}

\addplot[
    color=cyan,
    mark=triangle,
]
coordinates { (1,0.06) (3,0.14) (5,0.17) (10,0.58) (15,0.67) (20,1.36) };
\addlegendentry{\emph{IMM}}
    
\addplot[
    color=red,
    mark=triangle,
]
coordinates { (1,0.01) (3,0.01) (5,0.02) (10,0.10) (15,0.12) (20,0.14) };
\addlegendentry{\textsc{Inf}}

\addplot[
    color=purple,
    mark=diamond,
]
coordinates { (1,0.72) (3,3.13) (5,5.55) (10,9.06) (15,13.13) (20,17.59) };
\addlegendentry{\textsc{Prox}}

\addplot[
    color=orange,
    mark=x,
]
coordinates { (1,0.00) (3,0.00) (5,0.33) (10,1.17) (15,0.66) (20,0.18) };
\addlegendentry{\textsc{Ran}}
    
\end{axis}
\end{tikzpicture}
\hspace{-6ex}
\begin{tikzpicture}[scale=0.23]
\begin{axis}[
    font=\LARGE,
    xlabel={$k$},
    xmin=1, xmax=20,
    ymin=0, ymax=18,
    xtick={1,3,5,10,15,20},
    %legend pos=north west,
    ymajorgrids=true,
    grid style=dashed,
]

\addplot[
    color=blue,
    mark=square,
]
coordinates { (1,1.43) (3,3.91) (5,5.97) (10,10.29) (15,13.80) (20,16.13) };

\addplot[
    color=cyan,
    mark=triangle,
]
coordinates { (1,0.04) (3,0.16) (5,0.22) (10,0.48) (15,0.58) (20,0.92) };
    
\addplot[
    color=red,
    mark=triangle,
]
coordinates { (1,0.00) (3,0.02) (5,0.04) (10,0.11) (15,0.12) (20,0.15) };

\addplot[
    color=purple,
    mark=diamond,
]
coordinates { (1,0.77) (3,2.39) (5,3.52) (10,7.04) (15,11.07) (20,13.77) };

\addplot[
    color=orange,
    mark=x,
]
coordinates { (1,0.00) (3,0.06) (5,0.02) (10,0.67) (15,0.48) (20,1.57) };
    
\end{axis}
\end{tikzpicture}
\begin{tikzpicture}[scale=0.23]
\begin{axis}[
    font=\LARGE,
    xlabel={$k$},
    xmin=1, xmax=20,
    ymin=0, ymax=860,
    xtick={1,3,5,10,15,20},
    %legend pos=north west,
    ymajorgrids=true,
    grid style=dashed,
]

\addplot[
    color=blue,
    mark=square,
]
coordinates { (1,70.05) (3,188.85) (5,273.74) (10,491.22) (15,662.76) (20,849.49) };
%\addlegendentry{\emph{NAMM}}

\addplot[
    color=cyan,
    mark=triangle,
]
coordinates { (1,0.00) (3,19.86) (5,97.05) (10,69.37) (15,91.78) (20,172.28) };
%\addlegendentry{\textsc{Inf}}
    
\addplot[
    color=red,
    mark=triangle,
]
coordinates { (1,22.94) (3,78.09) (5,117.73) (10,222.98) (15,307.73) (20,398.03) };
%\addlegendentry{\textsc{Inf}}

\addplot[
    color=purple,
    mark=diamond,
]
coordinates { (1,23.43) (3,43.60) (5,93.18) (10,188.00) (15,244.52) (20,331.44) };
%\addlegendentry{\textsc{Prox}}

\addplot[
    color=orange,
    mark=x,
]
coordinates { (1,0.08) (3,3.84) (5,11.07) (10,62.92) (15,95.65) (20,60.66) };
%\addlegendentry{\textsc{Ran}}
    
\end{axis}
\end{tikzpicture}
\begin{tikzpicture}[scale=0.23]
\begin{axis}[
    font=\LARGE,
    xlabel={$k$},
    xmin=1, xmax=20,
    ymin=0, ymax=1100,
    xtick={1,3,5,10,15,20},
    %legend pos=north west,
    ymajorgrids=true,
    grid style=dashed,
]

\addplot[
    color=blue,
    mark=square,
]
coordinates { (1,96.62) (3,260.30) (5,398.58) (10,640.54) (15,856.62) (20,1061.06) };

\addplot[
    color=cyan,
    mark=triangle,
]
coordinates { (1,65.85) (3,86.55) (5,124.94) (10,141.20) (15,167.72) (20,244.90) };
    
\addplot[
    color=red,
    mark=triangle,
]
coordinates { (1,26.96) (3,85.97) (5,136.13) (10,262.27) (15,402.87) (20,490.15) };

\addplot[
    color=purple,
    mark=diamond,
]
coordinates { (1,61.91) (3,135.11) (5,176.95) (10,271.16) (15,418.52) (20,574.86) };

\addplot[
    color=orange,
    mark=x,
]
coordinates { (1,1.83) (3,11.83) (5,6.81) (10,10.55) (15,69.00) (20,105.01) };
    
\end{axis}
\end{tikzpicture}
\caption{\edit{Mitigation under fixed propagation probabilities on the Gnutella (left) \& Flixster (right) datasets.}}
\label{fig:small_mit_fixed}
\end{figure}
\section{Related Work}
\label{sec:related_work}

\stitle{Influence Maximization.} The IM problem was  formulated as a discrete optimization problem by Kempe et al.\ \cite{kempe2003} where the independent cascade (IC) and linear threshold (LT) models were introduced. Since then, various aspects of IM, and its variants, have been extensively studied (see \cite{Chen2013, li2018influence} for surveys\eat{ that cover this area}). State-of-the-art IM solutions \eat{One line of work} \cite{Tang2014, tang2015influence, nguyen2016stop, huang2017revisiting, tang2018online} rely on reverse sampling for  their efficiency. \eat{aimed to improve efficiency. and state-of-the-art solutions are based on reverse sampling. Next, we highlight the most relevant IM variants to our work.} Our edge-level time-delayed propagation is similar to the diffusion dynamics captured by the IC-M model of \cite{chen2012time} set in a single campaign propagation setting \eat{where the concept is used}  to model the \eat{effects of} log-in and log-out behaviour of users. The concept of a time-sensitive reward function was considered in \eat{variants of the IM problem}  \cite{liu2012time, khan2016towards} to better model time-sensitive information such as product sales. IM under competition is studied in \cite{Bharathi2007, Lin2015, lu2015competition} among others. Finally, models that distinguish between awareness and adoption have been considered in \cite{bhagat2012maximizing, lu2015competition}.

\stitle{Misinformation Mitigation.} The MM problem was first studied under an independent cascade model by Budak et al. \cite{budak2011limiting} and under a linear threshold model in \cite{he2012influence, Fan2013}. In both settings, the objective is shown to be monotone and submodular, thus the greedy algorithm provides a $(1 - 1 / e)$ approximation. Subsequently, there have been a number of works that either study variants of the classical MM problem or improve the running time of the greedy approach. The related problem of determining the budget required to reach a threshold mitigation level is investigated in \cite{Fan2013, pham2018targeted, pham2019minimum}.

Each of \cite{tong2018misinformation, tong2019beyond, simpson2020reverse, tong2017efficient, song2017temporal, saxena2020mitigating} considers MM variants under models that ignore the temporal nature of misinformation propagation. Roughly speaking, they parallel the improvements made in reverse sampling frameworks developed for the IM problem. 
%Additionally, \cite{tong2018misinformation} consider a propagation model with arbitrary node-specific tie-breaking rules amongst multiple misinformation campaigns resulting in a non-submodular objective. However, we note that without loss of generality, multiple misinformation cascades can be represented by a single representative cascade and thus the non-submodularity is due to the node-specificity of the tie-breaking. The authors apply SA using two dominant tie-breaking rules (see \cite{Chen2013}) to recover submodularity. 
%Each of \cite{tong2019beyond, simpson2020reverse, tong2017efficient} considers dominant tie-breaking and aim to improve efficiency by leveraging reverse sampling. 
Most relevant, \cite{song2017temporal} uses a competitive IC model augmented with meeting events that are shared by \emph{both} campaigns. Thus, the observed difference in propagation rates between fake and true information is not captured. Notably, none of the above studies incorporates activation windows or a reward function that is time-critical. 

Finally, related versions of the MM problem have been investigated by other communities including crowd-sourced mitigation \cite{kim2018leveraging, vo2018rise}, epidemiology \cite{simpson2016clearing, vu2017minimizing, khalil2014scalable, prakash2013fractional, zhang2015data, wu2021containment}, and ML \cite{tong2020stratlearner, farajtabar2017fake}. 
%The crowdsourcing approaches consider means different from counter campaigns and thus are complementary to counter campaigns. 
In particular, the epidemiology community focuses on \textit{preemptive} strategies, without propagation for the mitigating side, that limit the susceptibility of a network, \LL{and} consider propagation models without the temporal notions we study. The ML approaches aim to learn mitigation strategies trained on a fixed input graph which limits their transferability to new graph instances. Recently, Juul et al.\ \cite{juule2021comparing} repeat the comparison of the spread of fake and true news conducted by \cite{vosoughi2018spread} on two subsampled datasets with exactly the same size distribution. Interestingly, they find that {\sl under these conditions}, the propagation characteristics become indistinguishable, which has important consequences for misinformation detection. We note that our work focuses on the general case where the propagation differences observed by Vosoughi et al.\ \cite{vosoughi2018spread} do hold. In particular, the authors of \cite{juule2021comparing} confirm that false and true information propagate at different rates when the naturally occurring cascade size distributions remain unaltered. 

%\edit{As pointed out in \S~\ref{sec:intro}, unlike our work, previous research on MM cannot handle the time-critical aspects of MM.}  

%Additionally, their has been considerable work by the ML community on the related problem of fake news \emph{detection}. See \cite{lakshmanan2019combating} for a recent survey.
\section{Conclusion and Future Work}
\label{sec:conclusion}

In this paper, we address a major shortcoming of existing MM propagation models by introducing the \model model, which captures important temporal aspects of fake news diffusion and formulate a time-sensitive variant of the MM problem. We prove our mitigation objective is non-submodular and develop submodular upper and lower bounding functions to sandwich the objective and provide data-dependent approximation guarantees. Finally, we propose a reverse sampling framework that provides $(1 - 1/e - \epsilon)$-approximate solutions to our bounding functions and present an anytime version of our approach. \edit{Using experiments over five datasets, we demonstrate that our \emph{NAMM} algorithm outperforms various baselines including those that are oblivious to time-critical aspects.} 

\edit{The new techniques developed here are applicable to other problem settings. E.g., in the filter bubble problem, multiple conflicting opinions propagate in a social network and the goal is to ensure balanced exposure. Here, the set of users adopting any particular opinion is stochastic and thus our \emph{NAMM} algorithm can be applied to help obtain a more balanced exposure for the users. Further, revisit the ``classic'' competitive IM problem, where realistically the diffusion rates of companies/brands may differ owing to differences in reputation and marketing strategies. In this case, \emph{NAMM} can be applied for the competitive IM problem from a follower perspective. Notice that previous techniques for competitive IM do not apply to this setting.} 

\edit{For simplicity of exposition, we considered a fixed set of fake seeds. However, our proposed solution retains its guarantees even when the fake seeds are not known exactly but are chosen from a distribution. It is interesting to study the scenario where the fake seed set is dynamically growing. This is closely tied with adaptive influence maximization, which we leave for future work.}

%\note[Laks]{Perhaps rewrite highlighting novelty and also experiments.}

\balance
\bibliographystyle{abbrv}
\bibliography{refs}

\begin{thebibliography}{10}

\bibitem{beal1956diffusion}
G.~M. Beal and J.~M. Bohlen.
\newblock The diffusion process.
\newblock Technical report, 1956.

\bibitem{bhagat2012maximizing}
S.~Bhagat, A.~Goyal, and L.~V. Lakshmanan.
\newblock Maximizing product adoption in social networks.
\newblock In {\em Proceedings of the fifth ACM international conference on Web
  search and data mining}, pages 603--612, 2012.

\bibitem{Bharathi2007}
S.~Bharathi, D.~Kempe, and M.~Salek.
\newblock Competitive influence maximization in social networks.
\newblock In {\em WINE'07}, pages 306--311, Berlin, Heidelberg, 2007.
  Springer-Verlag.

\bibitem{bian2017guarantees}
A.~A. Bian, J.~M. Buhmann, A.~Krause, and S.~Tschiatschek.
\newblock Guarantees for greedy maximization of non-submodular functions with
  applications.
\newblock In {\em Proceedings of the 34th International Conference on Machine
  Learning-Volume 70}, pages 498--507. JMLR. org, 2017.

\bibitem{borgs2014maximizing}
C.~Borgs, M.~Brautbar, J.~Chayes, and B.~Lucier.
\newblock Maximizing social influence in nearly optimal time.
\newblock In {\em Proceedings of the twenty-fifth annual ACM-SIAM symposium on
  Discrete algorithms}, pages 946--957. SIAM, 2014.

\bibitem{budak2011limiting}
C.~Budak, D.~Agrawal, and A.~El~Abbadi.
\newblock Limiting the spread of misinformation in social networks.
\newblock In {\em WWW'11}, 2011.

\bibitem{Chen2013}
W.~Chen, L.~V.~S. Lakshmanan, and C.~Castillo.
\newblock {\em Information and Influence Propagation in Social Networks}.
\newblock Synthesis Lectures on Data Management. Morgan {\&} Claypool
  Publishers, 2013.

\bibitem{chen2012time}
W.~Chen, W.~Lu, and N.~Zhang.
\newblock Time-critical influence maximization in social networks with
  time-delayed diffusion process.
\newblock In {\em AAAI}, volume 2012, pages 1--5, 2012.

\bibitem{cox1968general}
D.~R. Cox and E.~J. Snell.
\newblock A general definition of residuals.
\newblock {\em Journal of the Royal Statistical Society: Series B
  (Methodological)}, 30(2):248--265, 1968.

\bibitem{Fan2013}
L.~Fan, Z.~Lu, W.~Wu, B.~Thuraisingham, H.~Ma, and Y.~Bi.
\newblock Least cost rumor blocking in social networks.
\newblock In {\em Distributed Computing Systems (ICDCS), 2013 IEEE 33rd
  International Conference on}, pages 540--549. IEEE, 2013.

\bibitem{farajtabar2017fake}
M.~Farajtabar, J.~Yang, X.~Ye, H.~Xu, R.~Trivedi, E.~Khalil, S.~Li, L.~Song,
  and H.~Zha.
\newblock Fake news mitigation via point process based intervention.
\newblock In {\em International Conference on Machine Learning}, pages
  1097--1106. PMLR, 2017.

\bibitem{gabielkov2016social}
M.~Gabielkov, A.~Ramachandran, A.~Chaintreau, and A.~Legout.
\newblock Social clicks: What and who gets read on twitter?
\newblock In {\em Proceedings of the 2016 ACM SIGMETRICS international
  conference on measurement and modeling of computer science}, pages 179--192,
  2016.

\bibitem{he2012influence}
X.~He, G.~Song, W.~Chen, and Q.~Jiang.
\newblock Influence blocking maximization in social networks under the
  competitive linear threshold model.
\newblock In {\em Proceedings of the 2012 SIAM International Conference on Data
  Mining}, pages 463--474. SIAM, 2012.

\bibitem{huang2017revisiting}
K.~Huang, S.~Wang, G.~Bevilacqua, X.~Xiao, and L.~V. Lakshmanan.
\newblock Revisiting the stop-and-stare algorithms for influence maximization.
\newblock {\em Proceedings of the VLDB Endowment}, 10(9):913--924, 2017.

\bibitem{hughes2020calculus}
D.~Hughes-Hallett, A.~M. Gleason, and W.~G. McCallum.
\newblock {\em Calculus: Single and multivariable}.
\newblock John Wiley \& Sons, 2020.

\bibitem{juule2021comparing}
J.~L. Juul and J.~Ugander.
\newblock Comparing information diffusion mechanisms by matching on cascade
  size.
\newblock {\em Proceedings of the National Academy of Sciences}, 118(46), 2021.

\bibitem{kalish1985new}
S.~Kalish.
\newblock A new product adoption model with price, advertising, and
  uncertainty.
\newblock {\em Management science}, 31(12):1569--1585, 1985.

\bibitem{kempe2003}
D.~Kempe, J.~Kleinberg, and E.~Tardos.
\newblock Maximizing the spread of influence through a social network.
\newblock In {\em KDD'03}, 2003.

\bibitem{khalil2014scalable}
E.~B. Khalil, B.~Dilkina, and L.~Song.
\newblock Scalable diffusion-aware optimization of network topology.
\newblock In {\em Proceedings of the 20th ACM SIGKDD international conference
  on Knowledge discovery and data mining}, pages 1226--1235, 2014.

\bibitem{khan2016towards}
A.~Khan.
\newblock Towards time-discounted influence maximization.
\newblock In {\em Proceedings of the 25th ACM International on Conference on
  Information and Knowledge Management}, pages 1873--1876, 2016.

\bibitem{kim2018leveraging}
J.~Kim, B.~Tabibian, A.~Oh, B.~Sch{\"o}lkopf, and M.~Gomez-Rodriguez.
\newblock Leveraging the crowd to detect and reduce the spread of fake news and
  misinformation.
\newblock In {\em Proceedings of the eleventh ACM international conference on
  web search and data mining}, pages 324--332, 2018.

\bibitem{li2018influence}
Y.~Li, J.~Fan, Y.~Wang, and K.-L. Tan.
\newblock Influence maximization on social graphs: A survey.
\newblock {\em IEEE Transactions on Knowledge and Data Engineering},
  30(10):1852--1872, 2018.

\bibitem{li2015real}
Y.~Li, D.~Zhang, and K.-L. Tan.
\newblock Real-time targeted influence maximization for online advertisements.
\newblock {\em Proceedings of the VLDB Endowment}, 2015.

\bibitem{Lin2015}
Y.~Lin and J.~C. Lui.
\newblock Analyzing competitive influence maximization problems with partial
  information: An approximation algorithmic framework.
\newblock {\em Performance Evaluation}, 91:187--204, 2015.

\bibitem{liu2012time}
B.~Liu, G.~Cong, D.~Xu, and Y.~Zeng.
\newblock Time constrained influence maximization in social networks.
\newblock In {\em 2012 IEEE 12th international conference on data mining},
  pages 439--448. IEEE, 2012.

\bibitem{lu2015competition}
W.~Lu, W.~Chen, and L.~V. Lakshmanan.
\newblock From competition to complementarity: comparative influence diffusion
  and maximization.
\newblock {\em Proceedings of the VLDB Endowment}, 9(2):60--71, 2015.

\bibitem{mitchell2016long}
A.~Mitchell, G.~Stocking, and K.~E. Matsa.
\newblock Long-form reading shows signs of life in our mobile news world.
\newblock {\em Pew Research Center}, 5, 2016.

\bibitem{nguyen2017outward}
H.~T. Nguyen, T.~P. Nguyen, T.~N. Vu, and T.~N. Dinh.
\newblock Outward influence and cascade size estimation in billion-scale
  networks.
\newblock {\em Proceedings of the ACM on Measurement and Analysis of Computing
  Systems}, 1(1):1--30, 2017.

\bibitem{nguyen2016stop}
H.~T. Nguyen, M.~T. Thai, and T.~N. Dinh.
\newblock Stop-and-stare: Optimal sampling algorithms for viral marketing in
  billion-scale networks.
\newblock In {\em Proceedings of the 2016 International Conference on
  Management of Data}, pages 695--710, 2016.

\bibitem{Nguyen2012containment}
N.~P. Nguyen, G.~Yan, M.~T. Thai, and S.~Eidenbenz.
\newblock Containment of misinformation spread in online social networks.
\newblock In {\em Proceedings of the 4th Annual ACM Web Science Conference},
  WebSci '12, pages 213--222, New York, NY, USA, 2012. ACM.

\bibitem{pham2018targeted}
C.~V. Pham, Q.~V. Phu, and H.~X. Hoang.
\newblock Targeted misinformation blocking on online social networks.
\newblock In {\em Asian Conference on Intelligent Information and Database
  Systems}, pages 107--116. Springer, 2018.

\bibitem{pham2019minimum}
C.~V. Pham, Q.~V. Phu, H.~X. Hoang, J.~Pei, and M.~T. Thai.
\newblock Minimum budget for misinformation blocking in online social networks.
\newblock {\em Journal of Combinatorial Optimization}, 38(4):1101--1127, 2019.

\bibitem{prakash2013fractional}
B.~A. Prakash, L.~Adamic, T.~Iwashyna, H.~Tong, and C.~Faloutsos.
\newblock Fractional immunization in networks.
\newblock In {\em Proceedings of the 2013 SIAM International Conference on Data
  Mining}, pages 659--667. SIAM, 2013.

\bibitem{saxena2020mitigating}
A.~Saxena, W.~Hsu, M.~L. Lee, H.~Leong~Chieu, L.~Ng, and L.~N. Teow.
\newblock Mitigating misinformation in online social network with top-k
  debunkers and evolving user opinions.
\newblock In {\em Companion Proceedings of the Web Conference 2020}, pages
  363--370, 2020.

\bibitem{simpson2016clearing}
M.~Simpson, V.~Srinivasan, and A.~Thomo.
\newblock Clearing contamination in large networks.
\newblock {\em IEEE Transactions on Knowledge and Data Engineering},
  28(6):1435--1448, 2016.

\bibitem{simpson2020reverse}
M.~Simpson, V.~Srinivasan, and A.~Thomo.
\newblock Reverse prevention sampling for misinformation mitigation in social
  networks.
\newblock In {\em 23rd International Conference on Database Theory (ICDT
  2020)}. Schloss Dagstuhl-Leibniz-Zentrum f{\"u}r Informatik, 2020.

\bibitem{song2017temporal}
C.~Song, W.~Hsu, and M.~L. Lee.
\newblock Temporal influence blocking: Minimizing the effect of misinformation
  in social networks.
\newblock In {\em 2017 IEEE 33rd International Conference on Data Engineering
  (ICDE)}, pages 847--858. IEEE, 2017.

\bibitem{tang2018online}
J.~Tang, X.~Tang, X.~Xiao, and J.~Yuan.
\newblock Online processing algorithms for influence maximization.
\newblock In {\em Proceedings of the 2018 International Conference on
  Management of Data}, pages 991--1005, 2018.

\bibitem{tang2015influence}
Y.~Tang, Y.~Shi, and X.~Xiao.
\newblock Influence maximization in near-linear time: A martingale approach.
\newblock In {\em Proceedings of the 2015 ACM SIGMOD International Conference
  on Management of Data}, pages 1539--1554. ACM, 2015.

\bibitem{Tang2014}
Y.~Tang, X.~Xiao, and Y.~Shi.
\newblock Influence maximization: Near-optimal time complexity meets practical
  efficiency.
\newblock In {\em Proceedings of the 2014 ACM SIGMOD International Conference
  on Management of Data}, SIGMOD '14, pages 75--86, New York, NY, USA, 2014.
  ACM.

\bibitem{tong2018misinformation}
A.~Tong, D.-Z. Du, and W.~Wu.
\newblock On misinformation containment in online social networks.
\newblock In {\em Advances in neural information processing systems}, pages
  341--351, 2018.

\bibitem{tong2020stratlearner}
G.~Tong.
\newblock Stratlearner: Learning a strategy for misinformation prevention in
  social networks.
\newblock In {\em Advances in Neural Information Processing Systems},
  volume~33, pages 15546--15555, 2020.

\bibitem{tong2017efficient}
G.~Tong, W.~Wu, L.~Guo, D.~Li, C.~Liu, B.~Liu, and D.-Z. Du.
\newblock An efficient randomized algorithm for rumor blocking in online social
  networks.
\newblock {\em IEEE Transactions on Network Science and Engineering},
  7(2):845--854, 2017.

\bibitem{tong2019beyond}
G.~A. Tong and D.-Z. Du.
\newblock Beyond uniform reverse sampling: A hybrid sampling technique for
  misinformation prevention.
\newblock In {\em IEEE INFOCOM 2019-IEEE Conference on Computer
  Communications}, pages 1711--1719. IEEE, 2019.

\bibitem{vo2018rise}
N.~Vo and K.~Lee.
\newblock The rise of guardians: Fact-checking url recommendation to combat
  fake news.
\newblock In {\em The 41st International ACM SIGIR Conference on Research \&
  Development in Information Retrieval}, pages 275--284, 2018.

\bibitem{vosoughi2018spread}
S.~Vosoughi, D.~Roy, and S.~Aral.
\newblock The spread of true and false news online.
\newblock {\em Science}, 359(6380):1146--1151, 2018.

\bibitem{vu2017minimizing}
M.~M. Vu and H.~X. Hoang.
\newblock Minimizing the spread of misinformation on online social networks
  with time and budget constraint.
\newblock In {\em 2017 9th International Conference on Knowledge and Systems
  Engineering (KSE)}, pages 160--165. IEEE, 2017.

\bibitem{wang2012scalable}
C.~Wang, W.~Chen, and Y.~Wang.
\newblock Scalable influence maximization for independent cascade model in
  large-scale social networks.
\newblock {\em Data Mining and Knowledge Discovery}, 25(3):545--576, 2012.

\bibitem{wu2021containment}
H.~Wu, Z.~Zhang, Y.~Fang, S.~Zhang, Z.~Jiang, J.~Huang, and P.~Li.
\newblock Containment of rumor spread by selecting immune nodes in social
  networks.
\newblock {\em Mathematical Biosciences and Engineering}, 18(3):2614--2631,
  2021.

\bibitem{zhang2015data}
Y.~Zhang and B.~A. Prakash.
\newblock Data-aware vaccine allocation over large networks.
\newblock {\em ACM Transactions on Knowledge Discovery from Data (TKDD)},
  10(2):1--32, 2015.

\end{thebibliography}

\end{document}